\def\f(#1){{\mathop{f}^{(#1)}}}
\def\m(#1){{\mathop{m}^{(#1)}}}
\def\p(#1){{\mathop{p}^{(#1)}}}
\def\ben{\begin{equation}}
\def\een{\end{equation}}
\def\bena{\begin{eqnarray}}
\def\eena{\end{eqnarray}}
\def\non{\nonumber}
\def\d{{\rm d}}
\def\S{{\bf S}}
\def\I{{\cal I}}
\def\C{{\cal C}}
\def\L{{\cal L}}
\def\D{D}
\def\mr{{\mathbb R}}
\def\G{{\cal G}}
\def\F{{\cal F}}
\def\T{{\mathbb T}}
\newcommand{\bfD}{{\bf D}}
\newcommand{\mn}{{\mathbb N}}
\newcommand{\id}{id}
\renewcommand{\D}{{\mathcal D}}
\newcommand{\R}{{\mathcal R}}
\newcommand{\la}{{\hat a }}
\renewcommand{\S}{{\mathscr S}}
\renewcommand{\O}{{\mathcal O}}
\newcommand{\e}{{\rm e}}
\renewcommand{\atop}[2]{\genfrac{}{}{0pt}{}{#1}{#2}}
\newcommand{\varp}{\frac{\delta}{\delta\varphi}}
\newcommand{\Lsc}{\mathcal{L}^{\Lambda, \Lambda_{0}}}
\newcommand{\LscO}{\mathcal{L}^{0, \Lambda_{0}}}
\newcommand{\LscIr}{\mathcal{L}^{\Lambda_{0}, \Lambda_{0}}}
\def\bra{\langle }
\def\ket{\rangle}
\newcommand{\otimesu}[1]{\underset{#1}{\otimes}}
\renewcommand{\bigotimes}{\otimes}
\newcommand{\La}{\Lambda}
\newcommand{\Lao}{\Lambda_0}
\newcommand{\ka}{\kappa}
\newcommand{\vp}{\varphi}
\newcommand{\pa}{\partial}
\newcommand{\de}{\delta}
\theoremstyle{theorem}
\newtheorem{thm}{Theorem}
\newtheorem{lemma}{Lemma}
\newtheorem{prop}{Proposition}
\newtheorem{corollary}{Corollary}%[section]
\newtheorem{defn}{Definition}%[section]
\begin{document}

\title{Operator product expansion algebra} %in Perturbation Theory}

\author{
Jan Holland\thanks{\tt HollandJW1@Cardiff.ac.uk}\:
and
Stefan Hollands\thanks{\tt HollandsS@Cardiff.ac.uk}\:
\\ \\
%%%
{\it School of Mathematics, Cardiff University} \\
{\it Cardiff, United Kingdom} \medskip \\
%%%
}

\date{\today}

\maketitle

\begin{abstract}
We establish conceptually important properties of the operator product expansion (OPE)
in the context of perturbative, Euclidean $\varphi^{4}$-quantum field theory.
 First, we demonstrate, generalizing earlier results and techniques of~hep-th/1105.3375, that the 3-point OPE, $\bra \O_{A_{1}}\O_{A_{2}}\O_{A_{3}} \ket = \sum_{C}\C_{A_{1}A_{2}A_{3}}^{C}\bra\O_{C}\ket$, usually interpreted only as an asymptotic  short distance expansion, actually {\em converges} at {\em finite}, and even large, distances. We further show that the factorization identity $\C_{A_{1}A_{2}A_{3}}^{B}=\sum_{C}\C_{A_{1}A_{2}}^{C}\C_{CA_{3}}^{B}$ is satisfied for suitable configurations of the spacetime arguments. Again, the infinite sum is shown to be convergent. Our proofs rely on explicit bounds on the remainders of these expansions, obtained using refined versions, mostly due to Kopper et al., of the renormalization group flow equation method. These bounds also establish that each OPE coefficient is a real analytic function in the spacetime arguments for non-coinciding points. Our results hold for arbitrary but finite loop orders. They lend support
 to proposals for a general axiomatic framework of quantum field theory, based on such `consistency conditions' and akin to vertex operator algebras, wherein the OPE is promoted to the defining structure of the theory.
\end{abstract}

%%%%%%%%%%%%%%%%%%%%%%%%%%%%%%%%%%%%%%%%%%%%%%%%%%%%%%%%%%%%%%%%%%%

%\draft
%\sloppy

\section{Introduction}

The operator product expansion
(OPE)~\cite{Wilson:1971bg, Wilson:1971dh, zimbrand}
\ben
\label{ope1}
\O_{A_{1}}(x_{1})\cdots \O_{A_{N}}(x_{N})
\sim \sum_C \C_{A_{1}\ldots A_{N}}^C(x_{1},\ldots, x_{N}) \, \O_C(x_{N}),
\een
is an important structure in quantum field theory, both for practical
calculations, as well as also from a conceptual point of view. It is normally understood
as a statement about operator insertions into any, suitably well-behaved, quantum state
(in the Lorentzian context), or as an insertion into the vacuum functional with any number of additional
``spectator fields'' (in the Euclidean context). The symbols $\O_A$ denote the composite
fields that appear in the given theory, where the label $A$ also incorporates the
tensor or spinor character of the field. In the Euclidean context, which we will
focus on in this paper, the points $x_i \in \mr^4$ are mutually distinct.

In the original papers, it was suggested that the OPE should be an asymptotic expansion, after
insertion into the vacuum functional with spectator fields. Namely, the remainder
in such an expansion, truncated at a sufficiently high dimension $D$ for the operators
$\O_C$ on the right side, should go to zero as $x_1, \dots, x_{N-1} \to x_N$, at a rate that
improves as we increase $D$. However, it has recently been shown~\cite{Hollands:2011gf} that,
at least to any fixed loop order in perturbation theory, and $N=2$, the expansion is even {\em convergent}
for {\em finite} separation of $x_1, x_2$, i.e. we can put ``$=$'' instead of
``$\sim$'' in~\eqref{ope1}. This result is complemented by
results of~\cite{Bostelmann:2005fa} obtained in an axiomatic setting of quantum field theory (QFT), as well as also by results in the
context of vertex operator algebras~\cite{Huang:2007fa} describing $2d$ conformal field theories.

The {\em first result} of this paper is to generalize the convergence statement in perturbative QFT to more than just $N=2$ fields in the product~\eqref{ope1}. As in~\cite{Hollands:2011gf},
we will restrict attention for simplicity and concreteness to the case of $g\varphi^4$-theory
in $4d$ with mass $m>0$  (this restriction is made to simplify our proofs; we believe that it can be
removed using methods along the lines of~\cite{guida}), and to keep the discussion reasonably simple, we will also restrict to $N=3$ points,
where the key technical differences in the proof are already visible. The composite fields in the theory are
\ben\label{compfields}
\O_A = \partial^{w_1} \varphi \cdots \partial^{w_n} \varphi \ ,
\quad A = \{n, w\}
\een
with $w$ a multiindex, see our Notations and Conventions section for more on multi-indices. The engineering dimension of such a field is defined as usual by

\ben\label{compop}
[A]=n+\sum_{i}|w_{i}|\, .
\een
 The result is:

\begin{thm}\label{OPEbound}
Let $f_p(x)$ be any smooth function on $x \in \mr^4$ such that the support of the
Fourier transform $\hat f_p(q)$ is contained in a ball $|p-q| \le 1$ around $p \in \mr^4$. Define the
smeared spectator fields by
$$
\varphi(f_p) \equiv \int d^4 x \ \varphi(x)  \ f_p(x) \ .
$$
Then the remainder of the OPE, carried out up to operators of dimension $D=\sum_{i=1}^{3}[A_{i}]+\Delta$, at $l$-loops, is bounded by
\ben
\begin{split}
&\Big| \Big\bra \O_{A_{1}}(x_{1})\O_{A_{2}}(x_{2})\O_{A_{3}}(x_3)\, \varphi(f_{p_{1}})\cdots\varphi(f_{p_{n}}) \Big\ket\\
 &\qquad\qquad- \sum_{[C] \leq D}\C_{A_{1}A_{2} A_{3}}^{C}(x_{1},x_{2},x_{3})\ \Big\bra \O_{C}(x_3)\, \varphi(f_{p_{1}})\cdots\varphi(f_{p_{n}}) \Big\ket  \Big|\\
 \leq\quad& m^{n-1}\, \prod_{i=1}^{3}[A_{i}]!\ \prod_{j} \sup|\hat{f}_{p_{j}}|\, \sup(1,\frac{|\vec{p}|_{n}}{m})^{(4\sum [A_i]+2\Delta)(n+l+9/2)+3n}
 \\ &\quad \times
 \sum_{\lambda=0}^{2l+n/2+1}\frac{\log^{\lambda}\sup(1,\frac{|\vec{p}|_{n}}{m})}{2^{\lambda}\lambda!}
 \\
  & \quad \times \frac{1}{\sqrt{\Delta!}}
  \ \ \frac{(\tilde{K}\ m\ {\rm max}(|x_1-x_2|,|x_2-x_3|,|x_1-x_3|) )^{\sum [A_{i}]+1+\Delta}}{( {\rm min}(|x_1-x_2|,|x_2-x_3|,|x_1-x_3|) )^{\sum [A_i]+1}}
\end{split}
\label{ope3conv}
\een
in $g\varphi^4$-theory.
Here %$d(N,n,l,w,D'):= 2D'(n+l+2(N-1))+\sup(D'+1-2n-|w|, 0)$, and
$[A]$
denotes the canonical dimension of a composite field ${\mathcal O}_A$ as in eq.~\eqref{compop}.
$\tilde{K}$ is a constant depending on $n$ and $l$.
\end{thm}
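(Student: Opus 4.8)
The plan is to work entirely within the renormalization group flow equation framework of~\cite{Hollands:2011gf}, using the refined momentum-space bounds of Kopper et al. First I would regularize the theory with an infrared cutoff $\La$ and an ultraviolet cutoff $\Lao$, so that all Schwinger functions carrying the three composite insertions $\O_{A_1},\O_{A_2},\O_{A_3}$ together with the $n$ spectator fields $\vp(f_{p_j})$ are defined through the Polchinski flow equation for the generating functional $\Lsc$; the quantity on the left of~\eqref{ope3conv} is recovered as $\La\to 0$, $\Lao\to\infty$. The coefficient $\C_{A_1A_2A_3}^C(x_1,x_2,x_3)$ is defined, exactly as for $N=2$, by joint Taylor expansion of this correlator in the relative coordinates $\xi_1=x_1-x_3$ and $\xi_2=x_2-x_3$ about $\xi_1=\xi_2=0$: the coefficient of a monomial $\xi_1^{w_1}\xi_2^{w_2}$ is the renormalized correlator of the coincident composite field $(\partial^{w_1}\O_{A_1})(\partial^{w_2}\O_{A_2})\O_{A_3}$ at $x_3$, which upon expansion in the basis $\{\O_C\}$ supplies $\C_{A_1A_2A_3}^C$. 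Since such a field has dimension $\sum_i[A_i]+|w_1|+|w_2|$, truncating at $[C]\le D=\sum_i[A_i]+\Delta$ is the same as truncating the Taylor polynomial at total degree $\Delta$, so the left side of~\eqref{ope3conv} is precisely the Taylor remainder at order $\Delta$ in $(\xi_1,\xi_2)$.

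Next I would write this remainder in the integral form of Taylor's theorem, expressing it as a one-parameter integral over a scaling variable $t\in[0,1]$ of the mixed derivatives $\partial_{x_1}^{w_1}\partial_{x_2}^{w_2}$ of order $|w_1|+|w_2|=\Delta+1$ of the correlator, evaluated along the segment $x_i(t)=x_3+t\,\xi_i$ that contracts the three points to $x_3$. This reduces the theorem to uniform bounds on such high mixed derivatives of the Schwinger functions. These bounds are what the flow equation method supplies: one proceeds by a double induction, on the loop order $l$ and on the number of fields and derivatives, integrating the flow in $\La$ from $\Lao$ down to a scale set by the separation of the points, with boundary conditions imposed at $\La=\Lao$ and renormalization conditions at $\La=0$. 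The outcome is a bound polynomial in $|\vec p|_n/m$, with exponent growing linearly in $\sum_i[A_i]$, $\Delta$, $n$ and $l$ as displayed, dressed by the logarithmic sum $\sum_{\lambda}\log^\lambda\sup(1,|\vec p|_n/m)/(2^\lambda\lambda!)$ characteristic of the iterated flow, and carrying the combinatorial weights $\prod_i[A_i]!$ and $\prod_j\sup|\hat f_{p_j}|$.

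The step I expect to be the main obstacle is extracting the correct dependence on the \emph{three} distinct pairwise distances together with the factorial decay $1/\sqrt{\Delta!}$ that forces convergence. Whereas the $N=2$ case involves a single relative coordinate and hence a single scale, here the three points define a whole range of scales and the mixed derivatives distribute the coincidence singularity unevenly among the separations. One must organize the estimate so that the monomial $\xi_1^{w_1}\xi_2^{w_2}$, bounded by $(\max_{i<j}|x_i-x_j|)^{\Delta+1}$, combines with the dimensionful normalization (the powers of $m$ and of the separations needed to carry the scaling dimension of the threefold insertion) into the numerator $(\tilde K\,m\,\max_{i<j}|x_i-x_j|)^{\sum_i[A_i]+1+\Delta}$, while the strongest short-distance singularity contributes only $(\min_{i<j}|x_i-x_j|)^{-(\sum_i[A_i]+1)}$ in the denominator; the natural device is to tie the infrared cutoff in the flow integral to $\min_{i<j}|x_i-x_j|$. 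Obtaining the genuine gain $1/\sqrt{\Delta!}$, rather than merely geometric control valid in a finite radius, requires a sharper-than-analytic derivative bound, obtained from the Gaussian UV regulator after an optimized choice of $\La$ and a Cauchy--Schwarz estimate over the $O(4^\Delta)$ multiindices of total order $\Delta$; this is the mechanism underlying the $N=2$ convergence of~\cite{Hollands:2011gf}, now carried out with two expansion directions at once.

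Finally I would collect the factorial and geometric factors above with the prefactor $m^{n-1}$ from the spectator smearing to arrive at the stated bound~\eqref{ope3conv}, whose summability over $\Delta$ for any fixed, even large, configuration of points yields the claimed convergence. Real analyticity of each $\C_{A_1A_2A_3}^C$ for non-coincident points follows from the same derivative estimates, since they bound the radius of convergence of the power series in $(\xi_1,\xi_2)$ away from zero.
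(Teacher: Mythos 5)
There is a genuine gap, and it sits exactly where you flag "the main obstacle." Your plan is to read off $\C_{A_1A_2A_3}^C$ as joint Taylor coefficients of the correlator in $(\xi_1,\xi_2)=(x_1-x_3,x_2-x_3)$ and to write the truncation error as the integral-form Taylor remainder of the \emph{bare} correlator along the ray $x_i(t)=x_3+t\,\xi_i$. This cannot work as stated: the correlator with three composite insertions is singular on every partial diagonal and on the total diagonal, so its Taylor coefficients at $\xi_1=\xi_2=0$ do not exist (the "coincident composite field $(\partial^{w_1}\O_{A_1})(\partial^{w_2}\O_{A_2})\O_{A_3}$" requires its own renormalization), and the mixed derivatives of order $\Delta+1$ blow up non-integrably as $t\to 0$ along your contraction ray. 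The paper's resolution is to introduce \emph{oversubtracted} Green functions $G_{D}^{\La,\Lao}$ and CAG's $L_{\bfD}^{\La,\Lao}$ carrying a whole tuple of regularization parameters $\bfD=(D_I)$, one for each partial diagonal, to prove that the OPE remainder equals the Taylor remainder of the regularized object (lemma \ref{remainder}), and then to control that Taylor remainder via a Decomposition Lemma (lemma \ref{lem:decomp}) which splits it into pieces with \emph{nested} Taylor subtractions, each piece singular only at one nesting of subdiagonals. Your proposal contains no counterpart of this subdivergence bookkeeping, and no single-scale device (such as tying the IR cutoff to $\min_{i<j}|x_i-x_j|$) can replace it: the asymmetric structure of the final bound, with $\max(|x_i-x_j|)^{\sum[A_i]+1+\Delta}$ in the numerator but only $\min(|x_i-x_j|)^{\sum[A_i]+1}$ in the denominator, arises precisely because the subtraction at the total diagonal and the subtractions at the partial diagonals are performed and estimated separately.

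The remaining ingredients of your sketch are broadly consistent with the paper: the reduction of the smeared correlator to the moments $\R^{\La,\Lao}_{D,n,l}$ times propagators and CAG's without insertions, the inductive flow-equation bounds in $n$, $l$ and the number of insertions, and the origin of the $1/\sqrt{\Delta!}$ in the Gaussian regulator (in the paper it emerges from $\La'$-integrals of $\e^{-m^2/\La'^2}\La'^{D-2n-|w|-\mu-1}$ and the $\sqrt{(|w|+D'-D)!}$ factors tracked through theorems \ref{thmCAG2} and \ref{thmCAG3sub}). But without the regularized multi-point CAG's and the nested-subtraction decomposition, the central estimate of your argument is not available, so the proof as proposed does not go through for $N=3$.
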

 From the bound one can draw the conclusion  that the OPE, i.e. the sum over $C$ in~\eqref{ope3conv}, converges for arbitrary finite distances $|x_i-x_j| > 0$ for $i \neq j$, including large distances (!), because evidently $ ({\rm any \ \ number})^{\Delta}/\sqrt{\Delta !} \to 0$. In this regard, our result is a generalization of our earlier result~\cite{Hollands:2011gf} for $N=2$ points. Another important conclusion, which has no analog for $N=2$ points, is the following. Suppose we expand the OPE to operators up to a fixed dimension $D=[A_{1}]+[A_{2}]+[A_{3}]+\Delta$. Suppose that $|x_i - x_j| = O(\epsilon)$ for $i \neq j$, and let $\epsilon \to 0$. Then the remainder goes as $O(\epsilon^{\Delta})$, i.e. goes to zero e.g. for $\Delta =1$.
However, suppose now instead that $|x_2-x_3|=|x_1-x_3|=O(\epsilon)$, but $|x_1-x_2|=O(\epsilon^2)$, i.e. one pair of points is much closer together than the remaining ones. Then, as one can see from the structure of the last
line in equation \eqref{ope3conv}, the bound will {\em not} go to zero as $\epsilon \to 0$,
unless we make $\Delta$ considerably bigger. In that case, convergence of the remainder is not guaranteed by our bound.  Thus, we see that the sense in which the OPE approximates a correlator
for short distances may depend on exactly how the relative distances between the points are taken to zero, when more than two insertions are present in the correlator.

\medskip
\noindent
The {\em second result} of this paper clarifies the nature of the algebraic relations that are satisfied
by the OPE coefficients $\C_{A_1 \dots A_N}^C$, again in the context of perturbation theory. An obvious general expectation is that, since these coefficients resemble the structure constants of an algebra, there should
hold certain associativity conditions. For example, consider a product of three fields $\O_{A_1}(x_1)
\O_{A_2}(x_2) \O_{A_3}(x_3)$, which we may expand as in eq.~\eqref{ope1}. Now, suppose that $|x_1-x_2|$
is smaller than $|x_2-x_3|$. Then it seems natural to first expand the product $\O_{A_1}(x_1) \O_{A_2}(x_2)$,
regarding $\O_{A_3}(x_3)$ as merely a spectator, and then to expand the result of this OPE times the spectator in a subsequent OPE. One would expect these expansions to agree, and our theorem shows that this expectation is correct:
\begin{thm}\label{thmfact}
Up to any arbitrary but fixed loop order $l$ in $g\varphi^4$-theory, the identity
\ben\label{ope2}
\C_{A_1 A_2 A_3}^B(x_1, x_2, x_3) = \sum_C \C_{A_1 A_2}^C(x_1, x_2) \ \C_{CA_3}^B(x_2, x_3)
\een
holds for all configurations satisfying
$$0<\frac{|x_{1}-x_2|}{|x_2-x_{3}|} < \frac{1}{\tilde K}$$
for some (sufficiently large) constant $\tilde K>0$ (depending on $l, B$).
\end{thm}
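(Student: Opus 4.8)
The plan is to prove the factorization identity by inserting both sides into a correlator with spectator fields and exploiting the two-point convergence result (the $N=2$ analogue of Theorem~\ref{OPEbound}, from~\cite{Hollands:2011gf}) together with the uniqueness of OPE coefficients. Concretely, I would start from the correlator $\bra \O_{A_1}(x_1)\O_{A_2}(x_2)\O_{A_3}(x_3)\,\varphi(f_{p_1})\cdots\varphi(f_{p_n})\ket$ and first expand the nearby pair $\O_{A_1}(x_1)\O_{A_2}(x_2)$ about $x_2$, treating $\O_{A_3}(x_3)$ and all the $\varphi(f_{p_j})$ as spectators, so that the two-point convergence theorem yields the absolutely convergent expansion
\[
\bra \O_{A_1}(x_1)\O_{A_2}(x_2)\O_{A_3}(x_3)\cdots\ket = \sum_C \C_{A_1A_2}^C(x_1,x_2)\,\bra \O_C(x_2)\O_{A_3}(x_3)\cdots\ket .
\]
Applying the two-point OPE a second time to the pair $\O_C(x_2)\O_{A_3}(x_3)$, now expanded about $x_3$, produces the double sum $\sum_C\sum_B \C_{A_1A_2}^C(x_1,x_2)\,\C_{CA_3}^B(x_2,x_3)\,\bra \O_B(x_3)\cdots\ket$, whereas the direct three-point expansion of Theorem~\ref{OPEbound} gives $\sum_B \C_{A_1A_2A_3}^B(x_1,x_2,x_3)\,\bra \O_B(x_3)\cdots\ket$. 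Matching the two representations coefficient-by-coefficient in $B$ — legitimate because the functionals $\bra \O_B(x_3)\,\varphi(f_{p_1})\cdots\ket$ can be separated by suitable choices of the smearing functions $f_{p_j}$, making the assignment of coefficients to a correlator unique order by order — then yields exactly~\eqref{ope2}.

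Before this matching is allowed I must justify interchanging the two summations and collapsing the double sum into a single sum over $B$, and this is where the hypothesis $|x_1-x_2|/|x_2-x_3| < 1/\tilde K$ enters. The plan is to control the inner sum $\sum_C \C_{A_1A_2}^C(x_1,x_2)\,\C_{CA_3}^B(x_2,x_3)$ for fixed $B$ using the coefficient-wise bounds underlying Theorem~\ref{OPEbound} (the same flow-equation estimates that give the analyticity statement). By dimensional scaling, $\C_{A_1A_2}^C$ carries a factor $|x_1-x_2|^{[C]}$ and decays factorially in $[C]$, while $\C_{CA_3}^B$ carries a factor $|x_2-x_3|^{-[C]}$ and grows factorially in $[C]$, since now $\O_C$ occupies the \emph{input} slot. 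The factorial decay of the first coefficient and the factorial growth of the second must compensate up to an exponential factor $\tilde K^{[C]}$, so that, after summing over the (at most exponentially many, hence absorbable into $\tilde K$) fields $C$ of each fixed dimension, the inner sum is dominated by a geometric series in $\tilde K\,|x_1-x_2|/|x_2-x_3|$. This series converges precisely when $|x_1-x_2|/|x_2-x_3| < 1/\tilde K$, which simultaneously supplies the absolute convergence needed to rearrange the double sum and to invoke dominated convergence.

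The main obstacle is exactly this convergence step: it demands coefficient bounds sharp enough that the two competing factorials cancel to leave only an exponential, and it is this delicate balance — not available for arbitrary finite separations — that forces the distance-ratio restriction, in contrast to the unconditional convergence of Theorem~\ref{OPEbound}. A secondary technical point is that the two-point convergence theorem has to be applied in the presence of the \emph{composite} spectator insertion $\O_{A_3}(x_3)$, not merely the elementary smeared fields $\varphi(f_p)$ of Theorem~\ref{OPEbound}; I expect this to follow from the same flow-equation bounds, since composite insertions are treated on the same footing in the formalism, but it should be stated with care. Finally I would check that the uniqueness step is non-circular, i.e. that the correlators $\bra \O_B(x_3)\cdots\ket$ can indeed be separated by the choice of $f_{p_j}$ to each fixed loop order, so that the term-by-term matching in $B$ genuinely pins down the identity~\eqref{ope2}.
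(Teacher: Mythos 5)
Your overall strategy -- expand the close pair $\O_{A_1}\O_{A_2}$ first with $\O_{A_3}$ as a spectator, expand the result again, and then identify coefficients of $\O_B$ -- is the same as the paper's. But the two places where you defer the work are exactly where the paper's proof lives, and your proposed substitutes would not go through as described. First, the initial expansion is \emph{not} an application of the $N=2$ convergence theorem of~\cite{Hollands:2011gf}: there the spectators are smeared elementary fields, whereas here one spectator is a composite operator at a finite distance $|x_2-x_3|$, and one must control the remainder of this ``partial OPE'' uniformly as the truncation dimension $D_1\to\infty$ with an explicit dependence on the ratio $|x_1-x_2|/|x_2-x_3|$. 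In the paper this is the content of lemma~\ref{partremainder} and theorem~\ref{partOPEbound}, which require introducing regularized three-point CAG's with regularization \emph{only} at the partial diagonal $\{x_1=x_2\}$ (i.e.\ $D_{\{1,2\}}=D$, all other $D_I=-1$) and a non-trivial identity expressing the partial remainder as a Taylor remainder of such objects. This is the main new technical input, not a secondary point.

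Second, your mechanism for collapsing the double sum -- a factorial growth of $\C_{CA_3}^B$ in $[C]$ compensating the factorial decay of $\C_{A_1A_2}^C$ up to an exponential -- would require bounds on OPE coefficients with a composite operator of \emph{unbounded} dimension in an input slot, uniform enough in $[C]$ to leave only a geometric series. No such bounds are available: the constants in all the estimates of section~\ref{secBounds} grow like $K^{[C](n+2l)^3}$, far worse than exponentially in $[C]$, and the paper never attempts this route. Instead it truncates the second expansion at a \emph{finite} $D_2\ge[B]$ and applies the functional $\D^B$, which annihilates the truncation remainder exactly ($\D^B R_{D_2}^{0,\Lambda_0}=0$ for $D_2\ge[B]$, a consequence of the boundary conditions), so that only the single sum over the intermediate channel $C_1$ survives, and its convergence follows from the partial-OPE bound rather than from any estimate on $\C_{C_1A_3}^B$. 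Relatedly, your ``coefficient matching in $B$'' is not a clean separation: by the normal-ordering boundary conditions one has $\D^B L^{0,\Lambda_0}(\O_C)=\delta_{BC}$ only for $[C]\ge[B]$, while for $[C]<[B]$ there is lower-triangular mixing, which the paper removes by an induction in $[B]$ (this is also why $\tilde K$ ends up depending on $B$). Without these two ingredients the proposal does not close.
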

In particular, implicit in this statement is the claim that the infinite sum over $C$ on the right side of formula~\eqref{ope2}
converges.  In the free field theory, our result is fairly trivial
and we may in fact take $\tilde K=1$. However, in the presence of interaction the result appears to us rather non-trivial
and we can only show that convergence occurs if $\tilde K$ is a rather large numerical constant, i.e.
it appears that $|x_1-x_2|$ must be {\em much} smaller than $|x_2-x_3|$.

\medskip
\noindent
As in the previous paper~\cite{Hollands:2011gf}, our proof is based on the use of the Wilson-Wegner-Polchinski renormalization group flow equation method \cite{Polchinski:1983gv, Wegner:1972ih, Wilson:1971bg,Wilson:1971dh,Wetterich:1992yh} and the powerful refinements of this method
due to Kopper et al., see e.g.~\cite{Muller:2002he} for a review. In this method, one first introduces an infrared cutoff called $\Lambda$, and an ultraviolet
cutoff called $\Lambda_0$. One then defines the quantities of interest such as correlation functions, OPE coefficients etc. for finite values of the cutoffs, and derives for them a flow
equation as a function of $\Lambda$. This equation is then integrated subject to boundary conditions
which play the role of the renormalization conditions. The quantities of interest may be bounded
inductively and uniformly  in the ultraviolet cutoff $\Lambda_0$.
The last fact makes it
possible to remove the cutoff and
at the same time provides non-trivial bounds for them. These bounds
are shown to imply the above two theorems.
A side result that might be of independent interest is that the Schwinger functions with up to three insertions, as well as the OPE coefficients of up to three composite operators, are analytic functions of $x_i$ away from
coinciding configurations (where they are of course singular). These results are discussed briefly below the corresponding bounds, see corollaries~\ref{CAG2cor} and~\ref{corCAG3}.

It is evident that~\eqref{ope2} is only one in an entire hierarchy of more general consistency conditions: We can consider more
than $N=3$ points, and we can consider, out of $N$ points, a proper subset of points $(x_{i_1}, \dots, x_{i_r})$
where $\{i_1, \dots, i_r\} \subset \{1, \dots, N\}$, such that the distances $|x_{i_k} - x_{i_l}|$ within the group are much smaller than the distance to any point outside the group. The, fairly obvious, consistency conditions arising in this way have been written out e.g. in~\cite{Hollands:2008vq}, and are reviewed for completeness
in the next section of this paper. The methods developed in this paper
are capable also of proving the validity of such generalized consistency conditions in perturbative QFT. However, it seemed to us that
the proof for three points was already illustrative of all the essential ingredients of our methods, while being more easy to follow. We therefore will not consider the general case in this paper. We also strongly believe that our results are still valid if Euclidean space $\mr^4$ is replaced by a general real analytic Riemannian 4-manifold $(M,g)$, with the distance between points defined in terms of the metric in the usual way.

The paper is organized as follows: To put our results into context, we first recall the general nature of the consistency conditions that one
expects to hold in QFT. Then we briefly introduce our notation and the general framework based on the renormalization group flow equations, which will be used throughout this paper. In section \ref{secBounds} we will then put this framework to use and derive bounds on various quantities of interest. With these estimates at hand, we are able to derive our two main results, theorems \ref{OPEbound} and \ref{thmfact}, in section \ref{secOPEconv} and \ref{secFactor}, respectively.

\paragraph*{Notations and conventions:} Our convention for the Fourier
transform in $\mr^4$ is
\ben
f(x) = \int_p  \hat f(p) \e^{ipx} := \int_{\mr^4} \frac{d^4 p}{(2\pi)^4}
\e^{ipx} \hat f(p) \ .
\een
 We also use a
standard
multi-index notation.
Our multi-indices are elements $w = (w_1, \dots, w_n) \in \mn^{4n}$,
so
that each  $w_i \in \mn^4$
is a four-dimensional multiindex whose entries are $w_{i,\mu} \in \mn$
and $\mu=1,\dots,4$. If $f(\vec p)$ is a smooth function on $\mr^{4n}$, we set
\ben
\pa^{w} f(\vec p) = \prod_{i,\mu}
\left( {\pa \over \pa p_{i,\mu}} \right)^{w_{i,\mu}} f(\vec p)
\een
and
\ben
w! = \prod_{i,\mu} w_{i,\mu}! \, , \quad |w|=\sum_{i,\mu} w_{i,\mu} \, .
\een
We often need to take derivatives $\partial^w$ of a product of
functions
$f_1 \dots f_r$.
Using the Leibniz rule, such derivatives get distributed over the
factors
resulting
in the sum of all terms of the form $c_{\{v_i\}} \ \partial^{v_1} f_1
\dots \partial^{v_r} f_r$, where
each $v_i$ is now a $4n$-dimensional multi-index, where
$v_1+\dots+v_r=w$,
and where
\ben
c_{\{v_i\}} = \frac{(v_1+\dots+v_r)!}{v_1! \dots v_r!} \le r^{|w|}
\een
is the associated weight factor. We will denote sets of indices by $I=\{i_{1},\ldots,i_{k}\}$ with $i_{j}\in\mathbb{N}$ and we denote their cardinality by $|I|$.

For a given set of momenta $(p_{1},\ldots, p_{n})\in\mathbb{R}^{4n}$ we use the shorthand notation
\ben\label{pshort}
\vec{p}:= (p_{1},\ldots, p_{n})\quad , \quad |\vec{p}|_{n}:= \sup_{J\subseteq \{1,\ldots, n\}}\, \Big|\sum_{i\in J} p_{i} \Big|\quad , \quad \vec{p}_{n+2}:= (\vec{p},k,-k)
\een
Later we will often simply write $|\vec{p}|$ instead of $|\vec{p}|_{n}$. Further we define $\kappa:=\sup(\Lambda,m)$ for later convenience. We also often use the notation $\log_{+}(x)=\log(\sup(1,x))$.

If $F(\varphi)$ is a differentiable function (in the Frechet space
sense)
of the
Schwartz space function $\varphi \in \S(\mr^4)$, we denote its
functional
derivative as
\ben
\frac{\d}{\d t} F(\varphi + t\psi) |_{t=0} = \int d^4 x \
\frac{\delta F(\varphi)}{\delta \varphi(x)} \ \psi(x) \ ,
\quad \psi \in \S(\mr^4)\ ,
\een
where the right side is understood in the sense of distributions in
$\S'(\mr^4)$. Multiple functional derivatives are
denoted in a similar way and define in general distributions on
multiple Cartesian copies of $\mr^4$.

\paragraph*{Note added in proof:} After this paper was completed, we found an improved proof of the factorization property that also 
generalizes thm.~\eqref{thmfact} to the case of an arbitrary number of factors in the operator product. These results will be included in a forthcoming paper by the 
same authors.

\section{Consistency conditions}\label{secPerspective}

Before we start with the proofs of our two theorems, we would like to pause for
a moment to indicate in somewhat more detail the status of these results within
the general framework of quantum field theory (QFT). One perspective on QFT is
to emphasize its algebraic structure. The oldest mathematical framework of
this viewpoint is Algebraic Quantum Field Theory~\cite{haag}, wherein the observables
are members of a net of $C^*$-algebras, whereas the states (`Hilbert space') are vectors
in the various representations of such a net. Another manifestation of this idea,
developed for $2d$ conformal field theories, are ``Vertex Operator Algebras'', see e.g.~\cite{Huang:2007fa}, and their representation theory. A generalization
of this type of algebraic framework, wherein the OPE is raised to the status of
the defining structure of the theory, but which is applicable in principle more generally to
non-conformally invariant Euclidean QFT's in $d$ dimensions, was proposed e.g. in~\cite{Hollands:2008vq}.

In this framework, one considers the collection of composite fields $\O_A$ within a given theory as being in one-to-one correspondence with basis elements of an abstract vector space, $V$, which is graded by the dimensions of the fields
\ben
V = \bigoplus_{\Delta} V_\Delta \quad ,
\een
with $\Delta \in \mr_+$ from the (countable) set of possible dimensions of the fields,
and $\dim V_\Delta < \infty$ (one might further wish to complement this condition by a more stringent
finiteness condition of the type $\sum_\Delta \dim V_\Delta \ q^{\Delta} < \infty$ for $0<q<1$). The OPE-coefficients $\C_{A_1 \dots A_N}^B(x_1, \dots, x_N)$
may then be viewed, for each collection of distinct points $x_i \in \mr^d$, as the components of a linear map
\ben
\C(x_1, \dots, x_N) : V^{\otimes N} \to V \ ,
\een
which depends (real) analytically on $(x_1, \dots, x_N)$ within the set
$C_N=\{(x_1, \dots, x_N) \ \mid \ x_i \neq x_j \}$. Natural conditions on this collection
(all $N$) of linear maps were proposed in~\cite{Hollands:2008vq}. In particular, the hierarchy of $\C$'s should
reflect the algebraic nature of QFT. This is formulated by imposing consistency conditions,
of the kind described in the introduction, on the $\C$'s. For this, one considers
a partition $\I = (I_j)_{j=1}^n, I_j \subset \{1, \dots, N\}$ of the set $\{1,\dots,N\}$ into disjoint subsets (for a single point, we put $C(x)=\id_{V}$) and a subset of $C_\I$ of the form
\ben
\begin{split}
C_\I = \bigg\{
& (x_1, \dots, x_N) \in C_N \ \  \bigg| \ \  \frac{|x_i-x_j|}{|x_k-x_l|} < \frac{1}{K} \\
& \text{if $i,j \in I_m$, $k \in I_s$, $l \in I_r$, $s \neq r$}
\bigg\} \ ,
\end{split}
\een
where $K$ is some given large number.
The set contains those configurations for which the distance between points $x_i, x_j$ for $i,j$ from the
same subset in the partition is much smaller than for different partitions (the sets
are reminiscent, and in fact closely related, to those defining the ``little disc operad'').
The factorization law is then that for $(x_1, \dots, x_N) \in C_\I$, and for $K$ sufficiently large, we should have
\ben
\C(x_1, \dots, x_N) = \C(x_{j_1}, \dots, x_{j_n}) \circ \bigg(
\bigotimes_{i=1}^n \C(x_k)_{k \in I_i}
\bigg) \ .
\een
Here, each $j_k$ is the element of largest index in the subset $I_k$.
The circle $\circ$ denotes the composition of maps. In view of the infinite dimensional
nature of $V$, one effectively assumes here, in particular, that the composition(s) be well defined. Implicit is therefore a statement about convergence as in eq.~\eqref{ope2} of Thm.~\ref{thmfact}, which corresponds to $N=3$ and the partition $\I: I_1=\{1,2\}, I_2=\{3\}$. The theorem derives its conceptual importance from this context.

Consistency conditions now arise because the above identity must hold for {\em all} possible partitions. In particular, by considering, for $N=3$ points e.g. the alternative
partition ${\mathcal J}: I_1=\{1,3\},I_2=\{2\}$, one in effect has a kind of Jacobi-identity
for the OPE. Note, however, that this identity is more subtle than for ordinary algebras, because the domains $C_{\I}$ resp. $C_{\mathcal J}$ in which the respective factorization identity holds, are not guaranteed to have any overlap. Both identities are rather related by analytic continuation in the points $(x_1,x_2,x_3)$, since the 3-point coefficient $\C(x_1, x_2, x_3)$ is analytic in its arguments.

After these general points, the remainder of the paper is devoted to proving thms.~\ref{OPEbound} and \ref{thmfact} in a concrete (perturbative) model in $d=4$ dimensions, thereby
lending support to the type of axiomatic framework which we have just outlined.

\section{The flow equation framework}\label{secFrame}

The model that we consider is
a hermitian scalar field with self-interaction $g \varphi^4$ and
mass $m > 0$ on flat 4-dimensional Euclidean space. The quantities of
interest in this (perturbative) quantum field theory will be defined
in this section via the flow equation method.
Renormalization theory based on the  flow equation (FE) \cite{Polchinski:1983gv, Wegner:1972ih, Wilson:1971bg,Wilson:1971dh}
of the renormalization group
has been reviewed quite often in
the literature, so we will be relatively brief.
The first presentation in the form we use it here is in \cite{Keller:1990ej}.
Reviews are in \cite{Muller:2002he} and in \cite{Kopper:1997vg} (in German).

\subsection{Connected amputated Green functions (CAG's)}

To begin, we introduce an infrared cutoff $\Lambda$, and an ultraviolet cutoff $\Lambda_0$. The IR cutoff is of
course not necessary
in a massive theory. The IR behavior is substantially modified only
for $\La$ above
$m$. These
cutoffs enter the
definition of the theory through the propagator
$C^{\Lambda,\Lambda_0}$,
which is defined in
momentum space by
\ben\label{propreg}
C^{\La,\Lao}(p)\,=\, {1 \over  p^2+m^2}
\left[ \exp \left(- {p^2+m^2 \over \Lao^2} \right) - \exp
\left(- {p^2+m^2 \over \La^2} \right) \right] \, .
\een
The full propagator is recovered for $\La \to 0$ and $\Lao \to \infty$,
and we always assume
\ben
\label{ka}
0<\Lambda\ , \quad \ka := \sup(\La,m) < \Lao \ .
\een
Other choices of regularization are of course admissible. The one
chosen in (\ref{propreg}) has the advantage of being analytic in $p^2$
for $\Lambda>0$.
The propagator defines a corresponding
Gaussian measure
$\mu^{\La,\Lao}$, whose covariance is $\hbar C^{\La,\Lao}$.
The factor of $\hbar$ is inserted to obtain a consistent
loop expansion in the following.
The interaction is taken to be
\ben
L^{\Lambda_0}(\varphi) = \int d^4 x \ \bigg( a^{\Lambda_0}
\, \varphi(x)^2
+b^{\Lambda_0} \, \partial \varphi(x)^2+c^{\Lambda_0}
\, \varphi(x)^4 \bigg) \ .
\label{ac}
\een
It contains suitable counter terms satisfying
$a^{\Lambda_0} = O(\hbar),\  b^{\Lambda_0} = O(\hbar^2)$
and $c^{\Lambda_0} = \frac{g}{4!} +
O(\hbar)$. They will be
adjusted--and actually diverge--when $\Lambda_0 \to \infty$
in order to obtain a well
defined limit of the quantities of interest to us. We have anticipated this
by making them ``running couplings'', i.e. functions of the ultra
violet cutoff $\Lambda_0$.
The correlation ($=$ Schwinger- $=$ Greens- $=$ $n$-point-) functions of $n$ basic fields with
cutoff are defined by the expectation values
\ben\label{pathint}
\begin{split}
 \langle \varphi(x_1) \cdots \varphi(x_n) \rangle &\equiv  \mathbb{E}_{\mu^{\Lambda,\Lambda_{0}}} \bigg[\exp \bigg( -\frac{1}{\hbar}
L^{\Lambda_0}\bigg) \, \varphi(x_1) \cdots \varphi(x_n) \bigg] \bigg/ Z^{\Lambda,\Lambda_0} \\
& =
\int d\mu^{\Lambda,\Lambda_0} \ \exp \bigg( -\frac{1}{\hbar}
L^{\Lambda_0}\bigg) \, \varphi(x_1) \cdots \varphi(x_n)\bigg/ Z^{\Lambda,\Lambda_0} \, .
\end{split}
\een
This is just the standard Euclidean path-integral, but note that
the free part in the Lagrangian
has been absorbed into the Gaussian measure $d\mu^{\La,\Lao}$.
The normalization factor
is chosen so that $\langle 1 \rangle = 1$. This factor is finite only
as long as we impose an additional volume cutoff. But the infinite volume limit
can be taken without difficulty once we pass to perturbative
connected correlation functions which we will do in a moment.
For more details on this limit see~\cite{Kopper:2000qm,Muller:2002he}.
 The path integral will be analyzed in
the perturbative sense, i.e. the exponentials are expanded out and the
Gaussian integrals are then performed.
The full theory is obtained by sending the cutoffs $\Lambda_0 \to
\infty$
and $\Lambda \to 0$,
for a suitable choice of the running couplings.
In the FE technique, the
correct behavior of the running
couplings, necessary for a well-defined limit, is obtained by deriving
first a differential equation for the Schwinger functions in
$\Lambda$,
and by then defining the running couplings implicitly through the boundary
conditions for this equation.

These FEs are written more conveniently in
terms of the hierarchy of ``connected, amputated
Schwinger functions'' (CAG's).
Their generating functional is defined  through the
convolution of the Gaussian measure with the exponentiated interaction,
\ben\label{CAGdef}
-L^{\Lambda, \Lambda_0} := \hbar \, \log \, \mu^{\Lambda,\Lambda_0}
\star \exp \bigg(-\frac{1}{\hbar} L^{\Lambda_0}
\bigg)- \hbar \log  Z^{\La,\Lao} \ .
\een
The convolution is defined in general by
$(\mu^{\Lambda,\Lambda_0} \star F)(\varphi) =
\int d\mu^{\Lambda,\Lambda_0}(\varphi') \ F(\varphi+\varphi')$.
The functional
$L^{\Lambda,\Lambda_0}$ has an expansion
as a formal power series
in terms of Feynman diagrams with precisely $l$ loops, $n$ external
legs, and propagator $C^{\Lambda,\Lambda_0}(p)$. As the name suggests,
only connected diagrams contribute,
and the (free) propagators on the external legs are removed. We will
not use  decompositions in terms of Feynman diagrams.
But we will also analyze the functional (\ref{CAGdef})
in the sense of formal power
series
\ben\label{genfunc}
L^{\Lambda, \Lambda_0}(\varphi) := \sum_{n>0}^\infty
\sum_{l=0}^\infty {\hbar^l}
\int d^4x_1 \dots d^4 x_n\ \L^{\Lambda,\Lambda_0}_{n,l}(x_1, \dots, x_n)
\,
\varphi(x_1) \cdots \varphi(x_n) \, ,
\een
where $\varphi \in \S(\mr^4)$ is any Schwartz space function. No
statement
is made about the
convergence of the series in $\hbar$.
The objects on the right side, the CAG's,
 are the basic quantities in our analysis
because they are
easier to work with than the full Schwinger functions. But the
latter can of course be recovered from the CAG's.

Because the connected amputated functions in position space are translation
invariant, their Fourier transforms, denoted
$\L^{\La,\Lao}_{n,l}(p_1, \dots, p_n)$, are supported
at $p_1+\dots+p_n=0$.
We consequently write, by
abuse of notation
\ben
\L^{\Lambda,\Lambda_0}_{n,l}(p_1, \dots, p_n) = \delta^{4}{(\sum_{i=1}^n
p_i)}
\, \L^{\Lambda,\Lambda_0}_{n,l}(p_1, \dots, p_{n-1}) \, ,
\een
i.e. one of the momenta is determined in terms of the remaining $n-1$
independent momenta by momentum conservation. It is straightforward to
see that, as
functions of these remaining independent momenta, the connected
amputated
Green functions
are smooth for $\La_{0}<\infty$, $\L^{\Lambda,\Lambda_0}_{n,l}(p_1, \dots, p_{n-1})
\in C^\infty(\mr^{4(n-1)})$.

The FEs are
obtained by taking a $\La$-derivative of
eq.\eqref{CAGdef}:
\ben
\partial_{\La} L^{\La,\Lao} \,=\,
\frac{\hbar}{2}\,
\langle\frac{\delta}{\delta \vp},\dot {C}^{\La}\star
\frac{\delta}{\delta \vp}\rangle L^{\La,\Lao}
\,-\,
\frac{1}{2}\, \langle \frac{\delta}{\delta
  \vp} L^{\La,\Lao},
\dot {C}^{\La}\star
\frac{\delta}{\delta \vp} L^{\La,\Lao}\rangle  +
\hbar \partial_\Lambda \log Z^{\La,\Lao} \ .
\label{fe}
\een
Here we use the shorthand $\,\dot {C}^{\La}\,$ for
$\partial_{\La} {C}^{\La,\Lao}\,$, which, as we note,
does not depend on $\Lao$.
By $\langle\ ,\  \rangle$ we denote the standard scalar product in
$L^2(\mathbb{R}^4, d^4 x)\,$, and $\star$ denotes
convolution in $\mr^4$. For example
\ben
\langle\frac{\delta}{\delta \vp},\dot {C}^{\La}\star
\frac{\delta}{\delta \vp}\rangle = \int d^4x d^4y \ \dot {C}^{\La}(x-y)
\frac{\delta}{\delta \varphi(x)} \frac{\delta}{\delta \varphi(y)}
\een
is the ``functional Laplace operator''.

To define the  CAG's through the FEs,
we have to impose  boundary
conditions. These are, using
the multi-index convention introduced above in ``Notations and Conventions'' (we restrict to BPHZ renormalization
conditions in their simplest form, more general choices are of
course equally admissible):
\ben
\partial^w_{\vec{p}} \L^{0,\Lambda_0}_{n,l}(\vec 0) = \de_{w,0}\ \de_{n,4}\
 \de_{l,0}\ \frac{g}{4!} \quad \text{for $n+|w|
\le 4$,}
\een
 as well as
\ben
\partial^w_{\vec{p}} \L^{\Lambda_0,\Lambda_0}_{n,l}(\vec p) = 0 \quad \text{for
$n+|w| > 4$.}
\een
The  CAG's are then determined by integrating the
FE's subject to these boundary conditions, see e.g.~\cite{Keller:1990ej,Muller:2002he}.
%This has been described in detail in \cite{Hollands:2011gf}.

\subsection{Insertions of composite fields}

For the purposes of this paper, and also in many applications,
one would like to define not only
Schwinger functions of products of the basic field, but also ones
containing composite operators.
These are obtained by replacing the action $L^{\Lambda_0}$ with an
action containing additional sources, expressed through smooth functionals.
Particular examples of such functionals are {\em local} ones. Any
such local functional can by definition be
written as
\ben
F (\varphi)= \sum_A \int d^4 x \  \O_A(x) \ f^A(x) \,\, ,
\quad f^A \in C^\infty_0(\mr^4) \, ,
\een
where $\O_A$ are composite operators as in eq.~\eqref{compop} and
where the sum is finite. To simplify our discussion below, we
will restrict attention to composite fields~\eqref{compfields} with
an even number of factors of $\varphi$ (note that these are the
fields which are invariant under the $\mathbb{Z}_2$-symmetry $\varphi \to -\varphi$,
hence the OPE closes in this sector).
We now consider instead of $L^{\Lambda_0}$ a modified action
containing
sources $f^A$, given by
replacing
\ben
L^{\Lambda_0}\to L^{\Lao}_F:=L^{\Lambda_0}+ F + \sum_{j=0}^\infty
B^{\Lambda_0}_j(\underbrace{F \otimes \cdots \otimes F}_j) \, ,
\een
where the last term represents the counter terms and is for each $j$ a
suitable linear functional
%\footnote{$C^\infty(\S(\mr^4))$ denotes the space
%of smooth (in the Frechet sense) functionals. All our functionals are actually
%formal power series in $\hbar$, so we should write more accurately $C^\infty(\S(\mr^4))[[\hbar]]$ for the space appearing below.}
\ben
B_j^{\Lambda_0}: [C^\infty(\S(\mr^4))]^{\otimes j}
\to C^\infty(\S(\mr^4))
 \  ,
\een
that is symmetric, and of order $O(\hbar)$.
These counter terms
are designed to eliminate the divergences arising from composite field
insertions in the
Schwinger functions when one takes $\Lambda_0 \to \infty$.
The Schwinger functions with insertions of $r$~composite operators are
defined with the aid of  functional
derivatives with respect to the sources, setting the sources
$f^{A_i} =0$ afterwards:
\ben
\langle \O_{A_1}(x_1) \cdots \O_{A_r}(x_r)  \rangle :=
%\end{split}
\een
\[
\hbar^r
\frac{\delta^r}{\delta f^{A_1}(x_1) \dots \delta f^{A_r}(x_r)}
\ (Z^{\La,\Lao})^{-1} \int d\mu^{\Lambda,\Lambda_0}
\exp \bigg(-\frac{1}{\hbar}  L^{\Lambda_0}_F(\varphi)
\bigg)\biggr|_{ f^{A_i}=0} \, .
\]
The previous definition of the CAG's is a special case
of this; there we
take $F = \int d^4 x \ f(x) \ \varphi(x)$, and we have
$B^{\Lambda_0}_j(F^{\otimes j})=0$, because no extra counter terms
are required for this simple insertion. As above, we can define
a corresponding effective action
as
\ben\label{genfuncins}
-L^{\Lambda,\Lambda_0}_F := \hbar \, \log \, \mu^{\Lambda,\Lambda_0}
\star \exp \bigg(-\frac{1}{\hbar} ( L^{\Lambda_0}
+ F + \sum_{j=0}^\infty B^{\Lambda_0}_j(F^{\otimes j})) \bigg)
- \log Z^{\La,\Lao}
\een
which is now a functional of the sources $f^{A_i}$, as well as of $\varphi$.
Differentiating $r$ times with respect to the sources, and setting them
to zero afterwards, gives the generating functionals of the CAG's with
$r$ operator insertions, namely:
\ben
L^{\Lambda,\Lambda_0}(\O_{A_1}(x_1) \otimes \dots \otimes \O_{A_r}(x_r))
=
\frac{\delta^r \ L^{\Lambda,\Lambda_0}_F}{\delta f^{A_1}(x_1) \dots \delta
  f^{A_r}(x_r)}
\,  \Bigg|_{f^{A_i} =  0} \, .
\een
The CAG's with insertions satisfy a number of obvious properties,
e.g.
they are multi-linear--as indicated by the
tensor product notation--and symmetric in the insertions.

As the CAG's without insertions, the CAG's with insertions can be further
expanded in $\varphi$ and $\hbar$, and this is denoted as
\ben
L^{\Lambda,\Lambda_0} \bigg( \bigotimes_{i=1}^r \O_{A_i}(x_i) \bigg) =
\sum_{n,l \ge 0} {\hbar^l} \int d^4p_1\dots d^4p_n \
\L^{\Lambda,\Lambda_0}_{n,l}\bigg( \bigotimes_{i=1}^r \O_{A_i}(x_i);
p_1,
\dots, p_n \bigg)
\prod_{j=1}^n \vp(p_j) \, .\non
%\end{split}
\een
Due to the insertions in $\L_{n,l}^{\La,\Lao} (\otimes_j
\O_{A_j}(x_j),
\vec p)$, there is no restriction on the momentum set
$\vec p$. However it follows from translation invariance that functions
with insertions at a translated set of points $x_j + y\,$ are
obtained
from those
at $y=0\,$ upon multiplication by $\e^{iy\sum_{i=1}^n p_i}$, i.e.
\ben\label{CAGtrans}
\L^{\Lambda,\Lambda_0}_{n,l}\bigg( \bigotimes_{i=1}^r \O_{A_i}(x_i+y);
p_1,
\dots, p_n \bigg)=
\e^{iy\sum_{i=1}^n p_i}\, \L^{\Lambda,\Lambda_0}_{n,l}\bigg( \bigotimes_{i=1}^r \O_{A_i}(x_i);
p_1,
\dots, p_n \bigg)\, .
\een
Also, since we only consider composite fields with even powers of $\varphi$,
only moments with $n$ an even number are non-zero.
The CAG's with insertions satisfy a FE similar in nature to the one for CAG's without insertions:
\ben\label{FEN}
\begin{split}
\partial_{\Lambda}L^{\Lambda,\Lambda_{0}}(\bigotimes_{i=1}^{N}\O_{A_{i}})=&\frac{\hbar}{2}\bra \varp \, ,\, \dot{C}\star \varp  \ket\, L^{\Lambda,\Lambda_{0}}(\bigotimes_{i=1}^{N}\O_{A_{i}})\\
-&%\frac{1}{2}
\sum_{I_1 \cup I_2 = \{1,...,N \} }\bra \varp  L^{\Lambda,\Lambda_{0}}(\bigotimes_{i\in I_{1}}\O_{A_{i}}) \, ,\, \dot{C}\star \varp L^{\Lambda,\Lambda_{0}}(\bigotimes_{j\in I_{2}}\O_{A_{j}}) \ket\quad ,
\end{split}
\een
Here we suppressed the coordinate space variables $(x_{1},\dots, x_{N})$ and simply wrote $\O_{A_{i}}$ instead of $\O_{A_{i}}(x_{i})$. We will often use this convention in the following for the sake of brevity.
When expanded out in $\varphi$, this equation can be expressed in momentum space as
\ben\label{FEexpand}
\begin{split}
&\partial_{\Lambda}\L^{\Lambda,\Lambda_{0}}_{2n,l}(\bigotimes_{i=1}^{N}\O_{A_{i}}; p_{1},\ldots,p_{2n})= \left(\atop{2n+2}{2}\right) \, \int_{k} \dot{C}^{\Lambda}(k)\L^{\Lambda,\Lambda_{0}}_{2n+2,l-1}(\bigotimes_{i=1}^{N}\O_{A_{i}}; k, -k,  p_{1},\ldots,p_{2n})\\
&-4\sum_{\atop{l_{1}+l_{2}=l}{n_{1}+n_{2}=n+1}} n_{1}n_{2}\,  \mathbb{S}\, \Bigg[  \L^{\Lambda,\Lambda_{0}}_{2n_{1},l_{1}}(\bigotimes_{i=1}^{N}\O_{A_{i}}; q,  p_{1},\ldots,p_{2n_{1}-1})\,  \dot{C}^{\Lambda}(q)\,   \L^{\Lambda,\Lambda_{0}}_{2n_{2},l_{2}}(  p_{2n_{1}},\ldots,p_{2n}) \\
&+%\frac{1}{2}
\sum_{\atop{I_1 \cup I_2 = \{1,...,N \}}{I_{1} \neq \emptyset\neq I_{2} } } \int_{k}\, \L^{\Lambda,\Lambda_{0}}_{2n_{1},l_{1}}(\otimesu{i\in I_{1}}\O_{A_{i}}; k,  p_{1},\ldots,p_{2n_{1}-1})\,  \dot{C}^{\Lambda}(k)\,   \L^{\Lambda,\Lambda_{0}}_{2n_{2},l_{2}}(\otimesu{j\in I_{2}}\O_{A_{j}}; -k,   p_{2n_{1}},\ldots,p_{2n}) \Bigg]
\end{split}
\een
with $q=p_{2n_{1}}+\ldots+p_{2n}$ and where $\mathbb{S}$ is the symmetrization operator acting on functions of the momenta $(p_{1},\ldots, p_{2n})$ by taking the mean value over all permutations $\pi$ of $1,\ldots, 2n$ satisfying $\pi(1)<\pi(2)<\ldots<\pi(2n_{1}-1)$ and $\pi(2n_{1})<\ldots<\pi(2n)$.
It is crucial to note that the FE for the CAG's with $N\geq 2$ insertions is not
linear homogeneous, but involves a ``source term''
which is quadratic in the CAG's with less than $N$ insertions. If we want to
integrate the FEs with insertions, we therefore have to ascent in the number
of insertions.
The CAG's with insertions are thereby uniquely defined once we impose suitable boundary conditions on the corresponding FE. The simplest choice in the case of $N\geq 2$ insertions is
\ben\label{BCunsub}
\partial^{w}_{\vec{p}}\LscIr_{n,l}(\bigotimes_{i=1}^{N}\O_{A_{i}}(x_{i}); \vec{p})=0\quad \text{for all } w,n,l.
\quad
\een
For CAG's with one insertion we choose (''normal ordering'')
\ben\label{BCL1}
\partial^{w}_{\vec{p}}\LscO_{n,l}(\O_{A}(0); \vec{0})= i^{|w|}w! \delta_{w,w'}\delta_{n,n'}\delta_{l,0} \quad \text{ for }n+|w|\leq [A]
\een
\ben\label{BCL2}
\partial^{w}_{\vec{p}}\LscIr_{n,l}(\O_{A}(0); \vec{p})=0\quad \text{ for }n+|w|>[A] \quad .
\een

\subsection{Regularized CAG's with $N\ge 2$ insertions}\label{sec:regCAG}

It is not hard to see that, as long as we keep the UV cutoff $\Lao$ finite, the CAG's with insertions depend smoothly on the points
$x_1, \dots, x_N$, as well as on the momenta $p_1, \dots, p_n$. In the limit as $\Lao \to \infty$, smoothness in
the $x_i$'s however is lost, and the CAG's develop singularities for configurations such that some of the points $x_i$
coincide.  This is of course not a problem, nor unexpected--the Greens functions in QFT are usually singular for coinciding points--reflecting the singular nature of the operators themselves.

One of the main technical advances of this paper, generalizing methods of \cite{Hollands:2011gf, Keller:1992by}, is to consider, besides the original CAG's, certain ``regularized'' (sometimes called ``oversubtracted'') versions thereof, which possess more regularity
in the $x_i$'s as $\Lao \to \infty$, i.e. which remove, in a certain sense, some or all of the singularities at coincident points.  In the case of $N=2$ insertions, these divergences can be removed by simply changing the boundary conditions for the CAG's \cite{Keller:1992by}. These regularized CAG's are parametrized by a single integer parameter $D\geq -1$ and defined by the same FE, eq.\eqref{FEN}, but subject to the boundary conditions
\ben\label{BC2rel}
\partial^{w}_{\vec{p}}\L^{0,\Lambda_{0}}_{D,n,l}(\O_{A_{1}}(x)\otimes\O_{A_{2}}(0); \vec{0})=0\quad \text{ for } n+|w|\leq D
\een
as well as
\ben\label{BC2irrel}
\partial^{w}_{\vec{p}}\LscIr_{D,n,l}(\O_{A_{1}}(x)\otimes\O_{A_{2}}(0); \vec{p})=0\quad \text{ for } n+|w|> D
\een
instead of eq.\eqref{BCunsub}. It has been shown in \cite{Keller:1992by} that these functions are of differentiability class $C^{D-[A_{1}]-[A_{2}]}$ in $x$. This justifies the interpretation of $D$ as a regularization parameter. For more than two insertions, one has a choice for which subset of the arguments $x_1, \dots, x_N$ one wants to remove the singularities, and up to what degree. In order to specify this, we consider all subsets $I \subset \{1, \dots, N\}$ with $|I| \ge 2$ and define the corresponding ``partial diagonal'' by
\ben
{\rm Diag}_I = \{(x_1, \dots, x_N) \in (\mr^4)^N \mid x_i=x_j \ \ \ {\rm for} \ \ \ i,j \in I\} \ .
\een
The CAG's with insertions $L^{\La,\Lao}(\otimes_{i=1}^N \O_{A_i}(x_i))$, or the versions thereof which are expanded in
$\varphi$, see eq.~\eqref{FEexpand}, develop singularities at those diagonals in the limit as $\Lao \to \infty$.
Next, we specify a collection $\I $ of subsets of $\{1, \dots, N\}$. We agree that $\{1, \dots, N\}$ is always in $\I$. For each corresponding partial diagonal an integer $D_{I} \ge -1$ is assigned which specifies the degree of regularization of the CAG at that diagonal. The tuple of all such numbers is denoted
\ben
\bfD=( D_I \mid I \in \I )\, \in (\mathbb{Z}_{\geq -1})^{| \I  |}
\een
and given such a tuple, one defines the corresponding regulated CAG's by the FE:
\ben\label{FE3reg}
\begin{split}
\partial_{\Lambda}L^{\Lambda,\Lambda_{0}}_{\bfD}(\otimes_{i=1}^N\O_{A_{i}})=&\frac{\hbar}{2}\bra \varp \, ,\, \dot{C}\star \varp  \ket\, L^{\Lambda,\Lambda_{0}}_{\bfD}(\otimes_{i=1}^N \O_{A_{i}})- \bra \varp L^{\Lambda,\Lambda_{0}}_{\bfD}(\otimes_{i=1}^N \O_{A_{i}})  \, ,\, \dot{C}\star \varp L^{\Lambda,\Lambda_{0}} \ket \\
-&\sum_{\atop{I_1 \cup I_2 = \{1, ..., N\}} {I_{i}\in \I \text{ or }|I_{i}|=1 } } \bra \varp L^{\Lambda,\Lambda_{0}}_{\bfD_1}(\otimes_{i \in I_1} \O_{A_{i}})  \, ,\, \dot{C}\star \varp L^{\Lambda,\Lambda_{0}}_{\bfD_2} (\otimes_{j \in I_2} \O_{A_j} ) \ket \ .
\end{split}
\een
Here, we have set $\bfD_i = \{D_J \mid J \in \I, J \subseteq I_i\}$ for $i=1,2$.
%The sum on the right side runs over all partitions of $\{1, \dots, N\}$ into two sets, $I_1, I_2$. When
%$I_1$ (or $I_2$) is a set such that there is no $J \in \I$ with $J \subset I_1$ (or none with $J \subset I_2$),
%then the corresponding term in the sum is absent. Also
When $|I_1|=1$ (or when $|I_2|=1$) then the corresponding $D$-symbol on $L^{\La,\Lao}(\O_{A_i})$ is absent--the 1-point functions are already smooth in $x_i$.
The boundary conditions are taken as
\ben\label{BC3a}
\partial^{w}_{\vec{p}}\L^{0,\Lambda_{0}}_{\bfD}(\O_{A_{1}}(x_{1})\otimes\cdots \otimes\O_{A_{N}}(0); \vec{0})=0\quad \text{ for } n+|w|\leq D_{\{1, \dots, N\}}
\een
and
\ben\label{BC3b}
\partial^{w}_{\vec{p}}\L^{\Lambda_{0},\Lambda_0}_{\bfD}(\O_{A_{1}}(x_{1})\otimes\cdots\otimes\O_{A_{N}}(0); \vec{p})=0\quad \text{ for } n+|w|> D_{\{1,\dots,N\}} .
\een
Our definition of the regulated CAG's is recursive in nature; first those with $N=2$ insertions are defined,
then those with $N=3$, etc.

The regularized multi-point CAG's are the main new technical tool in this paper. For $N=2$, we only have to specify the value of $D_{\{1,2\}}$, as $I=\{1,2\}$ is the only subset of $\{1,2\}$ with at least
2 elements. In that case, the definition of the regularized two-point CAG hence depends on the specification of only
one number, and it reduces to the one already given above in eqs.\eqref{BC2rel} and \eqref{BC2irrel}. In this paper, we will only need additionally the case of
$N=3$ points. In that case, $\I$ may be any collection of the subsets $I=\{1,2,3\}, \{1,2\}, \{2,3\}, \{1,3\}$, and the corresponding
numbers $D_I$ control the degree of regularity of the CAG $L_{\bf D}^{\La,\Lao}(\O_{A_1}(x_1)
\otimes \O_{A_2}(x_2) \otimes \O_{A_3}(x_3))$ at the respective diagonals $\{x_1=x_2=x_3\}, \{x_1=x_2\}, \{x_2=x_3\}, \{x_1=x_3\}$.
Of particular interest to us will be the following special cases:

\begin{itemize}

\item The collection $\I=\{\{1,2,3\}, \{1,2\}, \{2,3\}, \{1,3\}\}$ together with the choice $D_{\{1,2,3\}}=D_{\{1,2\}}=D_{\{1,3\}}=D_{\{2,3\}}=-1$ corresponds to no regularization at all. We will  denote these CAG's without regularization as before simply by $L^{\Lambda,\Lambda_{0}}(\bigotimes_{i=1}^{3}\O_{A_{i}})$.

\item The collection $\I=\{\{1,2,3\}, \{1,2\}, \{2,3\}, \{1,3\}\}$ together with the choice $D_{I}=\sum_{i\in I} [A_{i}]$ for all $I \in \I$ can be shown to be regular (continuous) at all partial diagonals, uniformly as $\Lao \to \infty$.

\item The collection $\I=\{\{1,2,3\}, \{1,2\}, \{2,3\}, \{1,3\}\}$ together with the choice $D_{\{1,2,3\}}=D_{\{1,2\}}=D_{\{1,3\}}=-1, D_{\{2,3\}}>-1$ corresponds to no regularization at all except at the diagonal $\{x_2=x_3\}$. These objects appear when analysing the OPE of two fields localized
    at $x_2, x_3$ in a Schwinger function of three fields, with the field at $x_1$ playing the role of a ``spectator''.

\item It is not necessary that $\I$ contains all subsets of $\{1,2,3\}$ (of order $>1$), as in the above three examples. The sum in the defining FE then has fewer terms, as described above. We will also consider these objects.
\end{itemize}
As we will later be interested in spacetime derivatives of the CAG's, the following relations will be useful.
They are generalizations of the \emph{Lowenstein rules}, given e.g. in \cite{Keller:1991bz,Keller:1992by}.
For $N=1$ insertion, the rule is simply that
\ben\label{loew1}
\partial_{x}^{w}\, L^{\Lambda,\Lambda_{0}}(\O_{A}(x))= L^{\Lambda,\Lambda_{0}}(\partial^{w}_x \O_{A}(x))
\een
where $\partial^w_x$ is a multiple partial derivative acting on $x$, see our notations and conventions for
more on multi-index notation. Note that the identity is (slightly) non-trivial, because the $\partial^w_x$
symbol on the left side acts on the CAG, while on the right side it acts on the operator $\O_A$, i.e.
a monomial in $\varphi$ and its derivatives, which is then inserted into a CAG. So, a priori, both sides mean
potentially different things, but they are in fact equal. A generalization also holds for CAG's with multiple
insertions.
\ben\label{loew2}
\partial_{x_{j}}^{w}\, L^{\Lambda,\Lambda_{0}}_{\bfD}(\bigotimes_{i=1}^{N}\O_{A_{i}}(x_{i}))= L^{\Lambda,\Lambda_{0}}_{\bfD}(\O_{A_1}(x_1) \otimes \cdots \partial^w_{x_j} \O_{A_j}(x_j) \otimes
\cdots \O_{A_N}(x_N))
\een
Derivatives in directions of the ``center of mass'' on the total diagonal, or a partial diagonal associated
with a subset $I \subset \{1, \dots, N\}$, have a special status. For a multiindex $v$ such a
derivative is given by $(\sum_{i \in I} \partial_{x_i})^v$
\ben\label{loew3}
\begin{split}
(\sum_{j \in I}\partial_{x_{j}})^{v}\, L^{\Lambda,\Lambda_{0}}_{\bfD}(\bigotimes_{i=1}^{N}\O_{A_{i}}(x_{i}))
=&\sum_{\sum_{i \in I} w_i = v} c_{\{w_{i}\}} \  L^{\Lambda,\Lambda_{0}}_{\bfD(I,\{w_i\})}( \ \prod_{i \in I}
\partial_{x_i}^{w_i} \ \otimes_{i=1}^{N}\O_{A_{i}}(x_{i}) \ ) \ .
\end{split}
\een
Here $\bfD(I,\{w_i\})=\{D_J' \mid J \in \I\}$ is defined as
\ben
D'_J =
\begin{cases}
D_J & \text{if $J$ is not a subset of $I$},\\
D_J + \sum_{j \in J} |w_j| & \text{if $J$ is a subset of $I$.}
\end{cases}
\een
The symbols $c_{\{w_i\}}$ are binomial type coefficients, see our notations and conventions
section. The last condition is most straightforward to understand when $I = \{1, \dots, N\}$. The point is that
the CAG is always smooth in the center of mass variable $x_1+\dots+x_N$, so derivatives with respect to this variable
never make the divergences of the CAG's worse. The degree of regularity of the right side is hence the same as that
without the derivatives taken, which, due to the higher dimensions of the operator insertions, results in a
higher regularity degree $D_{\{1,...,N\}}$ on the total diagonal.

All of the generalized Lowenstein rules, although complicated looking, can be proved via a simple but powerful
{\em principle in the FE method,} which we will use several times over in this article. The general principle simply says that if we have two hierarchies of functions satisfying the same {\em FE} and same {\em boundary condition}, then
they must coincide. In the case of the Lowenstein identities, it is easy to find what FE's each side has to satisfy, and what boundary conditions. For each identity, these are easily seen to coincide, thus proving the identity.

\subsection{Amputated Greens functions (AG's) with insertions}

Although the connected amputated Greens functions (CAG's) are the basic building blocks of the correlation functions
which will be mostly considered, we will occasionally also need the non-connected version of these, called AG's.
Their definition is
\ben\label{GD}
G^{\Lambda,\Lambda_{0}}(\bigotimes_{i=1}^{N}\O_{A_{i}})
=\sum_{\alpha=1}^{N} (-1)^{\alpha+1}\sum_{ I_1 \cup ... \cup I_\alpha = \{1,...,N\} }\prod_{i=1}^{\alpha}\,\hbar^{N-\alpha} L^{\Lambda, \Lambda_{0}}(\bigotimes_{j\in I_i}\O_{A_{j}})
\een
As usual, we also consider the expanded quantities in $\hbar$ and $\varphi$; these are denoted in
the present case as ${\mathcal G}^{\La,\Lao}_{n,l}(\otimes_{i=1}^N \O_{A_i}, \vec p)$, where
as usual, $l$ indicates the power of $\hbar$, and $n$ the power of $\varphi$. As the name suggests, these are the
amputated versions of the Schwinger (=Greens) functions,
\ben
\begin{split}
&\Big\bra\prod_{i=1}^N \O_{A_i}(x_i) \  \prod_{j=1}^n \hat \varphi(p_i) \Big\ket_{l  \ {\rm loops}}\ \prod_{k=1}^n (C^{\Lambda,\Lambda_{0}}(p_{k}))^{-1}
\\
&= \sum_{j=1}^{n} (-1)^{j+1}\sum_{\substack{I_{1}\cup\ldots\cup I_{j}=\{1,\ldots, n\} \\ l_{1}+\ldots+l_{j}=l }}\, \hbar^{n+l+1-j}\,  {\mathcal G}^{\La,\Lao}_{|I_{1}|,l_{1}}(\otimes_{i=1}^N \O_{A_i}(x_i), \vec p_{I_{1}}) \, \bar{\L}_{|I_{2}|,l_{2}}^{\Lambda,\Lambda_{0}}(\vec{p}_{I_{2}}) \cdots  \bar{\L}_{|I_{j}|,l_{j}}^{\Lambda,\Lambda_{0}}(\vec{p}_{I_{j}})
\end{split}
\een
where $\bar\L^{\La,\Lambda_{0}}_{n,l}$ are the expansion coefficients of the generating functional $\bar{L}^{\La,\Lambda_{0}}(\varphi)=L^{\La,\Lambda_{0}}(\varphi)+\frac{1}{2}\bra \varphi,\, (C^{\La,\Lambda_{0}})^{-1}\star\varphi \ket$ without the momentum conservation delta functions taken out. We will use this relation later.

By contrast to the CAG's, the AG's satisfy linear {\em homogeneous} FE's
which are
\ben\label{GFE}
\partial_{\Lambda}G^{\Lambda,\Lambda_{0}}(\bigotimes_{i=1}^{N}\O_{A_{i}})=\frac{\hbar}{2}\bra \varp , \dot{C}\star\varp \ket\, G^{\Lambda,\Lambda_{0}}(\bigotimes_{i=1}^{N}\O_{A_{i}})- \frac{\hbar}{2}\bra \varp G^{\Lambda,\Lambda_{0}}(\bigotimes_{i=1}^{N}\O_{A_{i}}), \dot{C}\star\varp L^{\Lambda,\Lambda_{0}} \ket \ .
\een
The fact that the AG's satisfy a homogeneous FE is a welcome simplification, which is unfortunately
counterbalanced by the fact that the boundary conditions for the AG's are more non-trivial.
Therefore, as a compromise between simple FE and simple boundary conditions,  we will not work with the full AG's in the following, but instead define the slightly modified objects
 \ben\label{GTild}
F^{\Lambda,\Lambda_{0}}(\bigotimes_{i=1}^{N}\O_{A_{i}}):=G^{\Lambda,\Lambda_{0}}(\bigotimes_{i=1}^{N}\O_{A_{i}})+ (-1)^{N}\prod_{i=1}^{N} L^{\Lambda,\Lambda_{0}}(\O_{A_{i}})\, .
\een
Using the definitions of the CAG's given above, these functionals are seen to obey the FE
\ben\label{GTildFE}
\begin{split}
&\partial_{\Lambda}F^{\Lambda,\Lambda_{0}}(\bigotimes_{i=1}^{N}\O_{A_{i}})\\
=&\frac{\hbar}{2}\bra \varp , \dot{C}\star\varp \ket\, F^{\Lambda,\Lambda_{0}}(\bigotimes_{i=1}^{N}\O_{A_{i}})- \frac{\hbar}{2}\bra \varp F^{\Lambda,\Lambda_{0}}(\bigotimes_{i=1}^{N}\O_{A_{i}}), \dot{C}\star\varp L^{\Lambda,\Lambda_{0}} \ket\\
-&(-1)^{N}\hbar\sum_{i\neq j \in\{1,\ldots, N\}} \bra \varp L^{\Lambda,\Lambda_{0}}(\O_{A_{i}}), \dot{C}^{\Lambda}\star\varp L^{\Lambda,\Lambda_{0}}(\O_{A_{j}})   \ket \prod_{r\in\{1,\ldots,N\}\setminus\{i,j\}} L^{\Lambda,\Lambda_{0}}(\O_{A_{r}})
\end{split}
\een
and the trivial boundary conditions
\ben
\partial^{w}_{\vec{p}} \F^{\Lambda_{0},\Lambda_{0}}_{n,l}(\bigotimes_{i=1}^{N}\O_{A_{i}}; \vec{p})=0\quad \text{for all }n,l,w,
\een
with a calligraphic letter $\F^{\La,\Lao}_{n,l}$ denoting as usual the objects appearing in the expansion
of $F^{\La,\Lao}$ in powers of $\hbar, \varphi$.
Like the CAG's with multiple insertions, the $F$ functionals are divergent on the partial diagonals, i.e. whenever two or more spacetime arguments coincide. To analyze the OPE we also need regularized $F$ functionals, which will be called $F_{D}$. They are defined to satisfy the same FE as $F$, eq.\eqref{GTildFE}, but the boundary conditions are set to be
\ben\label{Gbound1}
\partial^{w}_{\vec{p}}\F^{0,\Lambda_{0}}_{n,l,D}(\bigotimes_{i=1}^{N}\O_{A_{i}}; \vec{0})\Big|_{x_{N}=0}=0\quad \text{ for }n+|w|\leq D
\een
\ben\label{Gbound2}
\partial^{w}_{\vec{p}}\F^{\Lambda_{0},\Lambda_{0}}_{n,l,D}(\bigotimes_{i=1}^{N}\O_{A_{i}}; \vec{p})\Big|_{x_{N}=0}=0\quad \text{ for }n+|w|>D\, .
\een
Note that in the $N=2$ case $F_{D}$ reduces to the (regularized) bilocal insertions, i.e.
\ben
F^{\Lambda,\Lambda_{0}}_{D}(\O_{A}(x)\otimes\O_{B}(0))=\hbar L^{\Lambda,\Lambda_{0}}_{D}(\O_{A}(x)\otimes\O_{B}(0))
\een
since both share the same FE and boundary conditions. For $N\geq 3$, however, such a simple relation does not exist.
We also define the complete regularized AG's as
 \ben\label{GDdef}
{G}_{D}^{\Lambda,\Lambda_{0}}(\bigotimes_{i=1}^{N}\O_{A_{i}}):=
F_{D}^{\Lambda,\Lambda_{0}}(\bigotimes_{i=1}^{N}\O_{A_{i}})- (-1)^{N}\prod_{i=1}^{N} L^{\Lambda,\Lambda_{0}}(\O_{A_{i}})
\een
since the CAG's with single insertion are smooth without any regularization.

\subsection{OPE coefficients}\label{sec:OPEcoefs}

We next give the definition of the  OPE coefficients. To have more compact formulas, let us define the operator $\D^{A}$ acting on differentiable functionals $F(\varphi)$ of Schwartz space functions $\varphi\in\S(\mathbb{R}^{4})$ by
\ben\label{defD}
\D^{A} F(\varphi) = \left. \frac{1}{n!\,w!}\, \partial_{\vec{p}}^{w}\frac{\delta^{n}}{\delta\hat\varphi(p_{1})\cdots\delta\hat\varphi(p_{n})}\, F(\varphi)\, \right|_{\hat\varphi=0, \vec{p}=0}\quad ,
\een
where $A=\{n,w\}$. Further, let us also define the multivariate Taylor expansion operator %$\T^{j}_{x_{1},\ldots,x_{N}}$
through
\ben\label{defT}
\T^{j}_{\vec{x}\to\vec{y}}\, f(\vec{x})=\T^{j}_{(x_{1},\ldots,x_{N})\to(y_{1},\dots, y_{N})} \, f(x_{1},\ldots, x_{N})=\sum_{|w|=j}\, \frac{(\vec{x}-\vec{y})^{w}}{w!}\, \partial^{w}f(\vec{y})
\een
where $\vec{x}=(x_{1},\ldots,x_{N})$ and
where $f$ is a sufficiently smooth function on $\mathbb{R}^{4N}$. For expansions around zero will use the shorthand $\T^{j}_{\vec{x}\to\vec{0}}=:\T^{j}_{\vec{x}}$\ .
Then the OPE coefficients are defined as follows:
\begin{defn}\label{defOPE}
Let $\Delta:=[C]-([A_{1}]+\ldots+[A_{N}])$. Then we define the OPE coefficients by
\ben\label{OPEhigh}
\C_{A_{1}\ldots A_{N}}^{C}(x_{1},\ldots,x_{N-1},0)\,
:=\, \D^{C}\left\{ G^{0,\Lambda_{0}}_{[C]-1}\left((1-\sum_{j < \Delta}\T^{j}_{\vec{x}}\,) \bigotimes_{i=1}^{N}\O_{A_{i}}(x_i)\right) \right\}\, ,
\een
 where it is understood that $x_N=0$.
\end{defn}
\noindent
{\bf Remark:} In the case $N=2$ this definition is equivalent to the one given in \cite{Hollands:2011gf}. Note
also that the OPE coefficients are translation invariant, so we may e.g. put the last point
to zero by a translation, as we have done above to get a simpler formula.

\section{Bounds on Green's functions with insertions}\label{secBounds}

In the previous section, we have introduced the quantities of interest in this paper, the (connected, amputated)
Schwinger functions, and the OPE coefficients. In order to prove thm.~\ref{OPEbound} and thm.~\ref{thmfact}, we
will need suitable bounds for these objects. These bounds will be presented and proved in the following sections.
Our bounds are obtained using an inductive scheme based on the renormalization group FEs, relying heavily on the bounds already established in \cite{Hollands:2011gf} and also \cite{Kopper:1997vg,Keller:1991bz,Kopper:2009um}. For the sake of brevity we will refer the reader to the existing papers for technical details whenever possible. In the following we will also set $\hbar=g=1$.

\subsection{CAG's with up to one insertion}\label{sec:CAG1}

Bounds on CAG's without insertions and on those with one insertion were derived in~\cite{Kopper:2009um, Hollands:2011gf}. These bounds are a crucial input for the
subsequent bounds on CAG's with multiple insertions,  because the CAG's with one insertion
enter the FE for those with two insertions as ``inhomogeneities'' [cf. eq.~\eqref{FEN}], those with up to
two insertions enter the FE for those with three etc. In this way, many of the estimates will be seen to carry over from the case of one insertion  in that way.

Let us recall the bound for the CAG's without insertions first~\cite{Kopper:2009um, Hollands:2011gf}. There exists a constant $\,K>0\,$ such that for $2n+|w|\ge 5\,$  (recall also the definitions of $|\vec{p}|$ and $\kappa$ from our notations and conventions section above)
\ben
|\pa^{w}_{\vec{p}} {\cal L}_{2n,l}^{\La,\Lao}(p_1,\ldots,p_{n-1})|\leq \sqrt{|w|!}
\,\La^{4-2n-|w|}\
K^{(2n+4l-4)(|w|+1)}\ (n+l-2)! \
\sum_{\la=0}^{\ell(n,l)}
\frac{\log^{\lambda}(\sup(\frac{|\vec{p}|}{\kappa},\frac{\ka}{m}))}{2^{\lambda}\,\lambda!}
\ ,
\label{propout}
\een
where $\ell(n,l) = l\ \mbox{ if }\ n\ge 2$ and
$\ell(n,l) = l-1\ \mbox{ if }\  n=1\ $.
For $2n+|w| \le 4\,$ one has the estimates
\ben
  | {\cal L}_{4,l}^{\La,\Lao}(\vec{p})|\ \leq\
\frac{K^{2l}}{(l+1)^2\, 2^4}
\ (1+l)!\, \sum_{\la=0}^{l}
\frac{\log^{\lambda}\bigr(\sup({|\vec{p}|\over \ka},{\ka\over
    m})\bigl)}{2^{\lambda}\,\lambda!}\
\ ,
\label{prop40}
\een
\ben
  |\pa^{w}_{\vec{p}} {\cal L}_{2,l}^{\La,\Lao}(p)|\leq
\sup(|p|,\ka)^{2-|{w}|}\ \frac{K^{2l-1}}{(l+1)^2}
\ l!\,\sum_{\lambda=0}^{l-1}
\frac{\log^{\lambda}\bigr(\sup({|p| \over \ka},{\ka\over m})\bigl)}{2^{\lambda}\,\lambda!}
\ .
\label{prop20}
\een
We also recall the following bound for the CAG's with one insertion \cite{Hollands:2011gf}. Fix any $A=\{r, v\}$. Then
\ben\label{boundCAG1}
\begin{split}
|\partial^{w}_{\vec{p}}\L^{\Lambda,\Lambda_{0}}_{ 2n,l}(\O_{A}(0); \vec{p})|\leq\, & \Lambda^{[A]-2n-|w|}\, K^{(4n+8l-4)|w|}\, K^{[A](n+2l)^{3}}\, \sqrt{|w|!\, |v|!}  \\
\times & \sum_{\mu=0}^{{d}(N=1,n,l,w,[A])}\frac{1}{\sqrt{\mu!}}\left(\frac{|\vec{p}|}{\Lambda}\right)^{\mu}\, \sum_{\lambda=0}^{2l+n-1}\frac{\log^{\lambda}(\sup(\frac{|\vec{p}|}{\kappa}), \frac{\kappa}{m})}{2^{\lambda}\lambda!}\, ,
\end{split}
\een
where $K>0$ is some constant not depending on $n$ and $l$,  and where we defined $d(N,n,l,w,D'):= 2D'(n+l+2(N-1))+\sup(D'+1-2n-|w|, 0)$. Eventually we would of course like to remove the cutoffs, i.e. take the limits $\La\to 0$ and $\Lao\to\infty$. In this respect, the following bounds, which hold for $\Lambda\leq m$, will be useful \cite{Hollands:2011gf}
\ben\label{boundCAG0m}
\begin{split}
  |\pa^{w}_{\vec{p}} {\cal L}_{2n,l}^{\La,\Lao}(\vec{p})| &\leq\, m^{4-2n-|w|}\
\frac{K^{(2n+4l-4)(|w|+1)}}{n!} \ (n+l-1)! \\
& \times \ \sqrt{|w|!\,
(|w|+2n-4)!}\,
\sum_{\lambda=0}^{ l}
\frac{\log_+^{\lambda}(\frac{|\vec{p}|}{m})}{2^{\lambda}\,\lambda!}\quad \text{for }2n+|w|\geq 5
\end{split}
\een
\ben\label{boundCAG1m}
\begin{split}
  |\pa^{w}_{\vec{p}} {\cal L}_{2n,l}^{\La,\Lao}(\O_A(0);\vec{p})| \
&\leq
m^{[A]-2n -|w|}\
K^{(4n+8l-4)|w|}\
 K^{[A](n+2l)^3} \ \sqrt{|w|!\, |v|!}  \\
&\times\ \sqrt{[2n+|w| - [A]]_+!}
\sum_{\mu=0}^{d(N=1,n,l,w,[A])}
 \left(\frac{|\vec p|}{m}\right)^{\mu}\,
\sum_{\lambda=0}^{2l+n-1}
\frac{\log_+^{\lambda}(\frac{|\vec{p}|}{m})}{2^{\lambda}\,\lambda!}\
\, ,
\end{split}
\een
where by $[\, \cdot\,]_{+}$ we mean the positive part of the respective expression.

\subsection{CAG's with two insertions}

Our first new estimate is a generalization of the bound on the CAG's with two insertions given in~thm.~2 of~\cite{Hollands:2011gf}.
\begin{thm}\label{thmCAG2}
For any $D\leq D'=[A_{1}]+[A_{2}]$ there exists a constant $K>0$ such that
\ben\label{boundCAG2}
\begin{split}
&|\partial^{w}_{\vec{p}}\L^{\Lambda,\Lambda_{0}}_{D, 2n,l}(\O_{A_{1}}(x)\otimes\O_{A_{2}}(0); \vec{p})|\leq\,  \Lambda^{D-2n-|w|}\, K^{(4n+8l-3)(|w|+D'-D)}\, K^{D'(n+2l)^{3}}\, \sqrt{|v_1|!|v_2|!}\\
&\hspace{2cm} \times\ \frac{\sqrt{(|w|+D'-D)!}}{|x|^{D'-D}} \sum_{\mu=0}^{{d}(2,n,l,w,D')}\frac{1}{\sqrt{\mu!}}\left(\frac{|\vec{p}|}{\Lambda}\right)^{\mu}\, \sum_{\lambda=0}^{2l+n}\frac{\log^{\lambda}(\sup(\frac{|\vec{p}|}{\kappa}), \frac{\kappa}{m})}{2^{\lambda}\lambda!}
\end{split}
\een
with $d(N,n,l,w,D'):= 2D'(n+l+2(N-1))+\sup(D'+1-2n-|w|, 0)$. Here $v_i$ refers to
the multi-indices in $A_{i}\equiv \{n_{i},v_i\}$.
\end{thm}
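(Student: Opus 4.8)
The plan is to prove the bound \eqref{boundCAG2} by induction along the renormalization group flow, using the same structural strategy that produced the one-insertion bound \eqref{boundCAG1}, but now tracking the additional $|x|$-dependence coming from the second insertion. First I would set up the induction in the standard FE manner: one integrates the flow equation \eqref{FEN} (specialized to $N=2$ insertions) in $\Lambda$, upward from $\Lambda_0$ for the irrelevant terms ($2n+|w|>D$) and downward from $\Lambda=0$ for the relevant terms ($2n+|w|\le D$), using the boundary conditions \eqref{BC2rel} and \eqref{BC2irrel}. The induction proceeds simultaneously on the loop order $l$ and, at fixed $l$, ascends in the number of $\varphi$-legs $2n$ and descends in the number of momentum-derivatives $|w|$, exactly as in \cite{Hollands:2011gf,Kopper:2009um}. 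The right-hand side of the FE for $\L_{D,2n,l}$ contains two types of terms: a linear ``loop'' term with one more leg and one fewer loop, $\int_k \dot C^\Lambda(k)\,\L_{D,2n+2,l-1}$, and quadratic terms in which the two insertions are distributed over the two factors. The key point is that in the genuinely quadratic, ``inhomogeneous'' terms where one insertion sits in each factor, the product $\L^{\Lambda,\Lambda_0}_{2n_1,l_1}(\O_{A_1})\,\dot C^\Lambda\,\L^{\Lambda,\Lambda_0}_{2n_2,l_2}(\O_{A_2})$ is bounded using the established one-insertion bound \eqref{boundCAG1}, and these terms carry no $|x|$-singularity at all (they depend only through momenta).

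The heart of the argument is the treatment of the $|x|^{-(D'-D)}$ factor. I would extract this by the same device used for two points in \cite{Hollands:2011gf}: the oversubtracted boundary conditions at $\Lambda=0$ force the relevant part of $\L_{D,2n,l}$ to vanish together with its momentum-derivatives up to order $D-2n$ at $\vec p=0$, and the ``missing'' subtractions between the physical degree $D'=[A_1]+[A_2]$ and the regularization degree $D$ are precisely what generate the inverse power $|x|^{-(D'-D)}$ together with the combinatorial factor $\sqrt{(|w|+D'-D)!}$. Concretely, one writes the solution as a Taylor remainder in the relative coordinate $x$ (equivalently in the conjugate momentum), and the $D'-D$ undone subtractions convert into a homogeneity factor scaling like $|x|^{-(D'-D)}$. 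To make this rigorous inside the inductive step I would verify that the ansatz \eqref{boundCAG2} is \emph{reproduced} by the FE: inserting the inductive bounds for $\L_{D,2n+2,l-1}$ and for the one-insertion factors into the right-hand side, integrating $\dot C^\Lambda$ against the Gaussian weight (which contributes the $\Lambda^{D-2n-|w|}$ power-counting factor and shifts the polynomial degree in $|\vec p|/\Lambda$), and checking that the resulting powers of $K$, the factorials $\sqrt{|v_1|!\,|v_2|!}\,\sqrt{(|w|+D'-D)!}$, the logarithm sum up to $\lambda=2l+n$, and the $\mu$-sum up to $d(2,n,l,w,D')$ all close under the induction. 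The constant $K$ is then chosen large enough to absorb all the $O(1)$ combinatorial factors (the $\binom{2n+2}{2}$, the $n_1 n_2$ multiplicities, the Leibniz weights $c_{\{v_i\}}\le r^{|w|}$ from distributing $\partial^w_{\vec p}$, and the momentum-integration constants) uniformly in $\Lambda_0$.

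I expect the main obstacle to be the bookkeeping that simultaneously closes the inductive ansatz in \emph{all} of its dependencies at once. The delicate part is ensuring that the polynomial degree bound $d(2,n,l,w,D')=2D'(n+l+2)+\sup(D'+1-2n-|w|,0)$ is preserved: the loop term raises $n\to n+1$ while lowering $l\to l-1$, and one must check that the net effect on $d$ is compatible with what the $\dot C^\Lambda$-integration produces, and likewise that the quadratic terms—where the two insertion degrees $[A_1],[A_2]$ are split between factors each bounded via \eqref{boundCAG1} with the $D'=[A]$ there replaced by $[A_i]$—recombine to give the total $D'=[A_1]+[A_2]$ with the correct exponent $D'(n+2l)^3$ on $K$. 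A second subtle point, inherited from \cite{Hollands:2011gf,Keller:1992by}, is the interchange of the downward relevant integration with the upward irrelevant integration at the crossover degree $2n+|w|=D$, where one must confirm that the oversubtraction produces exactly the factor $\sqrt{(|w|+D'-D)!}/|x|^{D'-D}$ and not a worse combinatorial growth; this is where the square-root factorials (rather than full factorials) are essential for the eventual convergence of the OPE sum, and verifying that the bound is stable under the flow at this crossover is the technically most demanding step.
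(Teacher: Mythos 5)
Your overall inductive scheme (integrating the $N=2$ flow equation upward/downward according to the relevant/irrelevant split, feeding in the one--insertion bounds for the source term, Taylor--expanding in $\vec p$ to reach nonzero momentum in the relevant case, and absorbing combinatorics into a large $K$) is the paper's strategy. But there is a genuine gap at what you yourself call ``the heart of the argument'': the origin of the factor $\sqrt{(|w|+D'-D)!}/|x|^{D'-D}$. You assert that the quadratic source terms, where one insertion sits in each factor, ``carry no $|x|$-singularity at all,'' and you propose instead to generate the $|x|^{-(D'-D)}$ from the oversubtracted boundary conditions via a Taylor remainder in the relative coordinate $x$. Both halves of this are wrong. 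The boundary conditions \eqref{BC2rel}, \eqref{BC2irrel} are imposed in momentum space at fixed $x$; no Taylor expansion in $x$ enters the definition, and the homogeneous terms of the FE merely propagate whatever $x$-dependence is already in the inductive hypothesis. The $|x|^{-(D'-D)}$ must be \emph{created} in the source term, and it is: by translation covariance the factor $\L^{\La,\Lao}_{2n_1,l_1}(\O_{A_1}(x);k,\dots)$ carries a phase $\e^{ikx}$, and the paper (lemma~\ref{lemmakint2N2}) multiplies by $(2\|x\|)^{D'-D}$, rewrites $\|x\|^{D'-D}\e^{ikx}$ as $D'-D$ derivatives $\partial_k$ of the phase, and integrates by parts in the loop momentum $k$. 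Each such derivative acting on the integrand costs one power of $\La'$, which is precisely the trade $\La'^{D'-2n-|w|-1}\to\La'^{D-2n-|w|-1}/\|x\|^{D'-D}$.

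This is not a cosmetic difference: without it your induction does not close. If you bound the source term $x$-independently using only \eqref{boundCAG1}, you get $\La'^{D'-2n-|w|-1}$ under the $\La'$-integral. In the irrelevant-but-undersubtracted range $D<2n+|w|\le D'$ (which is nonempty precisely when $D<D'$, the new case of this theorem) that integrand has a nonnegative exponent, so $\int_\La^{\Lao}d\La'$ diverges as $\Lao\to\infty$ and no $\Lao$-uniform bound of the claimed form $\La^{D-2n-|w|}$ results. The partial-integration trick is exactly what lowers the exponent to $D-2n-|w|-1<0$ so the upper limit converges, at the price of the singular prefactor $|x|^{-(D'-D)}$ and the combinatorial cost $\sqrt{(|w|+D'-D)!}$ (the $k$-derivatives land on Gaussian factors and on the one-insertion CAG's, whose $|w|$-dependence is $\sqrt{|w|!}\,K^{O(|w|)}$, whence the shift $|w|\to|w|+D'-D$ throughout the ansatz). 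You would need to add this mechanism --- or an equivalent one --- to make the source-term estimate, and hence the whole induction, go through.
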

\noindent
{\bf Remarks:}
\begin{enumerate}
\item  We can see explicitly that the parameter $D$ improves regularity at $x=0$.

\item The CAG's with two insertions have been bounded in \cite{Hollands:2011gf} for the particular choice $D=[A_{1}]+[A_{2}]$ (full regularization) and $|w|\leq D+1$.
\end{enumerate}
\begin{proof}
The strategy is to integrate the differentiated FE
\ben\label{FE2}
\begin{split}
&\hspace{2cm}\partial_{\Lambda}\partial_{\vec{p}}^{w}\L^{\Lambda,\Lambda_{0}}_{D,2n,l}(\O_{A_{1}}\otimes\O_{A_{2}}; p_{1},\ldots,p_{2n})=\\
\vspace{0.4cm}\\
=& \left(\atop{2n+2}{2}\right) \, \int_{k} \dot{C}^{\Lambda}(k)\ \partial_{\vec{p}}^{w}\L^{\Lambda,\Lambda_{0}}_{D,2n+2,l-1}(\O_{A_{1}}\otimes\O_{A_{2}}; k, -k,  p_{1},\ldots,p_{2n})-4\sum_{\substack{l_{1}+l_{2}=l \\ n_{1}+n_{2}=n+1}}\!\!\!\!\! n_{1}n_{2}\\
\times&\,  \mathbb{S}\, \Bigg[\sum_{w_{1}+w_{2}+w_{3}=w}\!\!\!\!\!\!\! c_{\{w_{j}\}}  \partial_{\vec{p}}^{w_{1}}\L^{\Lambda,\Lambda_{0}}_{D,2n_{1},l_{1}}(\O_{A_{1}}\otimes\O_{A_{2}}; q,  p_{1},\ldots,p_{2n_{1}-1})\,  \partial_{\vec{p}}^{w_{2}}\dot{C}^{\Lambda}(q)\,   \partial_{\vec{p}}^{w_{3}}\L^{\Lambda,\Lambda_{0}}_{2n_{2},l_{2}}(  p_{2n_{1}},\ldots,p_{2n}) \\
&\vspace{0.05cm}\\
+& \sum_{w_{1}+w_{2}=w} c_{\{w_{j}\}}\ \int_{k}\, \partial_{\vec{p}}^{w_{1}}\L^{\Lambda,\Lambda_{0}}_{2n_{1},l_{1}}(\O_{A_{1}}; k,  p_{1},\ldots,p_{2n_{1}-1})\,  \dot{C}^{\Lambda}(k)\,  \partial_{\vec{p}}^{w_{2}} \L^{\Lambda,\Lambda_{0}}_{2n_{2},l_{2}}(\O_{A_{2}}; -k,   p_{2n_{1}},\ldots,p_{2n}) \Bigg]
\end{split}
\een
over $\Lambda$ and bound each term on the right hand side separately. For the first two terms on the right hand side of this equation,  the bound is verified to hold inductively as one goes up in $n+l$ and for fixed $n+l$ goes up in $l$. This general procedure is very similar to the one employed in \cite{Hollands:2011gf}. To bound the expression in the last line of eq.\eqref{FE2}, we will make use of the known bounds on the CAG's with one insertion, \eqref{boundCAG1}.

When integrating eq.\eqref{FE2} over $\Lambda$ we have to distinguish three cases:

 \begin{enumerate}[(a)]

\item  Contributions with $2n+|w|> D$ are referred to as \emph{irrelevant}. Here the boundary conditions are given at $\Lambda=\Lambda_{0}$, see eq.\eqref{BC2irrel}. Therefore, we integrate over $\Lambda'$ from $\Lambda$ to $\Lambda_{0}$ in this case.

 \item  Contributions with $2n+|w|\leq D$ are referred to as \emph{relevant}.  The boundary conditions for relevant terms, eq.\eqref{BC2rel}, are given at $\Lambda=0$ and at vanishing external momentum, $\vec{p}=\vec{0}$. Thus, we will integrate over $\Lambda'$ from $0$ to $\Lambda$ in this case.

\item Contributions with $2n+|w|\leq D$ and $\vec{p}\neq \vec{0}$ will be obtained from (A),(B) with the help of a Taylor expansion in $\vec{p}$.
 \end{enumerate}
\textbf{(a) Irrelevant terms ($2n+|w|>D$):}

 \paragraph*{First term on the r.h.s. of the flow equation:}

Substituting our inductive bound, \eqref{boundCAG2}, and
\ben\label{cdotest}
\dot{C}^{\Lambda}(k)=-\frac{2}{\Lambda^{3}}\, \e^{-\frac{k^{2}+m^{2}}{\Lambda^{2}}}
\een
into the first term on the r.h.s. of eq.\eqref{FE2} and integrating over $\Lambda'$ from $\Lambda$ to $\Lambda_{0}$ yields the inequality [recall the definition of $|\vec{p}|_{2n+2}$ from eq.\eqref{pshort}]
\ben\label{FE21st}
\begin{split}
\Big|\int_{\Lambda}^{\Lambda_{0}}\d\Lambda' \left(\atop{2n+2}{2}\right)\int\d^{4}k \ \dot{C}^{\Lambda'}(k) \, \partial_{\vec{p}}^{w}\L_{D,2n+2,l-1}^{\Lambda',\Lambda_{0} }(\O_{A_{1}}\otimes\O_{A_{2}}; k,-k, p_{1},\ldots, p_{2n})\Big| &\\
\leq \int_{\Lambda}^{\Lambda_{0}}\d\Lambda'\,\left(\atop{2n+2}{2}\right)\, \int\d^{4} k\,\frac{2}{\Lambda'^{3}}\, \e^{-\frac{k^{2}+m^{2}}{\Lambda'^{2}}}  \Lambda'^{D-(2n+2)-|w|}\, K^{(4n+8l-7)(|w|+D'-D)}\, K^{D'(n+2l-1)^{3}}&\\
\times\sqrt{|v_1|!|v_2|!}\, \frac{\sqrt{(|w|+D'-D)!}}{|x|^{D'-D}} \sum_{\mu=0}^{{d}(2,n+1,l-1,w,D')}\frac{1}{\sqrt{\mu!}}\left(\frac{|\vec{p}|_{2n+2}}{\Lambda'}\right)^{\mu}\, \sum_{\lambda=0}^{2l+n-1}\frac{\log^{\lambda}(\sup(\frac{|\vec{p}|_{2n+2}}{\kappa'}), \frac{\kappa'}{m})}{2^{\lambda}\lambda!}&\\
\leq \, \left(\atop{2n+2}{2}\right)\,  K^{(4n+8l-7)(|w|+D'-D)}\, K^{D'(n+2l-1)^{3}}\, \sqrt{|v_1|!|v_2|!}\, \frac{\sqrt{(|w|+D'-D)!}}{|x|^{D'-D}} \sum_{\lambda=0}^{2l+n-1} \frac{2}{2^{\lambda}\lambda!}&\\
\times \int_{\Lambda}^{\Lambda_{0}}\!\!\!\d\Lambda'\,  \Lambda'^{D-2n-1-|w|} \e^{-\frac{m^{2}}{\Lambda'^{2}}} \sum_{\mu=0}^{{d}(2,n,l,w,D')}\frac{1}{\sqrt{\mu!}}\int\d^{4}(\frac{k}{\Lambda'})\left(\frac{|\vec{p}|_{2n+2}}{\Lambda'}\right)^{\mu}\, \log^{\lambda}(\sup(\frac{|\vec{p}|_{2n+2}}{\kappa'}), \frac{\kappa'}{m})\, \e^{-\frac{k^{2}}{\Lambda'^{2}}}&
\end{split}
\een
One can show \cite{Kopper:2009um, Hollands:2011gf} that the momentum integral in the last line is bounded by
\ben\label{kint}
\begin{split}
&\sum_{\mu=0}^{{d}}\frac{1}{\sqrt{\mu!}}\int\d^{4}(k/\Lambda')\left(\frac{|\vec{p}|_{2n+2}}{\Lambda'}\right)^{\mu}\, \log^{\lambda}(\sup(\frac{|\vec{p}|_{2n+2}}{\kappa'}), \frac{\kappa'}{m})\, \e^{-\frac{k^{2}}{\Lambda'^{2}}}\\
&\qquad \leq\, 2^{d}\, \sum_{\mu=0}^{{d}}\frac{1}{\sqrt{\mu!}} \left(\frac{|\vec{p}|}{\Lambda'}\right)^{\mu}\, \left(\log^{\lambda}(\sup(\frac{|\vec{p}|}{\kappa'}), \frac{\kappa'}{m}) + \sqrt{\lambda!} \right)\quad .
\end{split}
\een
The $\Lambda'$ integral can then be estimated using the formula \cite{Kopper:2009um}
\ben\label{Lambdaint}
\sum_{\lambda=0}^{2l+n-1} \frac{1}{2^{\lambda}\lambda!}\int_{\Lambda}^{\Lambda_{0}}\d\Lambda'\, \Lambda'^{-s-1}  \left(\log^{\lambda}(\sup(\frac{|\vec{p}|}{\kappa'}), \frac{\kappa'}{m}) + \sqrt{\lambda!} \right)\, \leq\, 5\frac{\Lambda^{-s}}{s}\sum_{\lambda=0}^{2l+n-1} \frac{1}{2^{\lambda}\lambda!}\log^{\lambda}(\sup(\frac{|\vec{p}|}{\kappa}), \frac{\kappa}{m})
\een
which holds for any $s\in\mathbb{N}$. Using \eqref{kint} and \eqref{Lambdaint} in formula \eqref{FE21st} and also noting the relation $(2l+n-1)^{3}=(2l+n)^{3}-3(2l+n)(2l+n-1)-1$, we find the bound
\begin{equation}
\begin{split}
&\Big|\int_{\Lambda}^{\Lambda_{0}}\d\Lambda' \left(\atop{2n+2}{2}\right)\int\d^{4}k \ \dot{C}^{\Lambda'}(k) \, \partial_{\vec{p}}^{w}\L_{D,2n+2,l-1}^{\Lambda',\Lambda_{0} }(\O_{A_{1}}\otimes\O_{A_{2}}; k,-k, p_{1},\ldots, p_{2n})\Big| \\
&\leq \Lambda^{D-2n-|w|}\, K^{D'(2l+n)^3}\, K^{(4n+8l-3)(|w|+D'-D)}\, \frac{\sqrt{|v_{1}|! |v_{2}|! (|w|+D'-D)!}}{|x|^{D'-D}}\\
&\times\, \sum_{\mu=0}^{d(2,n,l,w,D')}\frac{1}{\sqrt{\mu!}}\left(\frac{|\vec{p}|}{\Lambda}\right)^{\mu}\, \sum_{\lambda=0}^{2l+n-1}\frac{\log^{\lambda}(\sup(\frac{|\vec{p}|}{\kappa}, \frac{\kappa}{m}))}{2^{\lambda} \lambda!}\\
&\times\, K^{-3D'(2l+n)(2l+n-1)-D'-4(|w|+D'-D)}\, \left(\atop{2n+2}{2}\right)\, 2^{d(2,n,l,w,D')}\, \frac{10}{2n+|w|-D}
\end{split}
\end{equation}
We see that this contribution satisfies the inductive bound, \eqref{boundCAG2} multiplied by  $1/8 \ \mathcal{A}$, where [this factor will be crucial below in the discussion of the relevant terms at non-zero momentum, see in particular eq.\eqref{Aappl}]
\ben\label{Amult}
\mathcal{A}(D',D,n,l,w)\quad := K^{-3(|w|+D'-D)}\, K^{-2D'(n+2l)(n+2l-1)-D'}
\een
if $K$ is chosen large enough such that
\ben\label{Kcond1}
10(n+1)(2n+1)\ 2^{d(2,n,l,w,D')}K^{-D'(n+2l)(n+2l-1)-(|w|+D'-D)}\, \leq\, 1/8\quad ,
\een
for all $n,l$. To see that this is indeed possible, it is helpful to note that $l\geq 1$ in this case, since otherwise the first term on the right side of the FE is simply zero.
\vspace{.2cm}

\paragraph*{Second term on the r.h.s. of the flow equation:}

We integrate the second term in the FE \eqref{FE2} over $\Lambda'$ from $\Lambda$ to $\Lambda_{0}$ and insert our inductive bound as well as the known bound for the CAG's without insertions [see \eqref{propout}]
%and take the supremum over the different momentum permutations
to obtain
\ben\label{487}
\begin{split}
&\Big|\int_{\Lambda}^{\Lambda_{0}}\d\Lambda'\sum_{\substack{l_{1}+l_{2}=l \\ n_{1}+n_{2}=n+1\\ w_{1}+w_{2}+w_{3}=w }}\!\!\!\!\! 4n_{1}n_{2} c_{\{w_{j}\}} \\
&\qquad \times\ \partial_{\vec{p}}^{w_{1}}\L^{\Lambda',\Lambda_{0}}_{D,2n_{1},l_{1}}(\O_{A_{1}}
\otimes\O_{A_{2}}; q,  p_{1},\ldots,p_{2n_{1}-1})\,  \partial_{\vec{p}}^{w_{2}}\dot{C}^{\Lambda'}(q)\,   \partial_{\vec{p}}^{w_{3}}\L^{\Lambda',\Lambda_{0}}_{2n_{2},l_{2}}(  p_{2n_{1}},\ldots,p_{2n}) \Big| \\
&\leq \sum_{\substack{l_{1}+l_{2}=l \\ n_{1}+n_{2}=n+1\\ w_{1}+w_{2}+w_{3}=w }}\!\!\!\!\! 4n_{1}n_{2}\ c_{\{w_{j}\}} \int_\La^{\Lao} d\La' \  \La'^{D-2n_{1}-|w_1|+4-2n_{2}-|w_2|}
\ K^{(4n_1+8l_1-3+2n_2+4l_2-4)(|w_1|+D'-D+|w_2|+1)}  \\
&\times  K^{D'(n_1+2l_1)^3}
 \sqrt{
|v_{1}|!\ |v_{2}|!}\ \frac{\sqrt{(|w_{1}|+D'-D)!}}{|x|^{D'-D}}
\sum_{\mu=0}^{d(2,n_1,l_1,w_1,D')}\frac{1}{\sqrt{\mu!}}\
 (\frac{|\vec p|}{\La'})^{\mu}\\
&\times\ \sum_{\lambda_1=0}^{2l_{1}+n_{1}}
\frac{\log^{\lambda_1}\bigr(\sup({|\vec{p}|
\over \ka '},{\ka '\over m})\bigl)}{2^{\lambda_1}\,\lambda_1!}
\frac{2}{\La'^3}\ |\pa^{w_3} \  \e ^{-\frac{q^2+m^2}{\La '^2}}|\
\sqrt{|w_2|!}\
(n_2 + l_2- 2)!\
\sum_{\lambda_2=0}^{l_{2}}
\frac{\log^{\lambda_2}
\bigr(\sup({|\vec{p}|\over \ka '},{\ka'\over m})\bigl)}
{2^{\lambda_2}\,\lambda_2!}
\end{split}
\een
We use the inequality $2l_{1}+n_{1}\leq 2l+n-1$, which holds due to the fact that the CAG without insertion vanishes unless $n_{2}+2l_{2}\geq 2$, to obtain
\ben\label{eqlhs}
\begin{split}
&\text{''r.h.s. of \eqref{487}''}\leq K^{D'(2l+n-1)^{3}}\ K^{(4n+8l-7)(|w_{1}|+|w_{2}|+D'-D+1)}\!\!\!\sum_{\substack{l_{1}+l_{2}=l \\ n_{1}+n_{2}=n+1\\ w_{1}+w_{2}+w_{3}=w }}\!\!\!\! 4n_{1}n_{2}\ c_{\{w_{j}\}}\ (n_{2}+l_{2}-2)!\\
&\times \sqrt{|v_{1}|!|v_{2}|!}\ \frac{\sqrt{(|w_{1}|+D'-D)! |w_{	2}|!}}{|x|^{D'-D}}\ \int_{\Lambda}^{\Lambda_{0}}\d\Lambda' \Lambda'^{D-2n-|w_{1}|-|w_{2}|-1}\ 2|\pa^{w_3} \  \e ^{-\frac{q^2+m^2}{\La '^2}}|\\
&\times \sum_{\mu=0}^{d(2,n_1,l_1,w_1,D')}\frac{1}{\sqrt{\mu!}}\
 (\frac{|\vec p|}{\La'})^{\mu}
  \sum_{\lambda_{1}=0}^{2l_{1}+n_{1}}\sum_{\lambda_{2}=0}^{l_{2}}\frac{\log^{\lambda_1+\lambda_{2}}\bigr(\sup({|\vec{p}|
\over \ka '},{\ka '\over m})\bigl)}{2^{\lambda_1+\lambda_{2}}\,(\lambda_1+\lambda_{2})!}\ \frac{(\lambda_1+\lambda_{2})!}{\lambda_1!\ \lambda_{2}!} \ .
\end{split}
\een
Then, using the bound \cite{Hollands:2011gf, sansone}
\ben
\Bigl| \pa ^{w} \e^{-\frac{q^2+m^2}{\La ^2}} | \ \le\
k\ \La ^{-|w|}\ \sqrt{|w|!}\ 2^{\frac{|w|}{2}}\ \e^{-\frac{q^2}{2\La ^2}}\
 \e^{-\frac{m^2}{\La ^2}} \quad, \,k=1.086\ldots
\label{w}
\een
and the identity $\lambda_1 + \lambda_2 \le 2l_{1}+n_{1}+l_{2}\leq 2l+n$ (since
$n_1+n_2=n+1$ and the right side of the above bound is $=0$ for $n_2=0$), we find the bound
\ben
\begin{split}
\text{''r.h.s. of \eqref{eqlhs}''}\leq& \,  \sum_{\begin{array}{c}\\[-.8cm]_{l_1+l_2=l},\\[-.2cm]
_{n_1+n_2=n+1}
 \end{array} }\!\!\!\!
(n_2+l_2)!\, 4n_1 (2l+n)  2^{2l+n} K^{(4n+8l-7)(|w|+D'-D+1)} K^{D'(n+2l-1)^3}\\
\times&\sum_{w_i}
 c_{\{w_i\}}\, 2^{\frac12 |w_3|}\ k
\frac{\sqrt{(|w_{1}|+D'-D)!}}{|x|^{D'-D}}
 \sqrt{ |w_3|!\, |w_2|!\ |v_{1}|!\ |v_{2}|!}\\
\times&\sum_{\mu=0}^{d(2,n_1,l_1,w_1,D')}\!\!\!\!\frac{1}{\sqrt{\mu!}}\
 \int_\La^{\Lao} d\La' \  \La'^{D-2n-|w|-1} (\frac{|\vec p|}{\La})^{\mu}
\sum_{\lambda=0}^{2l+n} \frac{\log^{\lambda}
\bigr(\sup({|\vec{p}|\over \ka '},{\ka '\over m})\bigl)}{2^{\lambda}\,\lambda!}\ .
\end{split}
\een
Now the $\Lambda'$ integral can again be estimated using the inequality \eqref{Lambdaint}.  Noting that also
\ben\label{dprop2}
d(2,n_1,l_1,w_1,D')\leq d(2,n,l,w,D')
\een
which holds as a result of the inequality $2(n_{1}+l_{1})\leq 2(n+l)-2$
(using $n_1+n_2=n+1$, and $n_2+l_2 \ge 2$), and using also
$\sum c_{\{w_i\}} 2^{|w_3|/2} = (2+\sqrt{2})^{|w|}$, one obtains the bound claimed in \eqref{boundCAG2} multiplied by $1/8 \ \mathcal{A}$,  which was defined in eq.\eqref{Amult}, provided that $K$ is chosen large enough that
\ben
K^{-D'(n+2l)(n+2l-1)+(4n+8l-7)-(|w|+D'-D)}\, 5k \!\!\!\!\!\!\!\!\!\!
\sum_{\begin{array}{c}\\[-.8cm]_{l_1+l_2=l}\\[-.2cm]
_{n_1+n_2=n+1}
 \end{array} } \!\!\!\!\!\!
(n_2+l_2)!\,  4\,n_1\, (2l+n)\ 2^{2l+n}\ (2+\sqrt{2})^{|w|}
  \le
\frac18
\een
To see that such a $K$ can always be found, it is helpful
to note the inequality $(n+2l)(n+2l-1)\geq 4n+8l-6$ .

\paragraph*{Third term on the r.h.s. of the flow equation (source term):}

Note that this term is a momentum integral over the  CAG's with one insertion, which have been bounded already, see \eqref{boundCAG1}. This integral obeys the following bound:

\begin{lemma}\label{lemmakint2N2}
Let
   $n_{1}+n_{2}=n+1$ and $l=l_{1}+l_{2}$. Then we have for any $D\leq D'$
\ben\label{eq:}
\begin{split}
&\left| \partial^{w}_{\vec{p}}\int_{k}\, \Lsc_{ 2n_{1},l_{1}}(\O_{A_{1}}(x);k,p_{1},\ldots, p_{2n_{1}-1})\, \dot{C}^{\Lambda}(k)\,\Lsc_{ 2n_{2},l_{2}}(\O_{A_{2}}(0);-k,p_{2n_{1}},\ldots, p_{2n}) \right|\\
\leq&\quad  M\, K_{0}^{(4n+8l-4)(|w|+D'-D)} K_{0}^{D'(n+2l)^{3}}\, \sqrt{|v_1|!|v_2|!}\, \Lambda^{D-2n-|w|-1}\, \e^{-m^{2}/\Lambda^{2}}\\
&\times\frac{\sqrt{(|w|+D'-D)!}}{|x|^{D'-D}} \sum_{\mu=0}^{{d}(N=2,n,l,w,D')}\frac{1}{\sqrt{\mu}!}\, \left(\frac{|\vec{p}|}{\Lambda}\right)^{\mu}\, \sum_{\lambda=0}^{2l+n-1}\frac{\log^{\lambda}(\sup(\frac{|\vec{p}|}{\kappa}), \frac{\kappa}{m})+\sqrt{\lambda!}}{2^{\lambda}\lambda!}
\end{split}
\een
where $M=5^{|w|+D'-D}\, 2^{2l+n+1}2^{2d}(2l+n+1)\ d\ 2^{[|w|+(D'-D)]/2}$ and where $K_{0}$ is the constant (called $K$ there) appearing in our bound on the CAG's with one insertion, see \eqref{boundCAG1}.
\end{lemma}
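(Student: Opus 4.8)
Here is how I would approach the proof of Lemma~\ref{lemmakint2N2}.

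The plan is to bound the source term at fixed $\Lambda$ by substituting the known one-insertion bound~\eqref{boundCAG1} into the two factors and the Gaussian form~\eqref{cdotest} of $\dot C^{\Lambda}$, and to manufacture the singular prefactor $|x|^{-(D'-D)}$ by integrating by parts in the loop momentum $k$. First I would use translation invariance~\eqref{CAGtrans} to move the entire $x$-dependence of the first factor into a phase,
\[
\Lsc_{2n_1,l_1}(\O_{A_1}(x); k, p_1,\ldots,p_{2n_1-1}) = \e^{ix(k+P_1)}\,\Lsc_{2n_1,l_1}(\O_{A_1}(0); k, p_1,\ldots,p_{2n_1-1}) \ ,
\]
with $P_1:=p_1+\cdots+p_{2n_1-1}$, so that the whole $x$-dependence of the integrand sits in $\e^{ix(k+P_1)}$. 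Since $\dot C^{\Lambda}$ depends only on $k$, the external derivatives factorize as $w=w'+w''$ acting on the two factors separately, except for the Leibniz terms in which $\partial^{w'}$ hits the external phase $\e^{ixP_1}$ and produces polynomial factors $(ix)^{\bar a}$ with $|\bar a|=|a|\le |w'|$.

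The crucial device is as follows. Choosing the coordinate direction $\mu_0$ in which $|x_{\mu_0}|$ is largest, one has $|x_{\mu_0}|\ge|x|/2$, and, because $P_1$ is $k$-independent,
\[
x_{\mu_0}^{N}\,\e^{ix(k+P_1)} = (-i\partial_{k_{\mu_0}})^{N}\,\e^{ix(k+P_1)} \ .
\]
For each Leibniz term carrying an external-phase factor $(ix)^{\bar a}$ I would integrate by parts $N=D'-D+|a|$ times in $k_{\mu_0}$; the boundary terms vanish by the Gaussian decay of $\dot C^{\Lambda}$. This replaces $x_{\mu_0}^{D'-D+|a|}$ by $\partial_{k_{\mu_0}}^{D'-D+|a|}$ acting (via the Leibniz rule, with weights $c_{\{v_i\}}$) on the product $\partial^{b}F_1\cdot\dot C^{\Lambda}\cdot\partial^{w''}F_2$, and the net power of $x$ becomes $|x|^{|a|}/|x_{\mu_0}|^{D'-D+|a|}\le 2^{D'-D+|a|}|x|^{-(D'-D)}$, uniformly in $a$. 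The key bookkeeping point is that, whatever $a$ is, the total number of momentum derivatives distributed over the three factors is
\[
|b|+|w''|+(D'-D+|a|) = (|w'|-|a|)+|w''|+(D'-D+|a|) = |w|+D'-D \ ,
\]
independent of $a$. This is what makes the power of $\Lambda$ and the factorial $\sqrt{(|w|+D'-D)!}$ come out uniformly.

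With this in place the remaining steps are routine substitute-and-count. I would insert~\eqref{boundCAG1} for the derivatives of $F_1,F_2$ and the Gaussian bound~\eqref{w} for the derivatives of $\dot C^{\Lambda}$. Counting engineering-dimension powers of $\Lambda$ gives $[A_1]+[A_2]-2n_1-2n_2-(|w|+D'-D)-3+4=D-2n-|w|-1$, reproducing $\Lambda^{D-2n-|w|-1}\e^{-m^2/\Lambda^2}$; the two $K_0$-exponents add up using $n_1+n_2=n+1$, $l_1+l_2=l$ and $[A_1]+[A_2]=D'$; and $\sqrt{|v_1|!\,|v_2|!}$ comes straight from~\eqref{boundCAG1}. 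The remaining $k$-integral, now carrying $\e^{-k^2/\Lambda^2}$ against powers of the $k$-dependent norms $|\vec p|_{2n_1},|\vec p|_{2n_2}$, is estimated exactly as in~\eqref{kint}, which both converts these into the external $|\vec p|/\Lambda$ and produces the additive $\sqrt{\lambda!}$ in the logarithmic sum; the two log-sums of lengths $2l_1+n_1-1$ and $2l_2+n_2-1$ combine into one sum up to $2l+n-1$, and the length bound $d(1,n_i,l_i,\ldots)\le d(2,n,l,w,D')$ on the $\mu$-sums is checked as in~\eqref{dprop2}. All numerical prefactors — the $2^{\cdot}$ from~\eqref{w} and~\eqref{kint}, the $5^{\cdot}$ and $2^{(|w|+D'-D)/2}$ from the Leibniz weights and the direction factor, and the lengths $(2l+n+1)$ and $d$ of the $\lambda$- and $\mu$-sums — collect into $M$.

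The hard part is precisely the interplay between the external phase and the integration by parts: extracting only $D'-D$ powers of $x$ is not enough, because the Leibniz terms in which $\partial^{w'}$ differentiates $\e^{ixP_1}$ generate factors $|x|^{|a|}$ that would destroy the claimed uniform behaviour $|x|^{-(D'-D)}$ (indeed producing growth when $|a|>D'-D$). Matching the number of integration-by-parts derivatives to $D'-D+|a|$ term-by-term is what both cures this and, through the displayed identity, pins the total derivative count at $|w|+D'-D$, so that the $\Lambda$-power and the factorial are unaffected. Everything else parallels the estimates already carried out for the first two terms of the flow equation.
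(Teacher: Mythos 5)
Your proposal is correct and follows essentially the same route as the paper's proof: translation covariance to push the $x$-dependence into a phase, integration by parts in the loop momentum (with boundary terms killed by the Gaussian decay of $\dot C^{\Lambda}$) to trade both the extracted factor $\|x\|^{D'-D}$ and the Leibniz terms hitting the external phase for $k$-derivatives, and then substitution of the one-insertion bounds together with the standard $k$-integral and $\Lambda$-power bookkeeping. The one point you emphasize as "the hard part" -- matching the number of partial integrations to $D'-D+|a|$ so that the total derivative count is pinned at $|w|+D'-D$ -- is exactly the mechanism in the paper (there the $x^{w_3}$ Leibniz factor is converted componentwise into $\partial_k^{w_3}$ and only the additional $(2\|x\|)^{D'-D}$ along a single maximal direction), so the two arguments differ only in trivial bookkeeping.
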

\begin{proof}
 Note that we are bounding the same integral as in lemma 3.2 in \cite{Hollands:2011gf}. Comparing both estimates one finds that in our bound we have effectively traded powers of $\Lambda$ for inverse powers of $|x|$.
This is achieved as follows: Using  the translation covariance property of the CAG's [see discussion preceding eq.\eqref{CAGtrans}], pulling momentum derivatives into the integral and performing partial integrations (the integrand decays sufficiently rapidly for large $k$, due to the exponential damping factor in $\dot{C}^{\Lambda}$) %According to the bounds stated in eq.\eqref{boundCAG1} the CAG's with one insertion only grow polynomially in $k$.} 
we obtain (cf. \cite{Keller:1992by} A.2 p.275)
\ben\label{lemmakint2PI}
\begin{split}
&\left| \partial^{w}_{\vec{p}}\int_{k}\, \Lsc_{ 2n_{1},l_{1}}(\O_{A_{1}}(x);k,p_{1},\ldots, p_{2n_{1}-1})\, \dot{C}^{\Lambda}(k)\,\Lsc_{ 2n_{2},l_{2}}(\O_{A_{2}}(0);-k,p_{2n_{1}},\ldots, p_{2n}) \right|\\
\leq&\,\Big|\int_{k} \sum_{w_{1}+w_{2}+w_{3}=w} c_{\{w_{i}\}} \  \e^{ikx} \ \e^{ix(p_{1}+\ldots+p_{2n_{1}-1})}  \\
&\qquad  \partial_{k}^{w_{3}}\Big( \partial_{\vec{p}}^{w_{1}} \Lsc_{ 2n_{1},l_{1}}(\O_{A_{1}}(0);k,p_{1},\ldots, p_{2n_{1}-1}) \dot{C}^{\Lambda}(k)\, \partial_{\vec{p}}^{w_{2}}\Lsc_{ 2n_{2},l_{2}}(\O_{A_{2}}(0);-k,p_{2n_{1}},\ldots, p_{2n}) \Big) \Big|
\end{split}
\een
Denote by $\|x\|=\max_{\alpha\in\{1,\ldots, 4\}} |x_{\alpha}|$ the maximal component of $x$. We can multiply both sides of \eqref{lemmakint2PI} by $(2\|x\|)^{D'-D}$, pull this factor into the integral, and we
use the elementary fact that $\|x\|^{D'-D} \ e^{ikx}$  can be written as a $(D-D')$-fold $k$-derivative on $\e^{ikx}$ with respect to a suitable component of the 4-vector $k$. That $k$-derivative may then be moved onto the term
in the last line of \eqref{lemmakint2PI} via a partial integration. The resulting expression
can now be bounded just as in lemma 3.2 in~\cite{Hollands:2011gf} by substituting the bounds for the CAG's with one insertion given in that paper. After some bookkeeping we obtain the following bound
for the right side of \eqref{lemmakint2PI}
\ben\label{lemmakintsubst}
\begin{split}
&\eqref{lemmakint2PI} \ \times \ (2\|x\|)^{D'-D} \le M_{0}\, K_{0}^{(4n+8l-4)(|w|+D'-D)} K_{0}^{D'(n+2l)^{3}}\, \sqrt{(|w|+D'-D)!|v_1|!|v_2|!}  \\ & \times\, \Lambda^{D-2n-|w|-5}\ \e^{-m^{2}/\Lambda^2} \
\int_{k} \e^{-k^{2}/2\Lambda^{2}} \sum_{\mu=0}^{d_{1}+d_{2}}\frac{1}{\sqrt{\mu}!}\, \left(\frac{|\vec{p}|+|k|}{\Lambda}\right)^{\mu}\, \sum_{\lambda=0}^{2l+n-1}\frac{\log^{\lambda}(\sup(\frac{|\vec{p}|+|k|}{\kappa}), \frac{\kappa}{m})}{2^{\lambda}\lambda!}
\end{split}
\een
where $M_{0}=5^{|w|+D'-D}\, 2^{2l+n}2^{d_{1}+d_{2}}(2l+n)(d_{1}+d_{2})2^{[|w|+D'-D]/2}$ and where we used the shorthand $d_{i}=2[A_{i}](n_{i}+l_{i})+\sup([A_{i}]+1-2n_{i}-|w_{i}|, 0)$. We would now like to replace $d_{1}+d_{2}$ by $d(2,n,l,w,D')$ in the summation index. Here it is  crucial that $d$ satisfies the property
\ben\label{dreplace}
d(2,n,l,w,D')\geq d_{1}+d_{2}
\een
with $d_{i}=2[A_{i}](n_{i}+l_{i})+\sup([A_{i}]+1-2n_{i}-|w_{i}|, 0)$ and where $n_{1}+n_{2}=n+1, l_{1}+l_{2}=l, w_{1}+w_{2}=w$ and $[A_{1}]+[A_{2}]=D'$.
%\end{enumerate}
Substituting the definition of $d$ given in the statement of theorem \ref{thmCAG2} we indeed find
\ben
d(2,n,l,w,D')= 2D'(n+l+2)+\sup(D'+1-2n-|w|, 0)\geq 2D'(n+l+1)+2D'\geq d_{1}+d_{2}
\een
for arbitrary $w$. The momentum integral can then be bounded by an application of the inequality \eqref{kint}.
%
% using the formula
%
\ben\label{momentumint}
\begin{split}
&\sum_{\mu=0}^{{d}}\frac{1}{\sqrt{\mu !}}\int_{k/\Lambda}\e^{-|k|^{2}/2\Lambda^{2}}\,  \left(\frac{|\vec{p}|+|k|}{\Lambda}\right)^{\mu}\sum_{\lambda=0}^{2l+n-1}\frac{\log^{\lambda}(\sup(\frac{|\vec{p}|+|k|}{\kappa}, \frac{\kappa}{m}))}{2^{\lambda}\lambda!}\\
\leq\quad& 2^{{d}}\sum_{\mu=0}^{{d}}\frac{1}{\sqrt{\mu !}}\,  \left(\frac{|\vec{p}|}{\Lambda}\right)^{\mu}\sum_{\lambda=0}^{2l+n-1}\frac{\log^{\lambda}(\sup(\frac{|\vec{p}|}{\kappa}, \frac{\kappa}{m}))+\sqrt{\lambda!}}{2^{\lambda}\lambda!}
\end{split}
\een
%for the momentum integral.
Finally, we divide again by $(2\|x\|)^{D'-D}$ and note the elementary inequality
\ben\label{normsinq}
|x|=\sqrt{x_{1}^{2}+\ldots+x_{4}^{2}}\leq  2\|x\|
\een
to obtain the bound claimed in the lemma.
\end{proof}
Now, to obtain a bound for the third term in the FE we have to integrate the bound derived in lemma \ref{lemmakint2N2} over $\Lambda' $ between $\Lambda_{0}$ and $\Lambda$. Using the inequality \eqref{Lambdaint} for the $\Lambda'$ integral, we obtain the bound
\ben\label{sourceAbound}
\begin{split}
&\Bigg| \int_\La^{\Lao} d\La' \partial_{\vec p}^w \int_k {\cal
  L}^{\La',\Lao}_{2n_1,l_1}(\O_{A_{1}}; k, p_1,\ldots,p_{2n_1-1})\,
{\dot C}^{\La'}(k)\,\,
{\cal L}^{\La',\Lao}_{2n_2,l_2}(\O_{A_{2}}; -k, p_{2n_1},\ldots,p_{2n})
\Bigg|\\
&\leq  M\ K_0^{(4n+8l-4)(|w|+D'-D)}\
 K_0^{D'(n+2l)^3 }\ \frac{\sqrt{(|w|+D'-D)!}}{|x|^{D'-D}} \ \sqrt{\ |v_{1}|! \ |v_{2}|!}   \\
&\times \ \int^{\Lao}_\La
 d\La' \La'^{-1+D-2n-|w|}
\ \e^{-m^2/\Lambda'^2} \sum_{\mu=0}^{d(2, n,l,w,D')}\frac{1}{\sqrt{\mu!}}\
 (\frac{|\vec p|}{\La'})^{\mu}\,
\sum_{\la=0}^{2l+n-1}
\frac{\log^{\lambda}(\sup(\frac{|\vec{p}|}{\kappa'},\frac{\kappa'}{m}))
+ \sqrt{\lambda !}}
{2^{\lambda}\,\lambda!}\\
&\le 5\ M\ K_0^{(4n+8l-4)(|w|+D'-D)}\
 K_0^{D'(n+2l)^3 }\ \frac{\sqrt{(|w|+D'-D)!}}{|x|^{D'-D}} \ \sqrt{ |v_{1}|! \ |v_{2}|!}  \\
 &\times \ \La^{D-2n-|w|} \ \sum_{\mu=0}^{d(2, n,l,w,D')}\frac{1}{\sqrt{\mu!}}\
 (\frac{|\vec p|}{\La})^{\mu}\,
\sum_{\lambda=0}^{2l+n-1}
\frac{\log^{\lambda}(\sup(\frac{|\vec{p}|}{\kappa},\frac{\kappa}{m}))}
{2^{\lambda}\,\lambda!}
\end{split}
\een
where $K_{0}$ is the constant from lemma \ref{lemmakint2N2}.
Recall from the FE, eq.\eqref{FE2}, that this term is to be multiplied by $4n_{1}n_{2}$, and we also have to apply the operator $\mathbb{S}$ and sum over the configurations $n_{1}+n_{2}=n+1$ and $l_{1}+l_{2}=l$. In total we find that the inductive bound is reproduced under the condition that $K$ satisfies
\ben
20 (l+1)(n+1)^3 \,M\, K_0^{(4n+8l-4)(|w|+D'-D)}\, K_0^{D'(n+2l)^3} \le \frac{1}{8} K^{(4n+8l-3)(|w|+D'-D)}\, K^{D'(n+2l)^3}\ .
\een
Here it is useful to note that both $(2l+n)$ and $(4n+8l-3)$ are always positive (since $n+l\geq 1$), which helps one to see that the inequality can be satisfied by making $K$ large enough.

\noindent \textbf{(b) Relevant terms ($2n+|w|\leq D$) at $\vec{p}=\vec{0}$:}

As the boundary conditions for the relevant terms are given at zero momentum, we will first derive bounds for $\vec{p}=\vec{0}$ and then proceed to arbitrary momentum with the help of the Taylor expansion formula
\ben\label{reltermstaylor}
\partial_{\vec p }^w\ f_{2n}(\vec p)\,=\!\!\!\!
\!\!\sum_{|\tilde{w}| \le D-2n-|w|}\! \frac{{\vec p}^{\tilde{w}}}{\tilde{w}!}\,
[\partial_{\vec p }^{\tilde{w}+w} f_{2n}](0)\, +\!\!\!
\!\!\!\!\sum_{|\tilde{w}|=D+1-2n-|w|}\!\!\!\!\!\! {\vec p}^{\tilde{w}}
\int_0^1 d\tau  \frac{(1-\tau)^{|\tilde{w}|-1}}{(|\tilde{w}|-1)!}\,
[\partial_{\vec p }^{\tilde{w}+w} f_{2n}](\tau {\vec p})\ .
\een

\paragraph*{First term on the r.h.s. of the FE:}
In view of eq.\eqref{reltermstaylor}, let us consider the first term on the r.h.s. of the FE with momentum derivatives $\partial_{\vec{p}}^{\tilde{w}+w}$ and with $2n+|\tilde{w}+w|\leq D$ at zero momentum. Integrating over $\Lambda'$ from $0$ to $\Lambda$ we find
 \ben\label{1strel0}
 \begin{split}
 &\Big|\int_{0}^{\Lambda}\d\Lambda' \left(\atop{2n+2}{2}\right)\int\d^{4}k \ \dot{C}^{\Lambda'}(k) \, \partial_{\vec{p}}^{w+\tilde{w}}\L_{D,2n+2,l-1}^{\Lambda',\Lambda_{0} }(\O_{A_{1}}\otimes\O_{A_{2}}; k,-k, 0,\ldots, 0)\Big| \\
&\leq  \left(\atop{2n+2}{2} \right)\
K^{(4n+8l-7)(|\tilde{w}+w|+D'-D)} \, K^{D'(n+2l-1)^3}\ \sqrt{|v_{1}|!|v_{2}|!}\
\int_0^{\Lambda} d\Lambda'\  \Lambda'^{D-(2n+2)-|\tilde{w}+w|}\ \frac{2}{\Lambda'^3}\\
&\times \frac{\sqrt{(|w+\tilde{w}|+D'-D)!}}{|x|^{D'-D}}
\int\d^{4}k\,
\e^{-{{k^2+m^2}\over \Lambda'^2}}
\sum_{\mu=0}^{d(2,n,l,\tilde{w}+w, D')} \frac{1}{\sqrt{\mu!}}
(\frac{|k|}{\Lambda'})^{\mu}\,\sum_{\lambda=0}^{2l+n-1}
\frac{\log^{\lambda}(\sup(\frac{|k|}{\kappa'},\frac{\kappa '}{m}))}
{2^{\lambda}\,\lambda!}\\
&\leq  \left(\atop{2n+2}{2} \right)
K^{(4n+8l-7)(|\tilde{w}+w|+D'-D)} \, K^{D'(n+2l-1)^3}\frac{ \sqrt{|v_{1}|!|v_{2}|!\,( |\tilde{w}+w|+D'-D)!} }{|x|^{D'-D}}  \sum_{\mu=0}^{d(2,n,l,\tilde{w}+w,D')}\\
&\times\sum_{\lambda=0}^{2l+n-1}\frac{2}{2^{\lambda}\lambda!} \int_0^{\Lambda} d\Lambda'\  \Lambda'^{D-(2n+1)-|\tilde{w}+w|}e^{-\frac{m^{2}}{\Lambda'^{2}}}\int\d^{4}(k/\Lambda')\, \frac{1}{\sqrt{\mu!}}(\frac{|k|}{\Lambda'})^{\mu}\,\log^{\lambda}(\sup(\frac{|k|}{\kappa'},\frac{\kappa '}{m})) e^{-\frac{k^{2}}{\Lambda'^{2}}}
 \end{split}
 \een
The momentum integral in the last line can be estimated by (cf. the inequality (76) in \cite{Hollands:2011gf})
\ben
\int\d^{4}(k/\Lambda')\, \frac{1}{\sqrt{\mu!}}(\frac{|k|}{\Lambda'})^{\mu}\,\log^{\lambda}(\sup(\frac{|k|}{\kappa'},\frac{\kappa '}{m})) e^{-\frac{k^{2}}{\Lambda'^{2}}}
\leq 
2^{\frac{\mu}{2}}\ \frac{1}{\sqrt{\mu!}}\
(\frac{\mu}{2})!\
[\log^\lambda (\frac{\kappa'}{m}) \,+\,(\lambda!)^{1/2}] 
\een
 and for the subsequent sum over $\mu$ we can use the bound (cf. inequality (88) in \cite{Hollands:2011gf})
 \ben
 \sum_{\mu=0}^{d} 2^{\frac{\mu}{2}}\ \frac{1}{\sqrt{\mu!}}(\frac{\mu}{2})!
 \, \leq\, 2\, d^{3/2}\, .
 \een
For the $\Lambda'$ integral we make use of (cf. inequality (89) in \cite{Hollands:2011gf})
\ben\label{lambdaint}
\begin{split}
&\int_0^{\Lambda} \d\Lambda'\ \Lambda'^{D-(2n+1)-|\tilde{w}+w|}\ \log^\lambda (\frac{\kappa'}{m})\,
\e^{-{{m^2}\over \Lambda'^2}}  \\
&\leq\, \Lambda^{D-2n-|\tilde{w}+w|}\begin{cases} \log^\lambda(\frac{\ka}{m}) & \text{if $D-2n-|\tilde{w}+w|>0$,}\\
2 (\lambda+1)^{-1}\log^{\lambda+1}(\frac{\ka}{m}) & \text{if $D-2n-|\tilde{w}+w|=0$.}
\end{cases}
\end{split}
\een
Using these bounds to estimate the r.h.s. of \eqref{1strel0} shows that this contribution satisfies the inductive bound, formula \eqref{boundCAG2}, multiplied by $1/8 \ \mathcal{A}$, defined in eq.\eqref{Amult}, provided that $K$ is chosen such that
\ben\label{beforeTaylor}
12 \ \left(\atop{ 2n+2}{ 2} \right)\ d(2,n,l,\tilde{w}+w,D')^{3/2}\
  K^{-D'(n+2l)(n+2l-1)-(|w+\tilde{w}|+D'-D)}\
\le\ 1/8\ .
\een
It can be seen that this is indeed possible to find such a $K$. This is easy to see for large values of $n+l$. To see that it is also true for small $n+l$, it is useful to recall that for the first term on the r.h.s. of the FE, \eqref{FE2}, we can assume $l\geq 1$, so $K$ will always have a negative exponent.

\paragraph*{Second term on the r.h.s of the FE:}

Inserting the induction hypothesis for the CAG's with two insertions and the known bounds for the CAG's without insertion, formula \eqref{propout}, \eqref{prop40} and \eqref{prop20}, yields
\ben
\begin{split}
&\Big|\int_{0}^{\Lambda}\d\Lambda'\sum_{\substack{l_{1}+l_{2}=l \\ n_{1}+n_{2}=n+1\\ w_{1}+w_{2}+w_{3}=w+\tilde{w} }}\!\!\!\!\! 4n_{1}n_{2}\  c_{\{w_{j}\}} \\
&\qquad \times\ \partial_{\vec{p}}^{w_{1}}\L^{\Lambda',\Lambda_{0}}_{D,2n_{1},l_{1}}(\O_{A_{1}}
\otimes\O_{A_{2}}; q,  p_{1},\ldots,p_{2n_{1}-1})\,  \partial_{\vec{p}}^{w_{3}}\dot{C}^{\Lambda'}(q)\,   \partial_{\vec{p}}^{w_{2}}\L^{\Lambda',\Lambda_{0}}_{2n_{2},l_{2}}(  p_{2n_{1}},\ldots,p_{2n}) \Big|_{\vec{p}=\vec{0}} \\
&\leq\sum_{\begin{array}{c}\\[-.8cm]_{l_1+l_2=l,}\\[-.2cm]
_{w_1+w_2+w_3=w+\tilde{w},}\\[-.2cm]
_{n_1+n_2=n+1} \end{array} }\!\!\!\!\!\!\!\!\!\!\!\!
4\, n_1 n_2\,\int_0^\La d\La' \  \La'^{D+4-2n-2-|w_1|-|w_2|}
\ K^{(4n_1+8l_1-3+2n_2+4l_2-4)(|w_1|+D'-D+|w_2|+1)}   \\
&\qquad\times\ K^{D'(n_1+2l_1)^3}\ \sqrt{|v_{1}|!\ |v_{2}|!}\ \frac{\sqrt{(|w_{1}|+D'-D)!}}{|x|^{D'-D}}\
c_{\{w_i\}}\,
\sum_{\lambda_1=0}^{2l_{1}+n_{1}}
\frac{\log^{\lambda_1}({\ka '\over m})}{2^{\lambda_1}\,\lambda_1!}\\
&\qquad\times\ \frac{2}{\La'^3}\
\bigg|\pa^{w_3} \  \e^{-\frac{q^2+m^2}{\La '^2}}\bigg|_{q=0}\
\sqrt{|w_2|!}\
(n_2 + l_2- 2)!
\sum_{\lambda_2=0}^{l_{2}}
\frac{\log^{\lambda_2}({\ka'\over m})}
{2^{\lambda_2}\,\lambda_2!}
\end{split}
\een
We again use the inequality \eqref{w} and proceed as in \eqref{lambdaint} to estimate the $\Lambda'$ integral. As a result we arrive at the bound \eqref{boundCAG2} multiplied by $1/8 \ \mathcal{A}$, under the condition that $K$ satisfies the lower bound
\ben
\begin{split}
&6\,
K^{-D'(n+2l)(n+2l-1)+(4n+8l-7)-(|w+\tilde{w}|+D'-D)}\, 5k \\
&\times \sum_{\begin{array}{c}\\[-.8cm]_{l_1+l_2=l,}\\[-.2cm]
_{n_1+n_2=n+1}
 \end{array} } \!\!\!\!\!\!
(n_2+l_2)!\,  4\,n_1\, (2l+n)\ 2^{2l+n}\
(2+\sqrt{2})^{|w+\tilde w|}
  \le
\frac18 
\end{split}
\een
The inequality $(n+2l)(n+2l-1)\geq (4n+8l-6)$ ensures that $K$ always appears with a negative exponent on the right side, which is helpful in order to see that such a $K$ can be found.

\paragraph*{Third term on the r.h.s. of the FE :}

Using lemma \ref{lemmakint2N2}, we find
\ben\label{3rdFEN2}
\begin{split}
&\Bigg| \int_0^{\La} d\La' \partial_{\vec p}^{w+\tilde{w}} \int_k {\cal
  L}^{\La',\Lao}_{2n_1,l_1}(\O_A; k, 0,\dots,0)\,
{\dot C}^{\La'}(k)\,\,
{\cal L}^{\La',\Lao}_{2n_2,l_2}(\O_B; -k, 0,\dots,0)
\Bigg|\\
&\leq\ M\ K_0^{(4n+8l-4)(|w|+|\tilde{w}|+D'-D)}\
 K_0^{D'(n+2l)^3 }\ \sqrt{ |v_{1}|! \ |v_{2}|!}\, \frac{\sqrt{(|w|+|\tilde{w}|+D'-D)!}}{|x|^{D'-D}} \\
&\times \ \ \int^{\La}_0 d\La' \La'^{D-2n-|w|-|\tilde{w}|-1} \
\e^{-m^2/\Lambda'^2}
\sum_{\lambda=0}^{2l+n-1}
\frac{\log^{\lambda}(\frac{\kappa'}{m}) + \sqrt{\lambda !}}
{2^{\lambda}\,\lambda!} \ .
\end{split}
\een
For the integral over $\Lambda'$ we use the inequality \cite{Hollands:2011gf}
\ben
\begin{split}
&\int^{\La}_0 d\La' \La'^{D-2n-|w|-|\tilde{w}|-1} \ \e^{-m^2/\Lambda'^2} \
\sum_{\la=0}^{2l+n-1} \frac{\log^\lambda(\frac{\kappa'}{m}) + \sqrt{\lambda !}}
{2^{\lambda}\,\lambda!}\\
&\le\ \La^{D-2n-|w|-|\tilde{w}|} \sum_{\la=0}^{2l+n-1} \frac{1}{{2^{\lambda}\,\lambda!}}
\begin{cases}
\frac{1}{\lambda+1} \log^{\lambda+1}(\frac{\kappa}{m}) +
\sqrt{\lambda!}\log(\frac{\kappa}{m}) + 1 & \text{if $D-2n-|w|=|\tilde{w}|$,}\\
\frac{\log^{\lambda}(\frac{\kappa}{m}) + \sqrt{\lambda !}}{D-2n-|w|-|\tilde{w}|} &
\text{if $D-2n-|w|>|\tilde{w}|$}
\end{cases}\\
&\le\ 6(2l+n) \La^{D-2n-|w|-|\tilde{w}|}
\sum_{\lambda=0}^{2l+n} \frac{\log^{\lambda}(\frac{\kappa}{m})}
{2^{\lambda}\,\lambda!}\ .
\end{split}
\een
Recalling that we have to multiply \eqref{3rdFEN2} by $4n_{1}n_{2}$ and that we also have to apply the symmetrization operator $\mathbb{S}$ and sum over the indices $n_{i}$ and $l_{i}$, we reproduce the inductive bound under the condition
\ben
\begin{split}
4 \cdot 6(l+1)(2l+n)\ (n+1)^3 M K_0^{D'(n+2l)^3}&K_0^{(4n+8l-4)(|w|+|\tilde{w}|+D'-D)}\\
&\le \frac{1}{8} K^{D'(n+2l)^3}K^{(4n+8l-3)(|w|+|\tilde{w}|+D'-D)},
\end{split}
\een
on $K$. Again, this condition can be satisfied by a sufficiently large  $K$.

\vspace{.5cm}

\noindent \textbf{(c) Relevant terms ($2n+|w|\leq D$) at arbitrary momentum:}

In order to proceed to non-zero momentum we now make use of the Taylor series.
\ben\label{Taylornonzero}
\begin{split}
&|\partial_{\vec{p}}^{w}\L^{\Lambda,\Lambda_{0}}_{D,2n,l}(\O_{A_{1}}\otimes\O_{A_{2}}; \vec{p})|\, =\, \Big|\! \!\! \!\sum_{|\tilde{w}| \le D-2n-|w|}\! \! \frac{{\vec p}^{\,\tilde{w}}}{\tilde{w}!}\,
\,\partial_{\vec p }^{\tilde{w}+w} {\cal L}_{D,2n,l}^{\Lambda,\Lambda_{0}}(\O_{A_{1}}\otimes\O_{A_{2}};{\vec 0})\\
 &+\! \!\! \!\! \!\sum_{|\tilde{w}|=D+1-2n-|w|} \! \!\! \!
{\vec p}^{\tilde{w}} \int_0^1 \frac{(1-\tau)^{|\tilde{w}|-1}}{(|\tilde{w}|-1)!}\,
\,\partial_{\tau\vec p }^{\tilde{w}+w}
{\cal L}_{D,2n,l}^{\Lambda,\Lambda_{0}}(\O_{A_{1}}\otimes\O_{A_{2}};\tau {\vec p})\ \Big|
\end{split}
\een
On the right hand side we can now use the bounds previously derived in \textbf{(a)} and \textbf{(b)}.
\ben\label{insertA}
\begin{split}
&|\partial_{\vec{p}}^{w}\L^{\Lambda,\Lambda_{0}}_{D,2n,l}(\O_{A_{1}}\otimes\O_{A_{2}}; \vec{p})| \\
&\leq\sum_{|\tilde{w}|\leq D-2n-|w|} \Lambda^{D-2n-|w|} \left(\frac{|\vec{p}|}{\Lambda}\right)^{|\tilde{w}|}\, \frac{\sqrt{|v_{1}|! |v_{2}|! (|w|+|\tilde{w}|+D'-D)!}}{\tilde{w}!\, |x|^{D'-D}}\sum_{\lambda=0}^{2l+n}\frac{\log^{\lambda}(\frac{\kappa}{m})}{2^{\lambda}\lambda!}\\
&\qquad\times\Big[ 1/4\ \mathcal{A}(D',D,n,l,w+\tilde{w})\, K^{(4n+8l-3)(|w|+|\tilde{w}|+D'-D)+D'(2l+n)^{3}}\\
&\qquad\quad + 24 M (2l+n)(l+1)(n+1)^{3} \, K_{0}^{(4n+8l-4)(|w|+|\tilde{w}|+D'-D)+D'(2l+n)^{3}}\Big]\\
&+\sum_{|\tilde{w}|= D+1-2n-|w|} \Lambda^{D-2n-|w|} \left(\frac{|\vec{p}|}{\Lambda}\right)^{|\tilde{w}|}\, \frac{\sqrt{|v_{1}|! |v_{2}|! (|w|+|\tilde{w}|+D'-D)!}}{|\tilde{w}|!\, |x|^{D'-D}}\ |\tilde{w}|\\
&\quad\times \int_{0}^{1}\d\tau (1-\tau)^{|\tilde{w}|-1} \sum_{\mu=0}^{d(2,n,l,w+\tilde{w},D')} \frac{1}{\sqrt{\mu!}}\left(\frac{\tau|\vec{p}|}{\Lambda}\right)^{\mu}\ \sum_{\lambda=0}^{2l+n}\frac{\log^{\lambda}(\sup(\frac{\tau|\vec{p}|}{\kappa}, \frac{\kappa}{m}))}{2^{\lambda}\lambda!}\\
&\qquad\times\Big[ 1/4\ \mathcal{A}(D',D,n,l,w+\tilde{w})\, K^{(4n+8l-3)(|w|+|\tilde{w}|+D'-D)+D'(2l+n)^{3}}\\
&\qquad\quad + 20 M (l+1)(n+1)^{3} \, K_{0}^{(4n+8l-4)(|w|+|\tilde{w}|+D'-D)+D'(2l+n)^{3}}\Big]\\
\end{split}
\een
Here $K_{0}$ is the constant (called $K$ there) from the bound on the CAG's with one insertion, \eqref{boundCAG1m},  and $M$ is the constant defined in the statement of lemma \ref{lemmakint2N2}.
To obtain the r.h.s. of the inequality above, we used the fact that the first and second term on the r.h.s. of the FE satisfy the bound \eqref{boundCAG2} multiplied by $1/8 \ \mathcal{A}$ in the cases \textbf{(a)} and \textbf{(b)}. Hence the expressions include the factor $\mathcal{A}$. For the  contribution from the third term in the FE we also used the bounds derived above for the cases \textbf{(a)} and \textbf{(b)}, but expressed in terms of the constant $K_{0}$ instead of $K$, see the inequalities \eqref{sourceAbound} and \eqref{3rdFEN2} and the discussion following them.

To simplify this expression, we make use of:
\ben
\sqrt{(|w|+|\tilde{w}|+D'-D)!}\leq \sqrt{|\tilde{w}|! (|w|+D'-D)!}\ 2^{(|w|+|\tilde{w}|+D'-D)/2}
\een
\ben\label{binom}
\sum_{|\tilde w|=j} \frac{|\tilde{w}|!}{\tilde{w}!}= (8n)^{j}
\een
\begin{equation}
\frac{1}{\sqrt{|\tilde{w}|!}}\sum_{\mu=0}^{d}\frac{1}{\sqrt{\mu!}}\left(\frac{|\vec{p}|}{\Lambda}\right)^{\mu+|\tilde{w}|}\leq \sum_{\mu=0}^{d+\tilde{w}}\frac{1}{\sqrt{\mu!}}\left(\frac{|\vec{p}|}{\Lambda}\right)^{\mu}\ 2^{\mu/2}
\end{equation}
\ben\label{dprop1}
d(2,n,l,\tilde{w}+w, D')+|\tilde{w}| \leq d(2,n,l,w,D') \quad \text{ for } |\tilde{w}|\leq D'-2n-|w|+1
\een
We then obtain the bound
\ben\label{insertA2}
\begin{split}
&|\partial_{\vec{p}}^{w}\L^{\Lambda,\Lambda_{0}}_{D,2n,l}(\O_{A_{1}}\otimes\O_{A_{2}}; \vec{p})| \\
&\leq \Lambda^{D-2n-|w|} \, \frac{\sqrt{|v_{1}|! |v_{2}|! (|w|+D'-D)!}}{ |x|^{D'-D}} \sum_{\mu=0}^{d(2,n,l,w,D')} \frac{1}{\sqrt{\mu!}}\left(\frac{|\vec{p}|}{\Lambda}\right)^{\mu}   \sum_{\lambda=0}^{2l+n}\frac{\log^{\lambda}(\sup(\frac{|\vec{p}|}{\kappa},\frac{\kappa}{m}))}{2^{\lambda}\lambda!}\\
&\quad\times\Big(\sum_{|\tilde{w}|\leq D-2n-|w|}  \Big[ 1/4 \  \mathcal{A}(D',D,n,l,w+\tilde{w})\, K^{(4n+8l-3)(|w|+|\tilde{w}|+D'-D)+D'(2l+n)^{3}}\\
&\quad\quad + 24 M (2l+n)(l+1)(n+1)^{3} \, K_{0}^{(4n+8l-4)(|w|+|\tilde{w}|+D'-D)+D'(2l+n)^{3}}\Big]2^{(|w|+|\tilde{w}|+D'-D)/2}\,
\frac{(8n)^{|\tilde{w}|}}{\sqrt{\tilde{w}!}} \Big)\\
&\quad\times\Big( \sum_{|\tilde{w}|\leq D+1-2n-|w|} \Big[ 1/4 \  \mathcal{A}(D',D,n,l,w+\tilde{w})\, K^{(4n+8l-3)(|w|+|\tilde{w}|+D'-D)+D'(2l+n)^{3}}\\
&\quad\quad + 20 M (l+1)(n+1)^{3} \, K_{0}^{(4n+8l-4)(|w|+|\tilde{w}|+D'-D)+D'(2l+n)^{3}}\Big] 2^{(|w|+|\tilde{w}|+D'-D)/2}\, |\tilde{w}|\, 2^{d/2}\Big) \\
\end{split}
\een
It follows, using also~\eqref{binom}, that this contribution satisfies the bound claimed in the theorem, \eqref{boundCAG2}, provided that $K$ is chosen sufficiently large such that the following two conditions are satisfied:
\ben\label{Aappl}
\begin{split}
\sum_{j \le D+1-2n-|w|}\!\!\!\!\!\!\!\!
(8n)^{j}\, j\,   2^{\frac{d(2,n,l,w,D')}{2}} &
\, 2^{\frac{j+|w|+D'-D}{2}} K^{(4n+8l-3)j} \  {\cal A}(D',D,n,l,|w|+j) \le 1
\end{split}
\een
\begin{equation}\label{eq:Aappl2}
\begin{split}
\sum_{j \le D+1-2n-|w|} &48 M (2l+n)(l+1)(n+1)^{3} \, K_{0}^{(4n+8l-4)(|w|+j+D'-D)+D'(2l+n)^{3}}\\
&\times (8n)^{j}\, j\,   2^{\frac{d(2,n,l,w,D')}{2}}
\, 2^{\frac{j+|w|+D'-D}{2}} \leq K^{(4n+8l-3)(|w|+D'-D)+D'(2l+n)^{3}}
\end{split}
\end{equation}
It can be seen that it is indeed possible to choose $K$ such that the conditions \eqref{Aappl} and \eqref{eq:Aappl2} are satisfied, which finishes the proof of theorem \ref{thmCAG2}.
\end{proof}

The following corollary is a consequence of thm.~\ref{thmCAG2}.
\begin{corollary}\label{CAG2cor}
For $\Lambda\leq m$ there exists a constant $K>0$ such that
\ben
\begin{split}
& |\partial^{w}_{\vec{p}}\L^{\Lambda,\Lambda_{0}}_{D, 2n,l}(\O_{A_{1}}(x)\otimes\O_{A_{2}}(0); \vec{p})|\leq m^{D-2n-|w|}\,  K^{(4n+8l-3)(|w|+D'-D)}\, K^{D'(n+2l)^{3}}\, \sqrt{|v_1|!|v_2|!} \\
& \hspace{2cm} \times   \frac{\sqrt{(|w|+D'-D)! (2n+|w|-D)_{+}!}}{|x|^{D'-D}} \sup\left(1, \frac{|\vec{p}|}{m}\right)^{d(2,n,l,w,D')}\, \sum_{\lambda=0}^{2l+n}\frac{\log_{+}^{\lambda}(\frac{|\vec{p}|}{m})}{2^{\lambda}\lambda!}
\end{split}
\een
with $d(N,n,l,w,D'):= 2D'(n+l+2(N-1))+\sup(D'+1-2n-|w|, 0)$ and $A_i \equiv \{n_i, v_i\}$.
\end{corollary}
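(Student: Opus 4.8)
The plan is to deduce the corollary from Theorem~\ref{thmCAG2} by integrating the flow equation~\eqref{FE2} \emph{downward} from $\Lambda=m$ to the desired value $\Lambda\le m$, rather than by trying to manipulate the $\ka$-bound of the theorem at a fixed $\Lambda$. The latter cannot work: the theorem's right side, which behaves like $\Lambda^{D-2n-|w|}\sum_{\mu}\frac{1}{\sqrt{\mu!}}(|\vec p|/\Lambda)^{\mu}$, \emph{diverges} as $\Lambda\to 0$ (the high-$\mu$ terms carry negative powers of $\Lambda$), whereas the corollary asserts a bound uniform in $\Lambda\le m$. The mechanism that tames this divergence is the Gaussian damping factor $\e^{-m^2/\Lambda'^2}$ contained in $\dot C^{\Lambda'}$, which is present in every term on the right side of~\eqref{FE2}; integrating it against negative powers of $\Lambda'$ converts them into finite powers of $m$ and produces precisely the new combinatorial factor $\sqrt{(2n+|w|-D)_+!}$. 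A pleasant feature of this route is that no fresh induction is needed: the CAG's occurring on the right side of the flow equation for $\L^{\Lambda',\Lambda_0}_{D,2n,l}$ never include $\L^{\Lambda',\Lambda_0}_{D,2n,l}$ itself, so Theorem~\ref{thmCAG2} (already established for all orders) and the one- and no-insertion bounds~\eqref{boundCAG0m},~\eqref{boundCAG1m} may be invoked directly as input.

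First I would record the bound at the single value $\Lambda=m$. There $\ka=m$, so the logarithms in~\eqref{boundCAG2} are already of the form $\log_+(|\vec p|/m)$, and the momentum sum obeys $\sum_{\mu=0}^{d}\frac{1}{\sqrt{\mu!}}(|\vec p|/m)^{\mu}\le (d+1)\,\sup(1,|\vec p|/m)^{d}$, since each summand is bounded by $\sup(1,|\vec p|/m)^{d}$. As $(2n+|w|-D)_+!\ge 1$ and the polynomial factor $(d+1)$ is dominated by $K^{D'(n+2l)^3}$ for $K$ large, the claimed bound holds at $\Lambda=m$. By the fundamental theorem of calculus, for $\Lambda\le m$ one has
\[
\partial^{w}_{\vec p}\L^{\Lambda,\Lambda_0}_{D,2n,l}=\partial^{w}_{\vec p}\L^{m,\Lambda_0}_{D,2n,l}-\int_{\Lambda}^{m}\d\Lambda'\,\partial_{\Lambda'}\partial^{w}_{\vec p}\L^{\Lambda',\Lambda_0}_{D,2n,l},
\]
and it remains to bound the integral, into which I substitute the flow equation~\eqref{FE2}. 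Because every integrand carries a factor $\e^{-m^2/\Lambda'^2}$, the $\Lambda'$-integral converges all the way down to $\Lambda'\to 0$ even though the theorem's bound for the integrand blows up there; this is exactly what makes the final estimate uniform in $\Lambda\le m$.

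For the integrand I would reuse the estimates from the proof of Theorem~\ref{thmCAG2}: the two-insertion CAG's on the right side are controlled by~\eqref{boundCAG2} with $\ka'=m$, the factors with no and with one insertion by~\eqref{boundCAG0m} and~\eqref{boundCAG1m}, the internal momentum integrals $\int_k$ exactly as in inequality~\eqref{kint}, and the logarithms recombine via $\lambda_1+\lambda_2\le 2l+n$ as in the theorem. The one genuinely new ingredient is the $\Lambda'$-integral. After carrying out the $k$-integrals, a typical term is $\int_{\Lambda}^{m}\d\Lambda'\,\Lambda'^{-s-\mu-1}\e^{-m^2/\Lambda'^2}|\vec p|^{\mu}$ with $s=2n+|w|-D$; using $\int_0^{m}\d\Lambda'\,\Lambda'^{-a-1}\e^{-m^2/\Lambda'^2}=\tfrac12 m^{-a}\Gamma(a/2)$ for $a=s+\mu>0$ (the finitely many terms with $a\le 0$ being bounded trivially by $\int_0^m\Lambda'^{a}\,\d\Lambda'$), together with $\Gamma\big(\tfrac{s+\mu}{2}\big)\lesssim\sqrt{(s+\mu)!}$ and the elementary estimate $\tfrac{(s+\mu)!}{\mu!}\le 2^{s+\mu}(s)_+!$, each such term becomes $m^{-s}(|\vec p|/m)^{\mu}\sqrt{(s)_+!}$ up to powers of $2$. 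Re-summing over $\mu\le d$ then yields $m^{D-2n-|w|}\sqrt{(2n+|w|-D)_+!}\,\sup(1,|\vec p|/m)^{d}$, the residual $2^{\mu/2}$- and $(d+1)$-type factors being absorbed into a sufficiently large $K$.

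The main obstacle is the bookkeeping in this last step: one must verify that the Gamma-function estimates reproduce \emph{exactly} the combinatorial weight $\sqrt{(2n+|w|-D)_+!}$—and no larger one—uniformly for all $\Lambda\le m$ including the limit $\Lambda\to 0$, and that after the conversion the entire momentum polynomial collapses to the single power $\sup(1,|\vec p|/m)^{d}$, with all spurious numerical factors (the $2^{\mu/2}$ above, the $(d+1)$ counting the summands, and the combinatorial prefactors inherited from~\eqref{kint}) absorbable into the powers $K^{(4n+8l-3)(|w|+D'-D)}K^{D'(n+2l)^3}$. Since each of these factors is at most exponential in $n+l$ while the $K$-powers grow cubically in $n+l$, the absorption always succeeds, which completes the proof.
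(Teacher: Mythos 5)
Your proposal is correct and follows essentially the same route as the paper: the authors likewise insert the bounds of Theorem~\ref{thmCAG2} back into the flow equation~\eqref{FE2}, integrate in $\Lambda'$ over $[0,m]$ using the damping factor $\e^{-m^{2}/\Lambda'^{2}}$, and control the negative powers of $\Lambda'$ via the Gamma-function-type estimate $\int_{0}^{m}\d\Lambda'\,\e^{-m^{2}/\Lambda'^{2}}\Lambda'^{D-2n-|w|-\mu-1}/\sqrt{\mu!}\leq 2^{(2n+|w|+\mu-D)_{+}/2}\,m^{D-2n-|w|-\mu}\,\sqrt{(2n+|w|-D)_{+}!}$, absorbing the residual powers of $2$ into a larger $K$. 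Your explicit treatment of the boundary term at $\Lambda=m$ and the collapse of the momentum sum to $\sup(1,|\vec p|/m)^{d}$ only spell out steps the paper leaves implicit.
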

\noindent {\bf Remark:} This bound implies convergence of the Taylor expansion of $\partial^{w}_{\vec{p}}\L^{\Lambda,\Lambda_{0}}_{D, 2n,l}(\O_{A_{1}}(x)\otimes\O_{A_{2}}(0))$ with respect to $x$ in a neighborhood of $x\neq 0$, for any degree of regularization $D\leq [A_{1}]+[A_{2}]$, and uniformly in $\Lambda,\Lambda_{0}$.  This can be seen from the fact that the bound grows like
\ben
 j! \Big(\sup(|\vec{p}|/m, 1)^{2(n+l+2)+1} \tilde{K}/|x|\Big)^{j}
\een
if we take $j$ derivatives with respect to $x$ and apply the Lowenstein rule \eqref{loew2}.
Hence the CAG's with two insertions are real analytic in $x$ for $x \neq 0$, and similarly
for the OPE coefficients, related to them by defn.~\eqref{defOPE}.

\begin{proof}
We can insert the bounds from theorem \ref{thmCAG2} into the FE \eqref{FE2} once more and integrate over $\Lambda$ from $0$ to $m$. Note that there is a damping factor $\exp(-m^{2}/\Lambda^{2})$ in each of the terms on the r.h.s. of the FE, so we can bound negative powers of $\Lambda$ through the estimate
\ben
\begin{split}
\int_{0}^{m}\d\Lambda'\, \exp(-m^{2}/\Lambda'^{2})\, \frac{\Lambda'^{D-2n-|w|-\mu-1}}{\sqrt{\mu!}}\, \leq \, m^{D-2n-|w|-\mu}\, \frac{\sqrt{(2n+|w|+\mu-D)_{+}!}}{\sqrt{\mu}!}&\\
\leq\, 2^{(2n+|w|+\mu-D)_{+}/2}\, m^{D-2n-|w|-\mu}\, \sqrt{(2n+|w|-D)_{+}!}&
\end{split}
\een
Choosing a somewhat larger constant $K$ to accommodate for the additional powers of $2$ we obtain the corollary.
\end{proof}

\subsection{CAG's with three insertions}

 We next derive a bound on the CAG's $L^{\Lambda,\Lambda_{0}}_{\bfD}(\bigotimes_{i=1}^{3}\O_{A_{i}})$ with three insertions and a set $\bfD = (D_{I} \mid I\in\I )$ of regulating parameters,  as defined in sec.~\ref{sec:regCAG}. Here $\I$ always includes the set $\{1,2,3\}$, and it can include any number of two element subsets. We thus have the choices:
 \ben
 \bfD =
 \begin{cases}
 (D_{\{1,2,3\}}, D_{\{1,2\}}) & \text{or}\\
 (D_{\{1,2,3\}}, D_{\{1,3\}}) & \text{or}\\
 (D_{\{1,2,3\}}, D_{\{2,3\}}) & \text{or}\\
(D_{\{1,2,3\}}, D_{\{1,2\}}, D_{\{2,3\}}, D_{\{1,3\}})& .
 \end{cases}
 \een
 Since the first three are obviously related simply by a permutation of the insertions, and since the regulated CAG's
 have appropriate symmetry properties (from their FE and boundary conditions) we only need to consider one, say the first, case. A formula relating different choices of the set of regulating parameters is
 \ben\label{subsum}
 L^{\La,\Lao}_{(D_{\{1,2,3\}}, D_{\{1,2\}}, D_{\{2,3\}}, D_{\{1,3\}})} =
 L^{\La,\Lao}_{(D_{\{1,2,3\}}, D_{\{1,2\}})} +
 L^{\La,\Lao}_{(D_{\{1,2,3\}}, D_{\{2,3\}})} +
 L^{\La,\Lao}_{(D_{\{1,2,3\}}, D_{\{1,3\}})} \ .
 \een
 This formula can be proved by noting that the left and right sides satisfy the same
 FE and boundary conditions.
As a consequence, the bounds that we will derive for the quantities on the right side will
imply bounds on the quantity on the left side. Our bound is:

\begin{prop}\label{propCAG3}
Let $\bfD = (D_{\{1,2,3\}}, D_{\{1,2\}})$, with
regularization parameters such that $\delta_{1}:=[A_{1}]+[A_{2}]-D_{\{1,2\}}\geq 0$, $\delta_{2}:= D_{\{1,2\}}+[A_{3}]-D_{\{1,2,3\}}\geq 0$ and $D_{\{1,2,3\}}\leq D'=[A_{1}]+[A_{2}]+[A_{3}]$.
There exists a constant $K>0$ such that
\ben
\begin{split}
& \vspace{1cm} |\L^{\Lambda,\Lambda_{0}}_{\bfD, 2n,l}(\bigotimes_{i=1}^{3}\O_{A_{i}}(x_{i}); \vec{p})|\leq\,  \, K^{(4n+8l-3)(D'-D_{\{1,2,3\}})}\, K^{D'(n+2l)^{3}}\\
& \vspace{2cm} \times \ \frac{\Lambda^{D_{\{1,2,3\}}-2n}}{|x_{1}-x_{2}|^{\delta_{1}}\,{\rm max}(|x_{1}-x_3|, |x_{2}-x_3|)^{\delta_{2}}} \\
& \times \ \sqrt{|v_1|!|v_2|!|v_3|!(D'-D_{\{1,2,3\}})!}\
\sum_{\mu=0}^{{d}(3,n,l,0,D')}\frac{1}{\sqrt{\mu!}}\left(\frac{|\vec{p}|}{\Lambda}\right)^{\mu}\, \sum_{\lambda=0}^{2l+n+1}\frac{\log^{\lambda}(\sup(\frac{|\vec{p}|}{\kappa}), \frac{\kappa}{m})}{2^{\lambda}\lambda!}
\end{split}
\een
with $d(N,n,l,w,D'):= 2D'(n+l+2(N-1))+\sup(D'+1-2n-|w|, 0)$, and  $A_{i} \equiv \{n_{i},v_i\}$.
\end{prop}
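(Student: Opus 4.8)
The plan is to integrate the flow equation \eqref{FE3reg} for $\bfD=(D_{\{1,2,3\}},D_{\{1,2\}})$ over $\Lambda$ and to bound the three types of terms on its right-hand side separately, verifying inductively that the claimed bound is reproduced. As in the proof of thm.~\ref{thmCAG2}, the induction runs upward in $n+l$ and, for fixed $n+l$, upward in $l$, and one must in fact establish the bound for a general momentum derivative $\partial^w_{\vec p}$ --- carrying an extra $\Lambda^{D_{\{1,2,3\}}-2n-|w|}$, a factor $K^{(4n+8l-3)(|w|+D'-D_{\{1,2,3\}})}$ and $\sqrt{(|w|+D'-D_{\{1,2,3\}})!}$ in place of the $w=0$ data --- the proposition being the specialization $w=0$. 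For the present choice $\I=\{\{1,2,3\},\{1,2\}\}$ the only admissible splitting in the source sum of \eqref{FE3reg} is into $I_1=\{1,2\}$ and $I_2=\{3\}$, since every other two-element subset is forbidden. After expanding in $\varphi$ the three contributions to be bounded are thus the loop term $\binom{2n+2}{2}\int_k \dot C^\Lambda(k)\,\L_{\bfD,2n+2,l-1}(\bigotimes_i\O_{A_i};k,-k,\vec p)$, the interaction term coupling $\L_{\bfD}$ to the zero-insertion CAG $\L$, and the genuine source term $\int_k \L_{D_{\{1,2\}},2n_1,l_1}(\O_{A_1}(x_1)\otimes\O_{A_2}(x_2);k,\dots)\,\dot C^\Lambda(k)\,\L_{2n_2,l_2}(\O_{A_3}(x_3);-k,\dots)$.

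For the loop and interaction terms I would follow verbatim the irrelevant/relevant case analysis of thm.~\ref{thmCAG2}: irrelevant contributions ($2n+|w|>D_{\{1,2,3\}}$) are integrated from $\Lambda$ to $\Lambda_0$ with the boundary condition \eqref{BC3b}, relevant contributions ($2n+|w|\le D_{\{1,2,3\}}$) from $0$ to $\Lambda$ at $\vec p=\vec 0$ with \eqref{BC3a}, and the passage to arbitrary momenta is then made through the Taylor formula \eqref{reltermstaylor}. Into the loop term I insert the inductive hypothesis at order $(2n+2,l-1)$, and into the interaction term the inductive hypothesis for $\L_{\bfD}$ together with the known estimates \eqref{propout}--\eqref{prop20} for the zero-insertion CAG; in both cases the spatial prefactor $|x_1-x_2|^{-\delta_1}\max(|x_1-x_3|,|x_2-x_3|)^{-\delta_2}$ is simply carried along from the inductive hypothesis, while the $\Lambda'$- and $k$-integrals are disposed of by the standard estimates \eqref{kint}, \eqref{Lambdaint} and \eqref{w}. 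The numerical constants are absorbed into $K$ exactly as before, using that $(n+2l)(n+2l-1)$ dominates the competing positive exponents.

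The heart of the matter, and the step I expect to be the main obstacle, is the source term, which demands a new integral lemma in the spirit of lemma~\ref{lemmakint2N2}. The idea is to treat the regularized two-point CAG $\L_{D_{\{1,2\}}}(\O_{A_1}(x_1)\otimes\O_{A_2}(x_2))$ as an effective single insertion of dimension $D_{\{1,2\}}$ localized in the cluster $\{x_1,x_2\}$; its bound from thm.~\ref{thmCAG2} already supplies the factor $|x_1-x_2|^{-\delta_1}$, with $\delta_1=[A_1]+[A_2]-D_{\{1,2\}}$. The remaining factor $\max(|x_1-x_3|,|x_2-x_3|)^{-\delta_2}$, where $\delta_2=D_{\{1,2\}}+[A_3]-D_{\{1,2,3\}}$, is produced just as in lemma~\ref{lemmakint2N2}: using translation covariance \eqref{CAGtrans} one exposes a phase $\e^{ik(x_j-x_3)}$ for a chosen $j\in\{1,2\}$, multiplies the integrand by $\|x_j-x_3\|^{\delta_2}$, rewrites this as a $\delta_2$-fold $k$-derivative of the phase, and integrates by parts so that these derivatives are distributed by Leibniz over $\L_{D_{\{1,2\}}}$, $\dot C^\Lambda(k)$ and $\L(\O_{A_3})$. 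The resulting momentum derivatives are controlled by the $\partial^w$-bounds of thm.~\ref{thmCAG2} and \eqref{boundCAG1}, and the $k$-integral converges through the Gaussian damping in $\dot C^\Lambda$; choosing $j$ to maximize $\|x_j-x_3\|$ and using $|x|\le 2\|x\|$ yields the stated $\max$ in the denominator. The delicate bookkeeping is to verify that the regularization degree furnished by thm.~\ref{thmCAG2} suffices to absorb the $\delta_2$ extra derivatives and that $d(3,n,l,w,D')$ dominates the quantity built from the two- and one-insertion data --- the analogues of \eqref{dprop2}, \eqref{dreplace} and \eqref{dprop1} --- so that the $\mu$-summation of momentum powers closes. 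Granting these, integrating the lemma over $\Lambda'$ by means of \eqref{Lambdaint} and fixing $K$ sufficiently large reproduces the asserted bound and closes the induction.
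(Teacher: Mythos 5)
Your proposal is correct and follows essentially the same route as the paper: the paper proves the $\partial^w_{\vec p}$-version (thm.~\ref{thmCAG3sub}, stated at $x_3=0$) by mimicking the proof of thm.~\ref{thmCAG2} for the first two terms of the FE, and handles the source term by a lemma (lemma~\ref{lemmakint}) using exactly your partial-integration trick --- converting $\|x_j\|^{\delta_2}$ into $k$-derivatives of the phase with $j\in\{1,2\}$ chosen to maximize $|x_j|$, together with the domination $d(3,n,l,w,D')\ge d_{12}+d_3$. The only detail you leave implicit is that the derivative version of the induction must be formulated at $x_3=0$ (since the translation phase does not commute with $\partial^w_{\vec p}$), the general-$x_3$ statement of the proposition then following at $w=0$ by translation covariance.
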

\noindent
{\bf Remarks:} Here we can see explicitly how the regularization parameters allow us to remove (sub)\-divergences.
\begin{enumerate}
\item  The divergence on the total diagonal is of order $\delta_{1}+\delta_{2}=D'-D_{\{1,2,3\}}$. Thus, $D_{\{1,2,3\}}$ controls the degree of divergence on the total diagonal.

\item It follows by definition that as we increase $D_{\{1,2\}}$, we decrease $\delta_{1}$ and increase $\delta_{2}$. We see that in the bound this cures the divergence on the partial diagonal $x_{1}=x_{2}$ without changing the degree of divergence on the total diagonal. Hence, $D_{\{1,2\}}$ is indeed a regularization parameter associated to the partial diagonal $x_{1}=x_{2}$.

\item Note also that the bound above is regular on the other partial diagonals, $x_{2}=x_{3}$ and $x_{1}=x_{3}$ (where $x_{1}\neq x_{2}$). In other words, the decomposition of the CAG in eq.\eqref{subsum} is essentially a decomposition into contributions with a different nesting of subdivergences.
\end{enumerate}

\vspace{.5cm}
\noindent The proposition is an obvious consequence of the following theorem:

\begin{thm}\label{thmCAG3sub}
Let $x_{3}=0$ and $\bfD = (D_{\{1,2,3\}}, D_{\{1,2\}})$, with
regularization parameters such that $\delta_{1}:=[A_{1}]+[A_{2}]-D_{\{1,2\}}\geq 0$, $\delta_{2}:= D_{\{1,2\}}+[A_{3}]-D_{\{1,2,3\}}\geq 0$ and $D_{\{1,2,3\}}\leq D'=[A_{1}]+[A_{2}]+[A_{3}]$.
There exists a constant $K>0$ such that

\ben\label{boundCAG3}
\begin{split}
& \vspace{1cm} |\partial^{w}_{\vec{p}}\L^{\Lambda,\Lambda_{0}}_{\bfD, 2n,l}(\bigotimes_{i=1}^{3}\O_{A_{i}}(x_{i}); \vec{p})|\leq\,  \, K^{(4n+8l-3)(|w|+D'-D_{\{1,2,3\}})}\, K^{D'(n+2l)^{3}}\\
& \vspace{2cm} \times \ \frac{\Lambda^{D_{\{1,2,3\}}-2n-|w|}}{|x_{1}-x_{2}|^{\delta_{1}}\,{\rm max}(|x_{1}|, |x_{2}|)^{\delta_{2}}} \\
& \times \ \sqrt{|v_1|!|v_2|!|v_3|!(|w|+D'-D_{\{1,2,3\}})!}\
\sum_{\mu=0}^{{d}(3,n,l,w,D')}\frac{1}{\sqrt{\mu!}}\left(\frac{|\vec{p}|}{\Lambda}\right)^{\mu}\, \sum_{\lambda=0}^{2l+n+1}\frac{\log^{\lambda}(\sup(\frac{|\vec{p}|}{\kappa}), \frac{\kappa}{m})}{2^{\lambda}\lambda!}
\end{split}
\een
with $d(N,n,l,w,D'):= 2D'(n+l+2(N-1))+\sup(D'+1-2n-|w|, 0)$, and  $A_{i} \equiv \{n_{i},v_i\}$.
\end{thm}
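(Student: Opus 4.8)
The plan is to follow the inductive strategy already used for Theorem~\ref{thmCAG2}, ascending in the number of insertions (the one- and two-insertion bounds \eqref{boundCAG1} and \eqref{boundCAG2} being available as input) and, for fixed $N=3$, inducting upwards in $n+l$ and, at fixed $n+l$, upwards in $l$. One writes the $\partial^w_{\vec p}$-differentiated flow equation \eqref{FE3reg}/\eqref{FEexpand} specialized to $\bfD=(D_{\{1,2,3\}},D_{\{1,2\}})$, i.e.\ to the collection $\I=\{\{1,2,3\},\{1,2\}\}$, and integrates it over $\Lambda'$, distinguishing exactly the three cases of the previous proof: (a) irrelevant terms $2n+|w|>D_{\{1,2,3\}}$, integrated from $\Lambda$ to $\Lambda_0$ with the boundary condition \eqref{BC3b}; (b) relevant terms $2n+|w|\le D_{\{1,2,3\}}$ at $\vec p=\vec 0$, integrated from $0$ to $\Lambda$ using \eqref{BC3a}; and (c) relevant terms at arbitrary momentum, recovered from (a) and (b) through the Taylor formula \eqref{reltermstaylor}. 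In each case one substitutes the inductive bound \eqref{boundCAG3} and verifies that the claimed estimate is reproduced for $K$ large, using the standard auxiliary inequalities \eqref{kint}, \eqref{Lambdaint} and \eqref{w}.

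The right-hand side of the specialized flow equation splits into three classes of terms: the loop term $\binom{2n+2}{2}\int_k\dot C(k)\,\L_{\bfD,2n+2,l-1}(\otimes_i\O_{A_i};k,-k,\vec p)$; the term $\bra\varp\L_{\bfD}(\otimes_i\O_{A_i}),\dot C\star\varp L\ket$ pairing a three-insertion CAG with a zero-insertion CAG; and, because for $\I=\{\{1,2,3\},\{1,2\}\}$ the only admissible nontrivial partition of $\{1,2,3\}$ is $\{1,2\}\mid\{3\}$, the single source term $\int_k \L_{D_{\{1,2\}},2n_1,l_1}(\O_{A_1}(x_1)\otimes\O_{A_2}(x_2);k,\dots)\,\dot C(k)\,\L_{2n_2,l_2}(\O_{A_3}(0);-k,\dots)$. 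The first two terms involve three-insertion CAG's of strictly lower order $n+l$ together with already-controlled zero-insertion CAG's, so they are treated verbatim as the first two terms of Theorem~\ref{thmCAG2}: the inductive gain encoded in the factor $\mathcal A$ of \eqref{Amult} and conditions of type \eqref{Kcond1} again absorb the combinatorial losses, now with the $\lambda$-range extended to $2l+n+1$ and with $d(3,\dots)$ replacing $d(2,\dots)$.

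The genuinely new work lies in the source term, for which one proves a lemma analogous to Lemma~\ref{lemmakint2N2}. Substituting the two-insertion bound \eqref{boundCAG2} with $D=D_{\{1,2\}}$, $D'=[A_1]+[A_2]$ directly produces the factor $|x_1-x_2|^{-([A_1]+[A_2]-D_{\{1,2\}})}=|x_1-x_2|^{-\delta_1}$, while the one-insertion bound \eqref{boundCAG1} contributes $\Lambda^{[A_3]-2n_2-|w_2|}$; together with $\dot C\sim\Lambda^{-3}$, the $k$-integration and $n_1+n_2=n+1$ one finds, relative to the target power $\Lambda^{D_{\{1,2,3\}}-2n-|w|-1}$, an excess $\Lambda^{\delta_2}$ (since $D_{\{1,2\}}+[A_3]=D_{\{1,2,3\}}+\delta_2$). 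This excess is traded for the missing total-diagonal factor $\max(|x_1|,|x_2|)^{-\delta_2}$ by the translation-phase trick of Lemma~\ref{lemmakint2N2}: by \eqref{CAGtrans} one rewrites $\L_{D_{\{1,2\}}}(\O_{A_1}(x_1)\otimes\O_{A_2}(x_2);\dots)=\e^{iy(k+p_1+\dots+p_{2n_1-1})}\L_{D_{\{1,2\}}}(\O_{A_1}(x_1-y)\otimes\O_{A_2}(x_2-y);\dots)$, choosing the reference shift $y\in\{x_1,x_2\}$ to be whichever point has the larger norm. Multiplying the integrand by $(2\|y\|)^{\delta_2}$, writing this factor as a $\delta_2$-fold $k$-derivative of the phase $\e^{iyk}$ in a suitable component of $k$, and moving it by partial integration onto the remaining $k$-dependence (distributed by Leibniz over the two CAG's and $\dot C(k)$, whose Gaussian damping guarantees convergence) lowers the $\Lambda$-power by $\delta_2$; dividing back by $(2\|y\|)^{\delta_2}$ and using $2\|y\|\ge|y|=\max(|x_1|,|x_2|)$ yields the factor $\max(|x_1|,|x_2|)^{-\delta_2}$. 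Since the two-insertion bound is translation invariant in $(x_1,x_2)$ and depends on these points only through $|x_1-x_2|$, the $|x_1-x_2|^{-\delta_1}$ factor is untouched by the shift.

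Finally one integrates the resulting lemma over $\Lambda'$ via \eqref{Lambdaint}, checks the compatibility of the summation ranges through the inequality $d(3,n,l,w,D')\ge d(2,n_1,l_1,w_1,[A_1]+[A_2])+d(1,n_2,l_2,w_2,[A_3])$ (the analogue of \eqref{dreplace}, the enlarged offset $2(N-1)=4$ in the definition of $d$ supplying the needed slack), and fixes $K$ large enough to satisfy the finitely many conditions generated by cases (a)--(c). I expect the main obstacle to be precisely this source-term lemma: one must simultaneously preserve the subdivergence factor $|x_1-x_2|^{-\delta_1}$ inherited from Theorem~\ref{thmCAG2} and manufacture the total-diagonal factor $\max(|x_1|,|x_2|)^{-\delta_2}$ out of the surplus power of $\Lambda$, while keeping careful track of the momentum derivatives generated by the partial integrations and of all combinatorial and $K$-exponent bookkeeping. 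Proposition~\ref{propCAG3} then follows by setting $w=0$ and using translation invariance to restore a general $x_3$, and the case of a general $\bfD$ containing all three two-element subsets follows from the decomposition \eqref{subsum}.
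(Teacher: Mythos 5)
Your proposal is correct and follows essentially the same route as the paper: the first two flow-equation terms are handled by transcribing the proof of Theorem~\ref{thmCAG2} with the substitutions $D\to D_{\{1,2,3\}}$, $|x|^{-(D'-D)}\to|x_1-x_2|^{-\delta_1}\max(|x_1|,|x_2|)^{-\delta_2}$ and $d(2,\dots)\to d(3,\dots)$, while the source term is controlled by exactly the lemma you describe -- translating by whichever of $x_1,x_2$ has the larger norm, converting $(2\|x_i\|)^{\delta_2}$ into a $\delta_2$-fold $k$-derivative of the phase, partially integrating, and invoking $d(3,n,l,w,D')\ge d_{12}+d_3$. The only nuance worth noting is that the paper states its two- and three-insertion bounds only with one insertion point at the origin, so the translation step in the source-term lemma is not merely a convenience but is needed before the inductive bounds can be substituted at all; your write-up already handles this correctly.
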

\noindent\textbf{Remark:} Note that in the case $w=0$ we can make use of the translation covariance properties of the CAG's, eq.\eqref{CAGtrans}, in order to remove the condition $x_{3}=0$ and thus to obtain proposition \ref{propCAG3}. However, the translation operator $\exp(i x_{3} (p_{1}+\ldots+p_{2n}))$ evidently does not commute with the momentum derivatives $\partial^{w}_{\vec{p}}$, which is the reason why we did not include any momentum derivatives in proposition \ref{propCAG3}.

\begin{proof}[Proof of theorem \ref{thmCAG3sub}]
%Since the CAG's are translation covariant, we may set $x_3=0$.
The FE with additional momentum derivatives for the CAG's under consideration is in the case at hand
[cf.~\eqref{FE3reg} and recall that we have $\I = (\{1,2,3\}, \{1,2\})$ in the case at hand]
\ben\label{FE3}
\begin{split}
&\hspace{3cm} \partial_{\Lambda}\partial_{\vec{p}}^{w}\L^{\Lambda,\Lambda_{0}}_{\bfD, 2n,l}(\bigotimes_{i=1}^{3}\O_{A_{i}}; p_{1},\ldots,p_{2n})=\\
&\vspace{0.05cm}\\
&= \left(\atop{2n+2}{2}\right)  \int_{k} \dot{C}^{\Lambda}(k)\partial_{\vec{p}}^{w}\L^{\Lambda,\Lambda_{0}}_{\bfD, 2n+2,l-1}(\bigotimes_{i=1}^{3}\O_{A_{i}}; k, -k,  p_{1},\ldots,p_{2n})\\
&-4\!\!\!\!\!\!\sum_{\atop{l_{1}+l_{2}=l}{n_{1}+n_{2}=n+1}}\!\!\!\!\! n_{1}n_{2}\,  \mathbb{S}\, \Bigg[\! \sum_{w_{1}+w_{2}+w_{3}=w}\!\!\!\!\!\!\!\!\! c_{\{w_{j}\}} \partial_{\vec{p}}^{w_{1}}\L^{\Lambda,\Lambda_{0}}_{\bfD, 2n_{1},l_{1}}(\bigotimes_{i=1}^{3}\O_{A_{i}}; q,  p_{1},\ldots,p_{2n_{1}-1})\\
&\hspace{8cm}\times \partial_{\vec{p}}^{w_{2}}\dot{C}^{\Lambda}(q)\,  \partial_{\vec{p}}^{w_{3}} \L^{\Lambda,\Lambda_{0}}_{2n_{2},l_{2}}(  p_{2n_{1}},\ldots,p_{2n}) \\
&+\!\!\! \sum_{w_{1}+w_{2}=w}\!\!\!\! c_{\{w_{j}\}} \int_{k}\, \partial_{\vec{p}}^{w_{1}}\L^{\Lambda,\Lambda_{0}}_{D_{\{1,2\}}, 2n_{1},l_{1}}(\O_{A_{1}}\otimes\O_{A_{2}}; k,  p_{1},\ldots,p_{2n_{1}-1})\\
&\hspace{8cm}\times  \dot{C}^{\Lambda}(k)\,   \partial_{\vec{p}}^{w_{2}}\L^{\Lambda,\Lambda_{0}}_{2n_{2},l_{2}}(\O_{A_{3}}; -k,   p_{2n_{1}},\ldots,p_{2n}) \Bigg]
\end{split}
\een
The proof utilizes the same general strategy as the proof of theorem \ref{thmCAG2}. In fact, to prove that the bound \eqref{boundCAG3} is satisfied by the first two terms on the r.h.s. of the FE~\eqref{FE3}, we can mimic the proof of theorem  \ref{thmCAG2}, making the following adjustments:
\begin{itemize}
\item Replace $D$ by $D_{\{1,2,3\}}$.
\item Replace $1/|x|^{D'-D}$ with $1/|x_{1}-x_{2}|^{\delta_{1}}\cdot  1/{\rm max}(|x_{1}|, |x_{2}|)^{\delta_{2}}$
\item Replace $d(2,n,l,w,D')$ by $d(3,n,l,w,D')$.
\end{itemize}
The first two changes do not affect the proof given in theorem \ref{thmCAG2}. To see that the third one does not cause any problems either, we note that $d(3,n,l,w,D')$ satisfies the following inequalities:
\ben
d(3,n+1,l-1,w,D')\leq d(3,n,l,w,D')
\een
\ben
 d(3,n_{1},l_{1},w,D') \leq d(3,n,l,w,D') \quad \text{ if } n_{1}+n_{2}=n+1, l_{1}+l_{2}=l \text{ and }n_{2}+l_{2} \geq 1
\een
\ben
d(3,n,l,w+\tilde{w},D')+|\tilde{w}| \leq d(3,n,l,w,D') \quad \text{ for }|\tilde{w}|\leq D'+1-2n-|w|
\een
Those were the only properties of $d$ that we needed in the part of the proof of theorem \ref{thmCAG2} regarding the first two terms on the r.h.s. of the FE \eqref{FE2}. Thus, following the same computational steps as in the proof of theorem \ref{thmCAG2}, we find that the contributions from the first two terms on the r.h.s. of the FE \eqref{FE3} satisfy the bound \eqref{boundCAG3}. We spare the reader the lengthy but straightforward repetitions of these calculations and
instead give the details how to find a suitable estimate for the source terms, i.e. the last line of the FE~\eqref{FE3}.

\begin{lemma}\label{lemmakint}
Let $x_{3}=0$ and fix regularization parameters such that $\delta_{1}:=[A_{1}]+[A_{2}]-D_{\{1,2\}}\geq 0$ and $\delta_{2}:= D_{\{1,2\}}+[A_{3}]-D_{\{1,2,3\}}\geq 0$.
Then we have the bound
\ben\label{lemma2bound}
\begin{split}
&\left| \partial^{w}_{\vec{p}}\int_{k}\, \Lsc_{D_{\{1,2\}}, 2n_{1},l_{1}}(\O_{A_{1}}\otimes\O_{A_{2}};k,p_{1},\ldots, p_{2n_{1}-1})\, \dot{C}^{\Lambda}(k)\,\Lsc_{ 2n_{2},l_{2}}(\O_{A_{3}};-k,p_{2n_{1}},\ldots, p_{2n})\right|\\
\leq&\quad  M\, K_{0}^{(4n+8l-3)(|w|+D'-D_{\{1,2,3\}})} K_{0}^{D'(n+2l)^{3}}\, \prod_{i=1}^{3}\sqrt{|v_i|!}\, \Lambda^{D_{\{1,2,3\}}-2n-|w|-1}\, \e^{-m^{2}/\Lambda^{2}}\\
&\times \frac{\sqrt{(|w|+D'-D_{\{1,2,3\}})!}}{|x_{1}-x_{2}|^{\delta_{1}}\,\max(|x_{1}|, |x_{2}|)^{\delta_{2}}}\sum_{\mu=0}^{{d}(N=3,n,l,w,D')}\frac{1}{\sqrt{\mu}!}\, \left(\frac{|\vec{p}|}{\Lambda}\right)^{\mu}\, \sum_{\lambda=0}^{2l+n}\frac{\log^{\lambda}(\sup(\frac{|\vec{p}|}{\kappa}), \frac{\kappa}{m})+\sqrt{\lambda!}}{2^{\lambda}\lambda!}
\end{split}
\een
where $M=%12^{|w|}
5^{|w|+\delta_{2}}\, 2^{2l+n+1}2^{2d}(2l+n+1)\ d\ 2^{[|w|+\delta_{2}]/2}$ and where $K_{0}$ is the constant (called $K$ there) from our bound on the CAG's with two insertions, see thm \ref{thmCAG2}.
\end{lemma}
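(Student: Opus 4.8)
The plan is to treat this source term in close analogy with Lemma~\ref{lemmakint2N2}, the only structural novelty being that the first factor now carries \emph{two} insertions at $x_1,x_2$ rather than one. The key observation is that the two distance factors in the claimed bound arise by two different mechanisms. The factor $|x_1-x_2|^{-\delta_1}$, with $\delta_1=[A_1]+[A_2]-D_{\{1,2\}}$, is produced \emph{automatically} upon substituting the bound of Theorem~\ref{thmCAG2} for the regularized two-insertion CAG $\L_{D_{\{1,2\}},2n_1,l_1}(\O_{A_1}\otimes\O_{A_2};\cdots)$, taking there $D=D_{\{1,2\}}$ and $D'=[A_1]+[A_2]$ so that $D'-D=\delta_1$. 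By contrast, the factor $\max(|x_1|,|x_2|)^{-\delta_2}$, which measures the separation of the $\{1,2\}$-cluster from $x_3=0$, must be extracted by hand, exactly as the factor $|x|^{-(D'-D)}$ was extracted in the proof of Lemma~\ref{lemmakint2N2}.

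To carry out the extraction I first use translation covariance, eq.~\eqref{CAGtrans}. Since the two insertions play symmetric roles, I may assume without loss of generality that $\max(|x_1|,|x_2|)=|x_1|$ and translate the two-insertion CAG by $-x_1$, which shifts its insertions to $0$ and $x_2-x_1$ and produces the phase $\e^{ix_1(k+p_1+\cdots+p_{2n_1-1})}$; the one-insertion CAG already sits at $x_3=0$ and contributes no phase. After pulling the momentum derivatives $\partial^w_{\vec p}$ into the integral and distributing them by the Leibniz rule (as in the passage to~\eqref{lemmakint2PI}), I multiply by $(2\|x_1\|)^{\delta_2}$, with $\|\cdot\|$ the max-component norm, and use that $\|x_1\|^{\delta_2}\,\e^{ix_1 k}$ equals a $\delta_2$-fold derivative of $\e^{ix_1 k}$ with respect to the $k$-component conjugate to the largest component of $x_1$. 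A partial integration then moves these $\delta_2$ derivatives onto the remaining integrand, where, together with the $w_3$-derivatives, they act on the two CAG factors and on $\dot C^\Lambda(k)$.

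At this stage each factor is controlled by an already established estimate: the two-insertion CAG by Theorem~\ref{thmCAG2} (supplying $|x_1-x_2|^{-\delta_1}$ and the $K_0$-powers with $D'=[A_1]+[A_2]$), the single-insertion CAG by~\eqref{boundCAG1}, and the propagator derivative by~\eqref{w}, which provides the Gaussian decay $\e^{-k^2/2\Lambda^2}$ and the damping $\e^{-m^2/\Lambda^2}$. The resulting Gaussian $k$-integral is estimated as in~\eqref{kint}, after which I divide again by $(2\|x_1\|)^{\delta_2}$ and invoke $|x_1|\le 2\|x_1\|$ (cf.~\eqref{normsinq}) to pass to the Euclidean distance $\max(|x_1|,|x_2|)^{-\delta_2}$. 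The combinatorial prefactors assemble into the stated $M$, in which the former exponent $D'-D$ of Lemma~\ref{lemmakint2N2} is replaced by $\delta_2$ precisely because only the $\delta_2$ derivatives, and not the $\delta_1$ ones, are extracted through the phase.

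The step I expect to be the main obstacle is the bookkeeping of the momentum-power index, namely verifying that the upper summation limit may be taken to be $d(3,n,l,w,D')$ with $D'=[A_1]+[A_2]+[A_3]$. This is the three-insertion analogue of~\eqref{dreplace}: one checks
\[
d(2,n_1,l_1,w_1,[A_1]+[A_2]) + d(1,n_2,l_2,w_2,[A_3]) \le d(3,n,l,w,D')
\]
for all splittings $n_1+n_2=n+1$, $l_1+l_2=l$, $w_1+w_2\le w$. Using $[A_1]+[A_2]\le D'$, $[A_3]\le D'$, together with $2(N-1)=2$ resp.\ $0$ in the definition of $d$, the left side is at most $2D'(n+l+3)$ plus sup-terms while the right side is $2D'(n+l+4)$ plus a sup-term, leaving a margin of order $2D'$ that absorbs both the sup-contributions and the $\delta_2$ extracted derivatives; crucially, the latter do not raise the index, since the sup-part of $d$ is non-increasing in $|w|$. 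Granting this inequality, the $K_0$-exponents and combinatorial factors combine exactly as in Lemma~\ref{lemmakint2N2}, yielding the claimed bound.
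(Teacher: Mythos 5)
Your proposal is correct and follows essentially the same route as the paper's own proof: translation covariance to place the larger of $|x_1|,|x_2|$ in the phase $\e^{ikx_i}$, extraction of $\max(|x_1|,|x_2|)^{\delta_2}$ via $k$-derivatives and partial integration, substitution of Theorem~\ref{thmCAG2} (supplying $|x_1-x_2|^{-\delta_1}$) and of the one-insertion bound, the Gaussian integral estimate, and the verification $d_{12}+d_3\le 2D'(n+l+3)+2D'\le d(3,n,l,w,D')$. The only cosmetic difference is which of the two points you translate to the origin; the paper fixes $|x_2|>|x_1|$ and translates by $x_2$, then exchanges indices for the other case.
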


\begin{proof}[Proof of lemma \ref{lemmakint}]
The proof is similar to that of lemma \ref{lemmakint2N2} above. Assume for the moment that $|x_{2}|>|x_{1}|$.  To begin with, we make use of the translation covariance property [see eq.\eqref{CAGtrans}] of the CAG's and perform partial integrations in order to bound the left hand side of \eqref{lemma2bound} by
\ben\label{PICAG3}
\begin{split}
&\Big| \partial^{w}_{\vec{p}}\int_{k}\, \Lsc_{D_{\{1,2\}}, 2n_{1},l_{1}}(\O_{A_{1}}(x_{1})\otimes\O_{A_{2}}(x_{2});k,p_{1},\ldots, p_{2n_{1}-1})\\
&\hspace{5cm} \times\ \dot{C}^{\Lambda}(k)\,\Lsc_{ 2n_{2},l_{2}}(\O_{A_{3}}(0);-k,p_{2n_{1}},\ldots, p_{2n}) \Big|\\
\leq&\,\Big|\int_{k} \sum_{w_{1}+w_{2}+w_{3}=w} c_{\{w_{i}\}} \  \e^{ikx_{2}} \ \e^{ix_{2}(p_{1}+\ldots+p_{2n_{1}-1})}  \\
&  \partial_{k}^{w_{3}}\Big( \partial_{\vec{p}}^{w_{1}} \Lsc_{D_{\{1,2\}}, 2n_{1},l_{1}}(\O_{A_{1}}(x_{1}-x_{2})\otimes\O_{A_{2}}(0);k,p_{1},\ldots, p_{2n_{1}-1})\\
&\hspace{5cm} \times\ \dot{C}^{\Lambda}(k)\, \partial_{\vec{p}}^{w_{2}}\Lsc_{ 2n_{2},l_{2}}(\O_{A_{3}}(0);-k,p_{2n_{1}},\ldots, p_{2n}) \Big) \Big|
\end{split}
\een
In analogy to the discussion following eq.\eqref{lemmakint2PI} we can multiply by $(2\|x_{2}\|)^{\delta_{2}}$, where we recall the notation $\| x \|=\max_{\alpha\in\{1,\ldots, 4\}}|x_{\alpha}|$, and write this factor as a $k$-derivative of $\e^{ikx_{2}}$. This $\delta_{2}$-fold derivative can then be moved onto the terms in the second line of \eqref{PICAG3} by partial integration. We can then further substitute the bounds  for the CAG's with one [cf. \eqref{boundCAG1}] and two insertions (cf. theorem \ref{thmCAG2}) into the resulting expression. After some bookkeeping, this yields the estimate
\ben
\begin{split}
&\text{''r.h.s. of \eqref{PICAG3}''} \ \times \ (2\|x_{2}\|)^{\delta_{2}} \le M_{0}\, K_{0}^{(4n+8l-3)(|w|+\delta_{1}+\delta_{2})} K_{0}^{D'(n+2l)^{3}}\, \sqrt{(|w|+\delta_{1}+\delta_{2})!|v_1|!|v_2|!}  \\ & \times\, \frac{\Lambda^{D_{\{1,2,3\}}-2n-|w|-5}}{|x_{1}-x_{2}|^{\delta_{1}}}\ \e^{-m^{2}/\Lambda^2} \
\int_{k} e^{-k^{2}/2\Lambda^{2}} \sum_{\mu=0}^{d_{12}+d_{3}}\frac{1}{\sqrt{\mu}!}\, \left(\frac{|\vec{p}|+|k|}{\Lambda}\right)^{\mu}\, \sum_{\lambda=0}^{2l+n}\frac{\log^{\lambda}(\sup(\frac{|\vec{p}|+|k|}{\kappa}), \frac{\kappa}{m})}{2^{\lambda}\lambda!}
\end{split}
\een
where $M_{0}=%12^{|w|}
5^{|w|+\delta_{2}}\, 2^{2l+n+1}2^{d_{12}+d_{3}}(2l+n+1)\ d \ 2^{[|w|+\delta_{2}]/2}$ and where we used the shorthand $d_{12}=d(2, n_{1},l_{1},w_{1}, [A_{1}]+[A_{2}])$ and $d_{3}=d(1, n_{2},l_{2},w_{2}, [A_{3}])$. We can replace the upper limit of summation over $\mu$ by $d(3,n,l,w,D')$, since
\ben
d(3,n,l,w,D')= 2D'(n+l+4)+\sup(D'+1-2n-|w|, 0)\geq 2D'(n+l+3)+2D'\geq d_{12}+d_{3}\, .
\een
The momentum integral can then be estimated as in \eqref{momentumint}, and after division by $(2\|x_{2}\|)^{\delta_{2}}$ and recalling that $|x_{2}|\leq 2 \|x_{2}\|$ we obtain the bound claimed in lemma  \ref{lemmakint} for $|x_{2}|>|x_{1}|$. If instead $|x_{2}|<|x_{1}|$ we can just repeat the discussion above with the indices $1$ and $2$ exchanged, which completes the proof.
\end{proof}
With this lemma at hand we can verify that the bound claimed in theorem \ref{thmCAG3sub} is satisfied by the $\Lambda'$ integral over the source terms. Indeed this integral can be bounded in the same manner as in the discussion of the source terms for the CAG's with two insertions:

\vspace{.5cm}
\noindent \textbf{(a) Irrelevant terms ($2n+|w|>D_{\{1,2,3\}}$):} Following the same computational steps as in the proof of theorem \ref{thmCAG2} [cf. \eqref{sourceAbound}] we find that the contribution from the source term is bounded by \eqref{boundCAG3}, provided $K$ is chosen large enough that
\ben
40 (l+1)(n+1)^3 \,M\, K_0^{(4n+8l-3)(|w|+D'-D_{\{1,2,3\}})}\, K_0^{D'(n+2l)^3} \le \frac{1}{8} K^{(4n+8l-3)(|w|+D'-D_{\{1,2,3\}})}\, K^{D'(n+2l)^3}\ .
\een

\vspace{.5cm}
\noindent \textbf{(b) Relevant terms ($2n+|w|\leq D_{\{1,2,3\}}$) at $\vec{p}=\vec{0}$:} Proceeding as in \eqref{3rdFEN2} and the subsequent discussion, we find that the contribution from the source term is bounded by \eqref{boundCAG3}, provided $K$ is chosen large enough that
\ben
\begin{split}
24 (l+1)(n+1)^3 (2l+n+1) \,M\, &K_0^{(4n+8l-3)(|w|+D'-D_{\{1,2,3\}})}\, K_0^{D'(n+2l)^3} \\
&\le \frac{1}{8} K^{(4n+8l-3)(|w|+D'-D_{\{1,2,3\}})}\, K^{D'(n+2l)^3}\ .
\end{split}
\een

\vspace{.5cm}
\noindent \textbf{(c) Relevant terms ($2n+|w|\leq D_{\{1,2,3\}}$) at $\vec{p}\neq \vec{0}$:} Using the Taylor expansion with remainder technique as in the proof of theorem \ref{thmCAG2} [cf. eq.\eqref{Taylornonzero} and subsequent discussion], we find that the contribution from the source term is bounded by \eqref{boundCAG3}, provided $K$ is chosen large enough that
\ben
\begin{split}
\sum_{j \le D+1-2n-|w|} &48 M (2l+n)(l+1)(n+1)^{3} \, K_{0}^{(4n+8l-3)(|w|+j+D'-D_{\{1,2,3\}})+D'(2l+n)^{3}}\\
&\times (8n)^{j}\, j\,   2^{\frac{d(2,n,l,w,D')}{2}}
\, 2^{\frac{j+|w|+D'-D_{\{1,2,3\}}}{2}} \leq K^{(4n+8l-3)(|w|+D'-D_{\{1,2,3\}})+D'(2l+n)^{3}}
\end{split}
\een
Since all these inequalities hold for a sufficiently large choice of $K$, we have finished the proof of the theorem.
\end{proof}
\noindent Since we eventually want to remove the cutoffs, i.e. take the limits $\Lambda\to 0, \Lambda_{0}\to\infty$, the following result will be useful:

\begin{corollary}\label{corCAG3}
Let $x_{3}=0$ and choose $\bfD, \delta_{1}$ and $\delta_{2}$ as in proposition \ref{propCAG3} and let $\Lambda\leq m$.
There exists a constant $K>0$ such that
\ben
\begin{split}
& \vspace{1cm} |\partial^{w}_{\vec{p}}\L^{\Lambda,\Lambda_{0}}_{\bfD, 2n,l}(\bigotimes_{i=1}^{3}\O_{A_{i}}(x_{i}); \vec{p})|\leq\,  \, K^{(4n+8l-3)(|w|+D'-D_{\{1,2,3\}})}\, K^{D'(n+2l)^{3}}\\
& \vspace{2cm} \times \ \frac{m^{D_{\{1,2,3\}}-2n-|w|}}{|x_{1}-x_{2}|^{\delta_{1}}\,{\rm max}(|x_{1}-x_3|, |x_{2}-x_3|)^{\delta_{2}}}\ \sqrt{(2n+|w|-D_{\{1,2,3\}})_{+}!} \\
& \times \ \sqrt{|v_1|!|v_2|!|v_3|!(|w|+D'-D_{\{1,2,3\}}) !}\
\sum_{\mu=0}^{{d}(3,n,l,w,D')}\left(\frac{|\vec{p}|}{m}\right)^{\mu}\, \sum_{\lambda=0}^{2l+n+1}\frac{\log_{+}^{\lambda}(\frac{|\vec{p}|}{m})}{2^{\lambda}\lambda!}
\end{split}
\een
with $d(N,n,l,w,D'):= 2D'(n+l+2(N-1))+\sup(D'+1-2n-|w|, 0)$, and  $A_{i} \equiv \{n_{i},v_i\}$.
\end{corollary}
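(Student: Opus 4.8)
The plan is to follow the proof of corollary~\ref{CAG2cor} line by line, with the two-insertion ingredients replaced by their three-insertion counterparts. That is, I would reinsert the bound~\eqref{boundCAG3} of theorem~\ref{thmCAG3sub}, together with the source-term estimate of lemma~\ref{lemmakint}, into the right side of the differentiated flow equation~\eqref{FE3} and integrate once more over $\Lambda'$, but now working exclusively in the infrared regime $\Lambda\le m$. The mechanism that makes the improvement from $\Lambda$ to $m$ possible is that each term on the right side of~\eqref{FE3} carries the Gaussian damping factor $\e^{-m^{2}/\Lambda'^{2}}$ stemming from $\dot C^{\Lambda'}$, see~\eqref{cdotest} and~\eqref{w}. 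This damping effectively cuts the $\Lambda'$-integration off at $\Lambda'\sim m$, so that the infrared-dangerous negative powers of $\Lambda'$ (which appear once $\mu$ exceeds $D_{\{1,2,3\}}-2n-|w|$) are traded for powers of $m$ at the cost of a factorial.

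Concretely, for $\Lambda\le m$ one has $\ka=\sup(\Lambda,m)=m$, so $\sup(|\vec p|/\ka,\ka/m)=\sup(|\vec p|/m,1)$ and every logarithm in~\eqref{boundCAG3} turns into $\log_{+}(|\vec p|/m)$, which explains the logarithms in the statement of the corollary. I would then integrate~\eqref{FE3} from $0$ to $\Lambda$ for the relevant contributions ($2n+|w|\le D_{\{1,2,3\}}$, whose boundary conditions~\eqref{BC3a} sit at $\Lambda=0$) and from $\Lambda$ to $\Lambda_{0}$ for the irrelevant ones ($2n+|w|>D_{\{1,2,3\}}$, with boundary conditions~\eqref{BC3b} at $\Lambda_{0}$). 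In both cases the $\Lambda'$-integral is controlled, exactly as in corollary~\ref{CAG2cor}, by the elementary estimate
\ben
\int_{0}^{m}\d\Lambda'\, \e^{-m^{2}/\Lambda'^{2}}\, \frac{\Lambda'^{D_{\{1,2,3\}}-2n-|w|-\mu-1}}{\sqrt{\mu!}}\, \leq\, 2^{(2n+|w|+\mu-D_{\{1,2,3\}})_{+}/2}\, m^{D_{\{1,2,3\}}-2n-|w|-\mu}\, \sqrt{(2n+|w|-D_{\{1,2,3\}})_{+}!}\, ,
\een
where the damping factor renders the precise upper endpoint of integration immaterial.

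Applying this estimate produces exactly the new prefactor $\sqrt{(2n+|w|-D_{\{1,2,3\}})_{+}!}$ of the corollary; the factor $1/\sqrt{\mu!}$ carried by the $\mu$-sum in~\eqref{boundCAG3} is consumed against the $\sqrt{(2n+|w|+\mu-D_{\{1,2,3\}})_{+}!}$ generated by the integral, leaving the plain sum $\sum_{\mu}(|\vec p|/m)^{\mu}$ appearing in the statement. The spatial weight $1/(|x_{1}-x_{2}|^{\delta_{1}}\,\max(|x_{1}|,|x_{2}|)^{\delta_{2}})$ is untouched by the $\Lambda'$-integration and passes through unchanged (recall $x_{3}=0$, so that $\max(|x_{1}|,|x_{2}|)=\max(|x_{1}-x_{3}|,|x_{2}-x_{3}|)$), and the residual powers of $2$ together with the polynomial combinatorial prefactors (the binomial coefficient, the $n_{1}n_{2}$-sums, and the constant $M$ of lemma~\ref{lemmakint}) are absorbed into a somewhat larger constant $K$, just as before. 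I do not anticipate a genuine obstacle here: once theorem~\ref{thmCAG3sub} is available the argument is routine, and the only point demanding some care is the bookkeeping by which the $1/\sqrt{\mu!}$ is converted into the explicit factorial prefactor uniformly in $n$ and $l$.
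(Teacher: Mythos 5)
Your proposal is correct and follows essentially the same route as the paper: the authors likewise prove the corollary by reinserting the bound of theorem~\ref{thmCAG3sub} (resp.\ proposition~\ref{propCAG3}) into the flow equation~\eqref{FE3} and integrating once more over $\Lambda'$ up to $m$, using exactly the damping-factor estimate $\int_{0}^{m}\d\Lambda'\,\e^{-m^{2}/\Lambda'^{2}}\,\Lambda'^{D_{\{1,2,3\}}-2n-|w|-\mu-1}/\sqrt{\mu!}\leq 2^{(2n+|w|+\mu-D_{\{1,2,3\}})_{+}/2}\,m^{D_{\{1,2,3\}}-2n-|w|-\mu}\,\sqrt{(2n+|w|-D_{\{1,2,3\}})_{+}!}$ you display, and absorbing the extra powers of $2$ into a larger $K$ (cf.\ the proof of corollary~\ref{CAG2cor}). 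Your bookkeeping of where the factorial prefactor and the $\log_{+}$ come from matches the paper's argument.
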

\noindent\textbf{Remark:} This bound  implies convergence of the multivariate Taylor expansion of the CAG's with three insertions in a neighborhood of  $\vec{x}=(x_{1},x_{2},x_{3})$ (where $x_{i}\neq x_{j}$) for any choice of regularization $\bfD$. This can be seen from the fact that the bound grows like
\ben
j! \Big(\sup(|\vec{p}|/m, 1)^{2(n+l+4)+1} \tilde{K}/\min_{i\neq j}|x_{i}-x_{j}|\Big)^{j}
\een
if we take $j$ derivatives with respect to $\vec{x}=(x_{1},x_{2},x_{3})$.
[Here we also used the Lowenstein rule \eqref{loew2}.] Consequently, the CAG's
with or without cutoffs are analytic functions of their spacetime arguments for
non-coinciding configurations. Since the OPE coefficients are derived from the CAG's [see
defn.~\ref{defOPE}],
the same holds true for them.

\medskip
\noindent
Corollary \ref{corCAG3} follows by inserting the bound from theorem \ref{propCAG3} once more into the FE and integrating over $\Lambda$ between $0$ and $m$ (cf. proof of corollary \ref{CAG2cor}).

\subsection{Amputated Greens functions with $N=3$ insertions}

We now give bounds for the regulated AG's with three insertions, $G^{\La,\Lao}_D(\O_{A_1} \otimes \O_{A_2} \otimes \O_{A_3})$, see eq.~\eqref{GDdef}, or more precisely, certain combinations of Taylor coefficients of their moments
${\mathcal G}^{\La,\Lao}_{n,l,D}$, where we recall the definition of the Taylor expansion operator in
eq.~\eqref{defT}. The result is:

\begin{thm}\label{thmGTaylor}

Let $D=[A_1]+[A_2]+[A_3]+\Delta$, where $\Delta \ge 0$, and let $\Lambda\leq m$. Then
\ben\label{GDbound}
\begin{split}
&\Big|
(1-\sum_{j\leq\Delta}\T^{j}_{(x_{1},x_{2},x_{3})\to(x_{3},x_{3},x_{3})})\, \G^{\Lambda,\Lambda_{0}}_{D,2n,l}(\bigotimes_{i=1}^{3}\O_{A_{i}}(x_i); \vec{p})   \Big| \\
\le &\ \ m^{ -2n-1}\
\frac{(\tilde{K}\  m \ {\rm max}(|x_1-x_2|,|x_2-x_3|,|x_1-x_3|))^{\sum [A_i]+1+\Delta}}{  {\rm min}(|x_1-x_2|,|x_2-x_3|,|x_1-x_3|)^{\sum [A_i]+1}}\\
& \ \ \times \ \sup(1,\frac{|\vec{p}|}{m})^{d(3,n,l,w=0,D+D')}\
\frac{\prod_{i=1}^{3}[A_{i}]!}{\sqrt{\Delta!}} \sum_{\lambda=0}^{2l+n+1}\frac{\log^{\lambda}_{+}(\frac{|\vec{p}|}{m})}{2^{\lambda}\lambda!}
\end{split}
\een
where $\tilde{K}>0$ is a constant depending on $n$ and $l$, and where $A_i \equiv \{n_i, v_i\}$.
\end{thm}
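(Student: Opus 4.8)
The plan is to strip off the smooth part of $G^{\La,\Lao}_D$, bound the remaining object by the flow equation, and then turn the Taylor subtraction into a gain of factorial type by tying the over-subtraction to the Taylor order. First I would use \eqref{GDdef} to write $G^{\La,\Lao}_D(\bigotimes_{i=1}^3\O_{A_i})=F^{\La,\Lao}_D(\bigotimes_{i=1}^3\O_{A_i})+\prod_{i=1}^3 L^{\La,\Lao}(\O_{A_i})$. The disconnected product is jointly smooth in $(x_1,x_2,x_3)$, each factor being controlled by the one-insertion bound \eqref{boundCAG1m}, so its Taylor remainder is elementary and carries $\max(|x_i-x_j|)^{\Delta+1}$ with no inverse power of $\min$; it is harmless. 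The substance lies in the moments $\F^{\La,\Lao}_{D,2n,l}$, for which I would first establish — by integrating the flow equation \eqref{GTildFE} inductively in $(n,l)$ along the exact template of the proof of theorem \ref{thmCAG3sub} — a bound of the same shape as corollary \ref{corCAG3}. The linear loop term and the term $\bra\varp F^{\La,\Lao}_D,\dot C\star\varp L^{\La,\Lao}\ket$ are absorbed by the induction hypothesis and the no-insertion bound \eqref{propout}, while the factorized source $\bra\varp L^{\La,\Lao}(\O_{A_i}),\dot C\star\varp L^{\La,\Lao}(\O_{A_j})\ket\,L^{\La,\Lao}(\O_{A_k})$ is handled by the partial-integration/momentum-integral device of lemma \ref{lemmakint2N2}, fed by \eqref{boundCAG1m}. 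In this bound the total-diagonal regularization $D=\sum[A_i]+\Delta$ yields the power $m^{D-2n}$, whereas the partial diagonals $x_i=x_j$, left unsubtracted by \eqref{Gbound1}, \eqref{Gbound2}, produce the inverse power $\min(|x_1-x_2|,|x_2-x_3|,|x_1-x_3|)^{-(\sum[A_i]+1)}$, uniformly in $\La,\Lao$.

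With this bound in place I would treat the Taylor subtraction through the integral form of Taylor's theorem,
\[
\Big(1-\sum_{j\le\Delta}\T^{j}_{\vec x\to\vec y}\Big)f(\vec x)=(\Delta+1)\!\!\sum_{|w|=\Delta+1}\frac{(\vec x-\vec y)^{w}}{w!}\int_0^1(1-\tau)^{\Delta}\,(\pa^{w}f)\big(\vec y+\tau(\vec x-\vec y)\big)\,d\tau,
\]
with $f=\F^{\La,\Lao}_{D,2n,l}$ and $\vec y=(x_3,x_3,x_3)$; only $x_1$- and $x_2$-derivatives survive, since the third slot of $\vec x-\vec y$ vanishes. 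The Lowenstein rule \eqref{loew2}, extended to the $F$-functionals (which are polynomials in the CAG's), rewrites $\pa^{w_1}_{x_1}\pa^{w_2}_{x_2}$ as higher-dimensional insertions $\pa^{w_1}\O_{A_1}\otimes\pa^{w_2}\O_{A_2}\otimes\O_{A_3}$ of total dimension $\sum[A_i]+\Delta+1$, after which I insert the $\F_D$-bound of the first paragraph. The prefactor $(\vec x-\vec y)^{w}$ supplies $\max(|x_1-x_3|,|x_2-x_3|)^{\Delta+1}$, which combines with the dimensionful factors of the oversubtracted bound to assemble the numerator power $\sum[A_i]+1+\Delta$; the $\tau$-integral over the near-coincident intermediate configurations $\vec y+\tau(\vec x-\vec y)$ converges because $D=\sum[A_i]+\Delta$ over-subtracts relative to the $\Delta+1$ derivatives taken.

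The decisive point is the factorial accounting that produces $1/\sqrt{\Delta!}$. Because the regularization $D=\sum[A_i]+\Delta$ is tied to the Taylor order, the raised total dimension $\sum[A_i]+\Delta+1$ exceeds $D$ by only one, so the factor of the type $\sqrt{(D'-D)!}$ in corollary \ref{corCAG3} stays $O(1)$ rather than growing like $\sqrt{(\Delta+1)!}$; the only surviving factorial growth is $\sqrt{\prod_i(|v_i|+|w_i|)!}\lesssim\sqrt{(\Delta+1)!}$ coming from the Lowenstein-raised operators. This is over-compensated by the Taylor weights, $\sum_{|w|=\Delta+1}1/w!\lesssim (\mathrm{const})^{\Delta+1}/(\Delta+1)!$, leaving a net $1/\sqrt{(\Delta+1)!}\le 1/\sqrt{\Delta!}$, with all remaining constants absorbed into $\tilde K$. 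I expect the main obstacle to be precisely this quantitative bookkeeping: verifying that the over-subtraction, the Taylor order, the raised dimensions and the combinatorial weights conspire into the asymmetric structure $\max^{\sum[A_i]+1+\Delta}/\min^{\sum[A_i]+1}$ with the sharp $1/\sqrt{\Delta!}$ and the stated momentum dependence, uniformly in $\La\le m$ and $\Lao$. A related difficulty is that the Taylor operator does not commute with the factorized source and disconnected terms of \eqref{GTildFE}, which is exactly why those contributions must be estimated via the explicit remainder formula above rather than propagated abstractly through the induction.
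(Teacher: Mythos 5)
There is a genuine gap, and it sits in the first paragraph of your proposal. You propose to establish a bound of the shape of corollary~\ref{corCAG3} for the moments $\F^{\La,\Lao}_{D,2n,l}$ by integrating the flow equation \eqref{GTildFE} inductively and handling the factorized source term $\bra \varp L^{\La,\Lao}(\O_{A_i}), \dot C\star\varp L^{\La,\Lao}(\O_{A_j})\ket\, L^{\La,\Lao}(\O_{A_k})$ with the partial-integration device of lemma~\ref{lemmakint2N2}. This is precisely the step that fails, and the paper says so explicitly: the source term of \eqref{GTildFE} contains \emph{three} one-insertion CAG's localized at three distinct points, but only two of them are joined by the momentum integral $\int_k$. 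The partial-integration trick works by using translation covariance to pull a factor $\e^{ikx}$ out of the factors sitting under the $k$-integral and then converting $\|x\|^{\delta}\e^{ikx}$ into $k$-derivatives; it therefore only produces inverse powers of the separation between the two points that share the integrated momentum $k$. The spectator factor $L^{\La,\Lao}(\O_{A_k}(x_k))$ stands outside the $\int_k$, its position dependence cannot be traded for inverse distances, and consequently one cannot generate the required distribution of singularities over all three partial diagonals (the $\max^{\sum[A_i]+1+\Delta}/\min^{\sum[A_i]+1}$ structure), uniformly in $\Lao$. Your closing remark correctly flags that the Taylor operator does not commute with the source terms, but the obstruction is more basic: the inductive input you need for $\F_D$ is not obtainable from \eqref{GTildFE} by the methods available.

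The paper's actual route avoids this by proving the Decomposition Lemma (lemma~\ref{lem:decomp}), which rewrites the Taylor remainder $(1-\sum_{j\le D-D'}\T^j)\,F_D^{\La,\Lao}$ as a sum $F_1+F_2+F_3$ of nested Taylor remainders acting on differences of the form $L^{\La,\Lao}_{\bfD}(\otimes_i\O_{A_i})-L^{\La,\Lao}_{D_{\{i,j\}}}(\O_{A_i}\otimes\O_{A_j})L^{\La,\Lao}(\O_{A_k})$, with various choices of partial-diagonal regularization $\bfD$. These objects are governed by the FE \eqref{FE3reg}, whose source term is a \emph{two}-insertion CAG integrated against a \emph{one}-insertion CAG — there the partial-integration trick does apply (this is lemma~\ref{lemmakint}), and the resulting bounds are theorem~\ref{thmCAG3sub} and corollaries~\ref{CAG2cor}, \ref{corCAG3}. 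Only after this decomposition does one apply the integral Taylor remainder formula and the Lowenstein rules, as in \eqref{eq:thm5ex}--\eqref{eq:thm5ex2}; your second and third paragraphs describe that stage essentially correctly (including the factorial bookkeeping that yields $1/\sqrt{\Delta!}$), but they rest on an input bound for $\F_D$ that your first paragraph does not actually deliver.
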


\begin{proof}
Note that, by eq.~\eqref{GDdef}, the quantities $G^{\La,\Lao}_D(\O_{A_1} \otimes \O_{A_2} \otimes \O_{A_3})$
and $F^{\La,\Lao}_D(\O_{A_1} \otimes \O_{A_2} \otimes \O_{A_3})$ only differ by products of CAG's with one
insertion, for which we already have given corresponding bounds in sec.~\ref{sec:CAG1}.
Thus, the task boils down to bounds on $%\partial_{\vec{p}}^{w}
(1-\sum_{j\leq\Delta}\T^{j}_{(x_{1},x_{2},x_{3})\to(x_{3,}x_{3},x_{3})}) \F^{\La,\Lao}_{n,l,D}$. Although this quantity
satisfies a FE, we have not been able to derive our bounds straightforwardly from that FE in the manner described above. The technical obstruction seems to be that one cannot make use of the partial integration trick that has been applied in all previous proofs to bound the momentum integrals, see e.g. eq.\eqref{lemmakint2PI}. This is due to the fact that now we have more than just two CAG's in the source terms, but only two of them are integrated over.

To get around this, we will prove a decomposition of $F^{\La,\Lao}_D$ (and correspondingly its moments $\F^{\La,\Lao}_{D,n,l}$),
into quantities that we already know how to estimate.

\begin{lemma}[Decomposition Lemma]\label{lem:decomp}

Let $x_{3}=0$. For any $D\geq -1$ the functionals  $F^{\La,\Lao}_D$ can be decomposed as
\ben\label{eqdecomp}
(1-\sum_{j\leq D-D'}\T^{j}_{(x_{1},x_{2})\to(0,0)})\ F_{D}^{\Lambda,\Lambda_{0}}(\bigotimes_{i=1}^{3}\O_{A_{i}})=F_1+F_2+F_3 \ ,
\een
where $F_1,F_2,F_3$ are the functions of $x_1,x_2$ given by
\ben\label{eqdecomp1}
\begin{split}
&F_1= \\
&(1-\sum_{j\leq D-[A_{1}]}\hspace{-.3cm} \T_{(x_1,x_2)\to (0,x_2)}^{j})\\
&\quad\times\left[ L^{\Lambda,\Lambda_{0}}_{(D_{\{1,2,3\}}=D,\,  D_{\{2,3\} } =-1)}(\bigotimes_{i=1}^{3}\O_{A_{i}}) - L^{\Lambda,\Lambda_{0}}(\O_{A_2} \otimes \O_{A_3})\, L^{\Lambda,\Lambda_{0}}(\O_{A_{{1}}}) \right]\\
&+\sum_{j_{1}\leq D-[A_{1}]} \T_{(x_1,x_2)\to (0,x_2)}^{j_{1}} (1-\sum_{j_{2}\leq D-D'-j_{1}}\T_{(x_1,x_2)\to (x_1,0)}^{j_{2}}) \\
&\quad\times\left[ L^{\Lambda,\Lambda_{0}}_{(D_{\{1,2,3\}}=D,\, D_{\{2,3\}}=D-[A_{1}]-j_{1})}(\bigotimes_{i=1}^{3}\O_{A_{i}}) - L^{\Lambda,\Lambda_{0}}_{D-[A_{1}]-j_{1}}(\O_{A_2} \otimes \O_{A_3})L^{\Lambda,\Lambda_{0}}(\O_{A_{1}}) \right] \ .
\end{split}
\een
The function $F_2$ is given by the same expression replacing $A_1 \to A_2, D_{\{2,3\}}
\to D_{\{1,3\}}$ and $\T_{(x_1,x_2)\to (0,x_2)}\to \T_{(x_1,x_2)\to (x_{1},0)}$. The function $F_3$ is given by
\ben\label{eqdecomp2}
\begin{split}
&F_3= \\
&(1-\sum_{j\leq D-[A_{1}]}\hspace{-.3cm} \T_{(\frac{x_1+x_2}{\sqrt{2}},\frac{x_1-x_2}{\sqrt{2}})\to (0,\frac{x_1-x_2}{\sqrt{2}})}^{j})\\
&\quad\times\left[ L^{\Lambda,\Lambda_{0}}_{(D_{\{1,2,3\}}=D,\,  D_{\{1,2\} } =-1)}(\bigotimes_{i=1}^{3}\O_{A_{i}}) - L^{\Lambda,\Lambda_{0}}(\O_{A_1} \otimes \O_{A_2})\, L^{\Lambda,\Lambda_{0}}(\O_{A_{{3}}}) \right]\\
&+\sum_{j_{1}\leq D-[A_{3}]} \T_{(\frac{x_1+x_2}{\sqrt{2}},\frac{x_1-x_2}{\sqrt{2}})\to (0,\frac{x_1-x_2}{\sqrt{2}})}^{j_{1}} (1-\sum_{j_{2}\leq D-D'-j_{1}}\T_{(\frac{x_1+x_2}{\sqrt{2}},\frac{x_1-x_2}{\sqrt{2}})\to (\frac{x_1+x_2}{\sqrt{2}},0)}^{j_{2}}) \\
&\quad\times\left[ L^{\Lambda,\Lambda_{0}}_{(D_{\{1,2,3\}}=D,\, D_{\{1,2\}}=D-[A_{3}]-j_{1})}(\bigotimes_{i=1}^{3}\O_{A_{i}}) - L^{\Lambda,\Lambda_{0}}_{D-[A_{3}]-j_{1}}(\O_{A_1} \otimes \O_{A_2})L^{\Lambda,\Lambda_{0}}(\O_{A_{3}}) \right] \ .
\end{split}
\een
Here it is understood that we express the CAG's as functions of $(x_{1}+x_{2}, x_{1}-x_{2})$.
\end{lemma}
\noindent \textbf{Remark:} The CAG's are of course translation invariant; the restriction to $x_{3}=0$ is made only for convenience here.

\vspace{.5cm}

\noindent
The lemma can be proved by an induction in $D$, using at each step that both sides satisfy
the same FE and boundary conditions, and using~\eqref{subsum}. We omit the proof here. Continuing the proof of theorem~\ref{thmGTaylor}, we now expand both sides of the lemma
into a power series in $\varphi$ and $\hbar$.
Recall also that the remainder of a Taylor expansion can be expressed as the following integral:
\ben\label{Rfacttaylor}
\begin{split}
&(1-\sum_{j\leq\Delta}\T^{j}_{\vec x\to\vec y})f(x_1, \dots, x_N) =\\
&\sum_{|v|=
\Delta+1}\frac{(\vec{x}-\vec{y})^{v}}{\Delta!}\int_{0}^{1}\d\tau\, (1-\tau)^{\Delta}\, \partial^{v}\Big[
f(y_{1}+\tau (x_1-y_{1}), \dots, y_{N}-\tau( x_N-y_{N})) \Big] \ .
\end{split}
\een
Using that formula in the decomposition lemma and applying the Lowenstein rules, eqs. \eqref{loew1}, \eqref{loew2} and \eqref{loew3}, as well as the previous bounds on the CAG's with insertions, i.e. \eqref{boundCAG1m} and corollaries \ref{CAG2cor}  and \ref{corCAG3},
we obtain straightforwardly the bound stated in the theorem, but on $\F^{\La,\Lao}_{n,l,D}$, rather than
${\mathcal G}^{\La,\Lao}_{n,l,D}$. As we said, these only differ by products of CAG's with one insertion,
for which we have the bound \eqref{boundCAG1} again. This yields the statement of the theorem, after using also translation invariance in order to proceed to $x_{3}\neq 0$.
As an illustrative example, we estimate a term in $F_3$. Let $\mathbf{D}=(D_{\{1,2,3\}}= D, D_{\{1,2\}}=-1)$.
\begin{equation}\label{eq:thm5ex}
\begin{split}
&\Big|(1-\sum_{j\leq D-[A_{3}]} \T^{j}_{(\frac{x_{1}+x_{2}}{\sqrt{2}},\frac{x_{1}-x_{2}}{\sqrt{2} })\to (0,\frac{x_{1}-x_{2}}{\sqrt{2}})} )\,\L^{\Lambda,\Lambda_{0}}_{\mathbf{D},2n,l}(\otimes_{i=1}^{3}\O_{A_{i}}; \vec{p})\Big|\\
%\leq&\Big| \sum_{|w|=D+1-[A_{3}]}\frac{(x_{1}+x_{2})^{w}}{2^{|w|/2}\ (|w|-1)!} \int_{0}^{1}\d\tau (1-\tau)^{|w|-1}\\
%&\times \partial_{\tau\left(\frac{x_{1}+x_{2}}{\sqrt{2}}\right)}^{w}\,\L^{\Lambda,\Lambda_{0}}_{\mathbf{D},2n,l}(\O_{A_{1}}(\tau\,\frac{x_{1}+x_{2}}{2}+\frac{x_{1}-x_{2}}{2})\otimes \O_{A_{2}}(\tau\, \frac{x_{1}+x_{2}}{2}-\frac{x_{1}-x_{2}}{2})\otimes \O_{A_{3}}(0); \vec{p})\Big|\\
\leq&\frac{|x_{1}+x_{2}|^{D-[A_{3}]+1}}{(D-[A_{3}])!}\sum_{|w|=D+1-[A_{3}]} \Big| \int_{0}^{1}\d\tau (1-\tau)^{|w|-1}\partial_{\tau\left(\frac{x_{1}+x_{2}}{2}\right)}^{w}\\
&\times  \left(e^{i\tau\frac{x_{1}+x_{2}}{2}(p_{1}+\ldots+p_{2n})}  \L^{\Lambda,\Lambda_{0}}_{\mathbf{D},2n,l}(\O_{A_{1}}(\frac{x_{1}-x_{2}}{2})\otimes \O_{A_{2}}(-\frac{x_{1}-x_{2}}{2})\otimes \O_{A_{3}}(-\tau\frac{x_{1}+x_{2}}{2}); \vec{p})\right)\Big|\\
\leq&\, \frac{|x_{1}+x_{2}|^{D-[A_{3}]+1}}{(D-[A_{3}])!}\!\!\!\!\!\!\! \sum_{|w_{1}+w_{2}|=D+1-[A_{3}]}\!\!\!\!\!\!\!\!\!\!\!\!\! c_{\{w_{j}\}} |\vec{p}|^{|w_{1}|}\, \Big| \L^{\Lambda,\Lambda_{0}}_{\mathbf{D},2n,l}(\O_{A_{1}}(\frac{x_{1}-x_{2}}{2})\otimes \O_{A_{2}}(-\frac{x_{1}-x_{2}}{2})\otimes \partial^{w_{2}}\O_{A_{3}}(0); \vec{p})\Big|
\end{split}
\end{equation}
Here we expressed the remainder of the Taylor expansion through eq.\eqref{Rfacttaylor} and made use of the covariance property of the CAG's, eq.\eqref{CAGtrans}. Using the bound on the CAG's with three insertions given in corollary \ref{corCAG3}, we obtain:
\begin{equation}\label{eq:thm5ex2}
\begin{split}
&\Big|(1-\sum_{j\leq D-[A_{3}]} \T^{j}_{(\frac{x_{1}+x_{2}}{\sqrt{2}},\frac{x_{1}-x_{2}}{\sqrt{2} })\to (0,\frac{x_{1}-x_{2}}{\sqrt{2}})} )\,\L^{\Lambda,\Lambda_{0}}_{\mathbf{D},2n,l}(\otimes_{i=1}^{3}\O_{A_{i}}; \vec{p})\Big|\\
\leq&\, \frac{|x_{1}+x_{2}|^{D-[A_{3}]+1}}{(D-[A_{3}])!}\, 2^{D+1-[A_{3}]}\, K^{(D'+D-[A_{3}]+1)(2l+n)^{3}+(4n+8l-3)([A_{1}]+[A_{2}]+1)}\, \frac{m^{D-2n}\, \sqrt{(2n-D)_{+}!} }{|x_{1}-x_{2}|^{[A_{1}]+[A_{2}]+1}}\\
\times& \sqrt{|v_{1}|!|v_{2}|!(|v_{3}|+D-[A_{3}]+1)! ([A_{1}]+[A_{2}]+1)!}\, (\sup(1,\frac{\vec{p}}{m}))^{d(3,n,l,0,D'+D-[A_{3}]+1)}\sum_{\lambda=0}^{2l+n+1}\frac{\log_{+}^{\lambda}(\frac{|\vec{p}|}{m})}{2^{\lambda}\lambda!}\\
\leq &\, m^{-2n-1} \frac{\Big(\tilde{K} m \max(|x_{1}|,|x_{2}|)\Big)^{D+1}}{\min(|x_{1}-x_{2}|, |x_{1}|,|x_{2}|)^{D'+1}}
\frac{\prod_{i=1}^{3}[A_{i}]!}{\sqrt{(D-[A_{3}])!}}\\
&\times (\sup(1,\frac{\vec{p}}{m}))^{d(3,n,l,0,D'+D-[A_{3}]+1)}\sum_{\lambda=0}^{2l+n+1}\frac{\log_{+}^{\lambda}(\frac{|\vec{p}|}{m})}{2^{\lambda}\lambda!}\
\end{split}
\end{equation}
where $\tilde{K}$ is some constant depending on $n$ and $l$. Using translation covariance we obtain a bound compatible with inequality \eqref{GDbound} also for $x_{3}\neq 0$. The other terms on the r.h.s. of the decomposition lemma can be bounded in a similar fashion.
\end{proof}

\section{Proof of thm.~\ref{OPEbound}}\label{secOPEconv}

\setcounter{thm}{0}
We repeat the statement of the theorem for convenience:
\begin{thm}\label{OPEbound}
Let $f_p(x)$ be any smooth function on $x \in \mr^4$ such that the support of the
Fourier transform $\hat f_p(q)$ is contained in a ball $|p-q| \le 1$. Define the
smeared spectator fields by
$$
\varphi(f_p) \equiv \int d^4 x \ \varphi(x)  \ f_p(x) \ .
$$
Then the remainder of the OPE, carried out up to operators of dimension $D=\sum_{i=1}^{3}[A_{i}]+\Delta$, at $l$-loops, is bounded by
\ben
\begin{split}
&\Big| \Big\bra \O_{A_{1}}(x_{1})\O_{A_{2}}(x_{2})\O_{A_{3}}(x_3)\, \varphi(f_{p_{1}})\cdots\varphi(f_{p_{n}}) \Big\ket\\
 &\qquad\qquad- \sum_{C:[C] \leq D}\C_{A_{1}A_{2} A_{3}}^{C}(x_{1},x_{2},x_{3}) \Big\bra \O_{C}(x_3)\, \varphi(f_{p_{1}})\cdots\varphi(f_{p_{n}}) \Big\ket  \Big|\\
 \leq\quad& m^{n-1}\, \prod_{i=1}^{3}[A_{i}]!\ \prod_{j} \sup|\hat{f}_{p_{j}}|\, \sup(1,\frac{|\vec{p}|_{n}}{m})^{(4\sum [A_i] + 2\Delta)(n+l+9/2)+3n}
 \\ &\quad \times
 \sum_{\lambda=0}^{2l+n/2+1}\frac{\log^{\lambda}\sup(1,\frac{|\vec{p}|_{n}}{m})}{2^{\lambda}\lambda!}
 \\
  & \quad \times \frac{1}{\sqrt{\Delta!}}
  \ \ \frac{(\tilde{K}\ m\ {\rm max}(|x_1-x_2|,|x_2-x_3|,|x_1-x_3|) )^{\sum [A_{i}]+1+\Delta}}{( {\rm min}(|x_1-x_2|,|x_2-x_3|,|x_1-x_3|) )^{\sum [A_i]+1}}
\end{split}
\label{ope3}
\een
in $g\varphi^4$-theory.
Here %$d(N,n,l,w,D'):= 2D'(n+l+2(N-1))+\sup(D'+1-2n-|w|, 0)$, and
$[A]$
denotes the canonical dimension of a composite field ${\mathcal O}_A$ as in eq.~\eqref{compop}.
$\tilde{K}$ is a constant depending on $n$ and $l$.
\end{thm}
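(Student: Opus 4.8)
The plan is to reduce the remainder bound \eqref{ope3} to the estimate \eqref{GDbound} on the Taylor-subtracted amputated Green's functions with three insertions, Theorem~\ref{thmGTaylor}, which already carries all of the short-distance structure we need. The first step is to rewrite the left side of \eqref{ope3} purely in terms of the AG moments $\G^{\La,\Lao}_{D,n,l}$. I would start from the relation between full Schwinger functions and amputated functions recorded in the subsection on AG's: the momentum-space correlator $\bra \prod_i \O_{A_i}(x_i)\,\prod_j \hat\vp(q_j)\ket$ decomposes by connected components into a sum over distributions of the $n$ external legs $q_j$ between a single factor carrying all three insertions, $\G^{\La,\Lao}_{|I_1|,l_1}(\otimes_i \O_{A_i};\vec q_{I_1})$, and several pure-$\vp$ connected factors $\bar\L$, each external leg being un-amputated by a propagator $C^{\La,\Lao}(q_k)$. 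Smearing each leg against $\hat f_{p_j}$, whose support lies in the unit ball $|q_j-p_j|\le 1$, then reproduces the left side of \eqref{ope3}.

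The conceptual core is the second step: the identification of the OPE remainder with a Taylor-subtracted AG. Since the coefficients are defined in \eqref{OPEhigh} by letting $\D^C$ act on the regularized AG $G^{0,\Lao}_{[C]-1}$ of the subtracted product $(1-\sum_{j<\Delta_C}\T^j_{\vec x})\otimes_i\O_{A_i}(x_i)$, and since the operators $\O_C$ paired with the extraction maps $\D^C$ realize the functional Taylor expansion of a functional about $\vp=0$, the subtracted sum $\sum_{[C]\le D}\C^C_{A_1A_2A_3}(x_1,x_2,x_3)\,\bra \O_C(x_3)\prod_j\vp(f_{p_j})\ket$ is exactly the Taylor polynomial, to order $\Delta$, of the three-insertion part of the correlator expanded about $(x_3,x_3,x_3)$. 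Subtracting it therefore leaves on that factor precisely $(1-\sum_{j\le\Delta}\T^j_{(x_1,x_2,x_3)\to(x_3,x_3,x_3)})\,\G^{\La,\Lao}_{D,n,l}$ with $D=\sum_i[A_i]+\Delta$, while the pure-$\vp$ factors $\bar\L$ and the external propagators remain inert spectators to the expansion. The limits $\La\to 0$, $\Lao\to\infty$ may be taken at the end because \eqref{GDbound} is uniform in $\Lao$ for $\La\le m$.

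It then remains to assemble the estimates. I would insert \eqref{GDbound} for the three-insertion factor, which directly supplies the ratio $(\tilde K\,m\,\max|x_i-x_j|)^{\sum [A_i]+1+\Delta}/\min|x_i-x_j|^{\sum [A_i]+1}$, the decay $1/\sqrt{\Delta!}$, the factor $\prod_i[A_i]!$ and the logarithmic sum; bound the pure-$\vp$ factors by the CAG estimates \eqref{propout}--\eqref{prop20}; and carry out the $q_j$-integrations using $|C^{\La,\Lao}(q)|\le (q^2+m^2)^{-1}$ and $\int d^4q_j\,|\hat f_{p_j}(q_j)|\le \mathrm{const}\cdot\sup|\hat f_{p_j}|$ over the unit ball, the support constraint converting $|\vec q|$ into $|\vec p|_n$ up to $O(1)$ shifts. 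Keeping track of the number of $\vp$-legs (the even-moment relabeling sends the $n$ spectators to the index of Theorem~\ref{thmGTaylor}, which accounts for the appearance of $n/2$ in the logarithmic range) and collecting the powers generated by the exponent $d(3,n,l,0,D+D')=(4\sum [A_i]+2\Delta)(n+l+4)+\cdots$ together with the propagator integrations then yields the prefactor $m^{n-1}$ and the stated power $(4\sum [A_i]+2\Delta)(n+l+9/2)+3n$ of $\sup(1,|\vec p|_n/m)$.

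The step I expect to be the main obstacle is the second one: making the completeness/Taylor identification rigorous at the level of the regularized functionals. In particular one must reconcile the per-$C$ subtraction orders $j<\Delta_C$ appearing in the individual coefficients \eqref{OPEhigh} with the single global subtraction $1-\sum_{j\le\Delta}\T^j$, for which the relation \eqref{subsum} between different regularization tuples $\bfD$ is the natural tool, and one must verify that the disconnected $\bar\L$ contributions do not interfere with this identity, so that the entire singular short-distance behavior is carried by Theorem~\ref{thmGTaylor}.
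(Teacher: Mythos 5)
Your proposal follows essentially the same route as the paper: the decomposition of the smeared correlator into one factor carrying all three insertions times disconnected $\bar\L$ factors and external propagators, the identification of the OPE remainder with the Taylor-subtracted regularized AG $(1-\sum_{j\leq\Delta}\T^{j}_{\vec{x}\to(x_3,x_3,x_3)})\,\G^{\La,\Lao}_{D,n,l}$ (this is precisely the paper's Lemma~\ref{remainder}, proved in Appendix~\ref{app:remainder} by induction in $D$ using the FE/boundary-condition uniqueness principle), and the final assembly via Theorem~\ref{thmGTaylor} together with the CAG bounds and the trivial propagator estimate. The step you single out as the main obstacle is indeed where the paper invests its effort, and the tools you name for closing it are the right ones.
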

\begin{proof}
A part of the proof can be generalized to $N$ field insertions without any additional effort, so we will keep $N\geq 2$ arbitrary for as long as possible.
Define the remainder functional
\ben\label{Rdef}
R_{D}^{\Lambda,\Lambda_{0}}(\bigotimes_{i=1}^{N}\O_{A_{i}}(x_{i}))\, :=\, G^{\Lambda,\Lambda_{0}}(\bigotimes_{i=1}^{N}\O_{A_{i}}(x_{i}))-\sum_{C:[C]\leq D}\, \C_{A_{1}\ldots A_{N}}^{C}(x_{1},\ldots, x_{N-1})\, L^{\Lambda,\Lambda_{0}}(\O_{C}(x_{N}))
\een
which allows us to write (for the theory with UV and IR cutoffs $\Lao$ and $\La$)
\ben\label{OPER}
\begin{split}
&\Big| \Big\bra \O_{A_{1}}(x_{1})\cdots\O_{A_{N-1}}(x_{N-1})\O_{A_{N}}(x_{N})\, \varphi(f_{p_{1}})\cdots\varphi(f_{p_{n}}) \Big\ket\\
 &\qquad\qquad- \sum_{C:[C]-D'\leq \Delta}\C_{A_{1}\ldots A_{N}}^{C}(x_{1},\ldots,x_{N}) \Big\bra \O_{C}(x_{N})\, \varphi(f_{p_{1}})\cdots\varphi(f_{p_{n}}) \Big\ket  \Big|\\
 =&\sum_{\atop{I_{1}\cup\ldots\cup I_{j}=\{1,\ldots, n\}}{l_{1}+\ldots+l_{j}=l}} \int_{\vec q}  \R^{\La,\Lao}_{D,|I_{1}|,l_{1}}(\bigotimes_{i=1}^{N}\O_{A_{i}}(x_{i}); \vec{q}_{I_{1}})\, \bar\L^{\La,\Lambda_{0}}_{|I_{2}|, l_{2}}(\vec{q}_{I_{2}})\cdots \bar\L^{\La,\Lambda_{0}}_{|I_{j}|, l_{j}}(\vec{q}_{I_{j}}) \, \prod_{i=1}^{n}C^{\La,\Lambda_{0}}(q_{i}) \hat f_{p_i}(q_i)
\end{split}
\een
where $\bar\L^{\La,\Lambda_{0}}_{n,l}$ are the expansion coefficients of the generating functional $\bar{L}^{\La,\Lambda_{0}}(\varphi)=L^{\La,\Lambda_{0}}(\varphi)+\frac{1}{2}\bra \varphi,\, (C^{\La,\Lambda_{0}})^{-1}\star\varphi \ket$ without the momentum conservation delta functions taken out.
We wish to find a bound for the above expression.
Since we already have bounds on $\Lsc_{n,l}$ from \eqref{propout}, and since $C^{\La,\Lambda_{0}}$ can be estimated trivially as $C^{\Lambda,\Lambda_{0}}(p)\leq [\sup(m,|p|)]^{-2}$, we will be concerned with $\R_{D,n,l}^{\La,\Lambda_{0}}$ in the following. The following lemma will allow us to express $\R_{D,n,l}^{\La,\Lambda_{0}}$ in terms of quantities with known bounds as given in the previous sections:

\begin{lemma}\label{remainder}
The remainder functionals satisfy
\ben\label{RGD}
\begin{split}
R_{D}^{\Lambda,\Lambda_{0}}(\bigotimes_{i=1}^{N}\O_{A_{i}}(x_{i}))
&= (1-\sum_{j\leq\Delta}\T^{j}_{\vec{x}\to(x_{N},\dots, x_{N})})\,  {G}^{\Lambda,\Lambda_{0}}_{D} (\bigotimes_{i=1}^{N}\O_{A_{i}}(x_{i}))
\end{split}
\een
with $\Delta=D-\sum_{i=1}^{N}[A_{i}]$.
\end{lemma}
This lemma reduces to lemma 4.1 of \cite{Hollands:2011gf} in the case $N=2$. The proof is given in Appendix~\ref{app:remainder}.

Hence, in order to bound the remainder of the OPE we have to find an estimate on the remainder of the Taylor expansion of the regularized AG's. Here we have theorem \ref{thmGTaylor} in the $N=3$ case. Substituting this bound along with the bound~\eqref{boundCAG0m} on the CAG's without operator insertions into eq.\eqref{OPER}, and also using $C^{\Lambda,\Lambda_{0}}(p)\leq [\sup(m,|p|)]^{-2}$, we obtain the statement of the theorem (note that the resulting bound is independent of $\Lambda\leq m$ and $\Lambda_{0}$, so the cutoffs can be removed safely).
\end{proof}

\section{Proof of thm.~\ref{thmfact}}\label{secFactor}

For convenience, we repeat the statement of this theorem as well:
\begin{thm}\label{thmfact}
Up to any arbitrary but fixed loop order $l$ in $g\varphi^4$-theory, the identity
\ben\label{ope3}
\C_{A_1 A_2 A_3}^B(x_1, x_2, x_3) = \sum_C \C_{A_1 A_2}^C(x_1, x_2) \ \C_{CA_3}^B(x_2, x_3) \ .
\een
holds for all configurations satisfying
$$\frac{|x_{1}-x_2|}{|x_2-x_{3}|} < \frac{1}{\tilde K}$$
for some (sufficiently large) constant $\tilde K>0$ (depending on $l,B$).
\end{thm}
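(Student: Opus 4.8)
The plan is to derive the factorization by first expanding the product $\O_{A_1}(x_1)\O_{A_2}(x_2)$, treating $\O_{A_3}(x_3)$ as a spectator, by means of the convergent two-point OPE established for $N=2$ in~\cite{Hollands:2011gf}, and then to recognize the resulting nested expression as the product of two-point coefficients appearing on the right-hand side of~\eqref{ope2}. By translation invariance of the OPE coefficients we may fix $x_3=0$. Since, by Definition~\ref{defOPE}, each coefficient arises by applying the projection $\D^B$ to a Taylor-subtracted regularized Green's function, I would carry out the whole argument at the level of these functionals and apply $\D^B$ only at the very end.

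First I would establish an ``OPE with a spectator'' identity of the form
\[
G^{0,\Lao}\!\big(\O_{A_1}(x_1)\otimes\O_{A_2}(x_2)\otimes\O_{A_3}(0)\big)
=\sum_{[C]\le M}\C^{C}_{A_1A_2}(x_1,x_2)\,G^{0,\Lao}\!\big(\O_C(x_2)\otimes\O_{A_3}(0)\big)+\mathcal R_M .
\]
By a variant of Lemma~\ref{remainder}, proved through the principle that two hierarchies obeying the same flow equation and boundary conditions coincide (and using~\eqref{subsum}), the remainder $\mathcal R_M$ is a Taylor remainder, of order $M-[A_1]-[A_2]$ in the difference variable $x_1-x_2$, of the three-point Green's function regularized on the single partial diagonal $\{x_1=x_2\}$. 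Bounding $\mathcal R_M$ calls for an estimate of exactly the type of Theorem~\ref{thmGTaylor}, but adapted so that the Taylor subtraction acts only in the $\{1,2\}$-channel while $\O_{A_3}(0)$ stays a spectator; its ingredients are the regularized three-point bound of Corollary~\ref{corCAG3} assembled by the decomposition technique of Lemma~\ref{lem:decomp}. The resulting bound behaves like $(\tilde K\,m\,|x_1-x_2|)^{M-[A_1]-[A_2]+1}/\sqrt{(M-[A_1]-[A_2]+1)!}$ divided by a positive power of $|x_2|$, so that $\mathcal R_M\to0$ as $M\to\infty$ once $|x_1-x_2|/|x_2|$ is small.

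Next I would apply the overall Taylor subtraction $(1-\sum_{j<\Delta}\T^{j}_{\vec x})$ prescribed by~\eqref{OPEhigh}, and then $\D^B$, to both sides of the above identity. The left-hand side reproduces $\C^{B}_{A_1A_2A_3}(x_1,x_2,0)$ by definition, while each summand on the right becomes $\C^{C}_{A_1A_2}(x_1,x_2)\,\D^B\{(1-\sum\T^{j}_{x_2})\,G^{0,\Lao}_{[B]-1}(\O_C(x_2)\otimes\O_{A_3}(0))\}=\C^{C}_{A_1A_2}(x_1,x_2)\,\C^{B}_{CA_3}(x_2,0)$, again by~\eqref{OPEhigh}. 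The genuinely delicate structural point is to reconcile the three distinct Taylor-subtraction schemes (and regularization prescriptions) that occur — the overall one in $(x_1,x_2)$, the inner one in $(x_1-x_2)$ hidden inside $\C^{C}_{A_1A_2}$, and the one in $x_2$ inside $\C^{B}_{CA_3}$ — so that the nested subtractions assemble into the single overall subtraction. This is precisely the bookkeeping organized by the decomposition Lemma~\ref{lem:decomp}. Together with $\D^B\mathcal R_M\to0$, this yields~\eqref{ope2}.

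The main obstacle, and the origin of the constant $\tilde K$, is the convergence of the sum over $C$ and the legitimacy of the rearrangement implicit in performing the two expansions in succession. The two-point coefficient is suppressed like $(\tilde K\,m\,|x_1-x_2|)^{[C]-[A_1]-[A_2]}/\sqrt{([C]-[A_1]-[A_2])!}$, a consequence of Corollary~\ref{CAG2cor} together with Definition~\ref{defOPE}, whereas $\C^{B}_{CA_3}(x_2,0)$ \emph{grows} with the intermediate dimension $[C]$ — roughly like $K^{[C]}\,|x_2|^{-[C]}$ times factorial and logarithmic factors, again traceable to Corollary~\ref{CAG2cor} — because the intermediate field $\O_C$ may carry arbitrarily large dimension while $B$ is held fixed. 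The product of the two estimates is geometric in the ratio $|x_1-x_2|/|x_2-x_3|$, and therefore converges only when this ratio lies below the reciprocal of a (generally large) constant $\tilde K$ depending on $l$ and $B$, which is exactly the regime asserted in the theorem; in the free theory the intermediate growth is absent, which is why there one may take $\tilde K=1$.
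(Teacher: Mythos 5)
Your overall strategy coincides with the paper's: first establish a ``partial OPE'' in which $\O_{A_1}\O_{A_2}$ is expanded while $\O_{A_3}$ is a spectator, with a remainder expressed (via the FE/boundary-condition uniqueness principle) as a Taylor remainder in $x_1-x_2$ of the three-point function regularized only on the diagonal $\{x_1=x_2\}$ (this is Lemma~\ref{partremainder} and Theorem~\ref{partOPEbound}), and then pass to the level of coefficients by applying $\D^B$. However, your second step has a genuine gap. You claim that applying the overall subtraction $(1-\sum_{j<\Delta}\T^j_{\vec x})$ followed by $\D^B$ to the partial-OPE identity turns the left side into $\C^B_{A_1A_2A_3}$ and each summand on the right into $\C^C_{A_1A_2}\,\C^B_{CA_3}$ ``by definition''. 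It does not: the Taylor operator $\T^j_{\vec x}$ acts on all of the $x$-dependence, including the factor $\C^C_{A_1A_2}(x_1,x_2)$ multiplying $G^{0,\Lao}(\O_C\otimes\O_{A_3})$, so it cannot be pulled past that factor to produce the inner subtraction defining $\C^B_{CA_3}$; moreover Definition~\ref{defOPE} involves the \emph{regularized} $G^{0,\Lao}_{[B]-1}$, whereas your partial-OPE identity is for the unregularized $G^{0,\Lao}$. The paper resolves this differently: it applies only $\D^B$ (no subtraction) to both the full and the iterated expansions, uses the identity $\D^B\{R^{0,\Lao}_D\}=0$ for $D\ge[B]$ (eq.~\eqref{DBRD}, which encodes the subtraction/regularization mismatch), and is then left with extra terms $\sum_{[C]<[B]}\C^C_{A_1A_2A_3}\,\D^B L^{0,\Lao}(\O_C)$ on each side, since $\D^B L^{0,\Lao}(\O_C)\neq 0$ for $[C]<[B]$. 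These are removed by an induction in $[B]$ (eqs.~\eqref{factinduct}--\eqref{Csumcancel}), a step entirely absent from your argument; appealing to Lemma~\ref{lem:decomp} for this ``bookkeeping'' is not a substitute, as that lemma decomposes $F_D$ into differently regularized pieces and does not address the product structure of the iterated expansion.

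A second, smaller gap concerns convergence of the sum over $C$. You argue term by term, multiplying a decay estimate for $\C^C_{A_1A_2}$ by a growth estimate for $\C^B_{CA_3}$. The paper instead bounds the \emph{remainder of the partial sum} directly (Theorem~\ref{partOPEbound}), which is what actually enters the limit $D_1\to\infty$ in eq.~\eqref{factinduct}. Your term-by-term route would additionally require a bound on $\C^B_{CA_3}$ uniform in the intermediate label $C$ at fixed $B$, including control of the $\sqrt{|v_C|!}$-type factorials and of the number of operators of a given dimension; this is not established by Corollary~\ref{CAG2cor} as stated, and the claimed clean geometric behaviour in $|x_1-x_2|/|x_2-x_3|$ does not follow without it. Your intuition for why $\tilde K$ is large and why $\tilde K=1$ suffices in the free theory is consistent with the paper's remarks, but as written it is a heuristic rather than a proof.
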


\setcounter{thm}{5}

\noindent
{\em Proof:}
The proof has two main steps, which will be discussed in more detail below:

\begin{enumerate}
\item Show that the OPE still converges on the spacetime domain $\frac{|x_{1}-x_2|}{|x_2-x_{3}|} < \frac{1}{\tilde K}$ when performed in successive steps.
\item Show that this implies the relation \eqref{ope3} on the level of OPE coefficients.
\end{enumerate}
In the next two sections we will focus on these two issues.
%In the following let us use the convention $x_{2}=0$ for convenience.

\subsection{Partial OPE}

We have already shown that the remainder
\ben
G^{\Lambda,\Lambda_{0}}(\bigotimes_{i=1}^{3}\O_{A_{i}})-\sum_{[C]\leq D} \C_{A_{1}A_{2} A_{3}}^{C}\, L^{\Lambda,\Lambda_{0}}(\O_{C})
\een
 goes to zero as $D\to\infty$. Instead of expanding all three operator insertions, we now consider a similar expansion in just two of these operators, say $\O_{A_{1}}$ and $\O_{A_{2}}$, and leave the other one, $\O_{A_{3}}$, untouched, namely the expression
 \ben\label{partialremainder}
 G^{\Lambda,\Lambda_{0}}(\bigotimes_{i=1}^{3}\O_{A_{i}}(x_{i}))-\!\! \sum_{[C]\leq D}\!\! \C_{A_{1}A_{2}}^{C}(x_{1}-x_{2})\, G^{\Lambda,\Lambda_{0}}(\O_{C}(x_{2})\otimes \O_{A_{3}}(x_{3})) \ .
 \een
 The following lemma will allow us to bound the remainder of this partial OPE.

\begin{lemma} \label{partremainder}
Fix $D \ge -1$, and define regularization parameters as
$$\bfD = (D_{\{1,2,3\}}=-1,D_{\{2,3\}}=-1, D_{\{1,3\}}=-1,D_{\{1,2\}}=D) \ .$$
The remainder of the partial OPE can be expressed as
\ben\label{eq:partrem}
\begin{split}
&G^{\Lambda,\Lambda_{0}}(\bigotimes_{i=1}^{3}\O_{A_{i}}(x_{i}))-\!\! \sum_{[C]\leq D}\!\! \C_{A_{1}A_{2}}^{C}(x_{1}-x_{2})\, G^{\Lambda,\Lambda_{0}}(\O_{C}(x_{2})\otimes \O_{A_{3}}(x_{3}))\\
%=: R^{\Lambda,\Lambda}_{(1,2), D}(\bigotimes_{i=1}^{3}\O_{A_{i}}(x_{i}))\\
&=(1-\sum_{j=0}^{D-[A_{1}]-[A_{2}]} \T^{j}_{(x_{1},x_{2},x_{3})\to (x_{2},x_{2},x_{3})})\\
 &\quad\times\Big[ L^{\Lambda,\Lambda_{0}}_{\bfD}(\bigotimes_{i=1}^{3}\O_{A_{i}})- L^{\Lambda,\Lambda_{0}}_{D}(\O_{A_{1}}\otimes\O_{A_{2}})L^{\Lambda,\Lambda_{0}}(\O_{A_{3}})-L^{\Lambda,\Lambda_{0}}(\O_{A_{2}}\otimes\O_{A_{3}})L^{\Lambda,\Lambda_{0}}(\O_{A_{1}})\\
&\qquad- L^{\Lambda,\Lambda_{0}}(\O_{A_{1}}\otimes\O_{A_{3}})L^{\Lambda,\Lambda_{0}}(\O_{A_{2}})+L^{\Lambda,\Lambda_{0}}(\O_{A_{1}})L^{\Lambda,\Lambda_{0}}(\O_{A_{2}})L^{\Lambda,\Lambda_{0}}(\O_{A_{3}})\Big] \ .
\end{split}
\een
\end{lemma}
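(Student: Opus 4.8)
The plan is to prove \eqref{eq:partrem} by exactly the strategy used for Lemma~\ref{remainder} in Appendix~\ref{app:remainder}, now carrying the operator $\O_{A_{3}}(x_{3})$ along as a \emph{spectator}. The starting observation is that the bracketed combination on the right side of \eqref{eq:partrem} is nothing but the (non-connected) amputated Green's function $G^{\La,\Lao}_{\bfD}(\bigotimes_{i=1}^{3}\O_{A_{i}})$ assembled as in \eqref{GD} out of the $\bfD$-regularized CAG's: the connected triple contributes $L^{\La,\Lao}_{\bfD}(\bigotimes_{i}\O_{A_{i}})$, the $\{1,2\}$-connected pair contributes $L^{\La,\Lao}_{D}(\O_{A_{1}}\otimes\O_{A_{2}})$ (consistent with $D_{\{1,2\}}=D$), the pairs containing $A_{3}$ are left unregularized (since $D_{\{1,3\}}=D_{\{2,3\}}=-1$), and the fully disconnected piece is $L^{\La,\Lao}(\O_{A_{1}})L^{\La,\Lao}(\O_{A_{2}})L^{\La,\Lao}(\O_{A_{3}})$. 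Thus \eqref{eq:partrem} asserts that the partial-OPE remainder equals $(1-\sum_{j\le D-[A_{1}]-[A_{2}]}\T^{j}_{(x_{1},x_{2},x_{3})\to(x_{2},x_{2},x_{3})})\,G^{\La,\Lao}_{\bfD}(\bigotimes_{i}\O_{A_{i}})$, the exact partial analogue of the full-OPE identity \eqref{RGD}.

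Concretely, I would first fix $x_{3}=0$ without loss of generality using the translation covariance \eqref{CAGtrans}, which the OPE coefficients, the AG's and the Taylor operators all respect, and restore general $x_{3}$ at the very end. Next I would insert the $N=2$ definition \eqref{OPEhigh} of $\C_{A_{1}A_{2}}^{C}(x_{1}-x_{2})$ and decompose each of the AG's $G^{\La,\Lao}(\bigotimes_{i}\O_{A_{i}})$ and $G^{\La,\Lao}(\O_{C}(x_{2})\otimes\O_{A_{3}}(x_{3}))$ into products of connected pieces via \eqref{GD}. The combinatorics of \eqref{GD} sorts the terms according to whether $A_{3}$ sits in the same connected block as the expanded pair $\{A_{1},A_{2}\}$ or factors off. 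The central input is then the \emph{reproducing property} of the bilocal OPE coefficients, which is precisely the content of the $N=2$ case of Lemma~\ref{remainder} (i.e. Lemma~4.1 of \cite{Hollands:2011gf}): summing $\sum_{[C]\le D}\C_{A_{1}A_{2}}^{C}(x_{1}-x_{2})$ against an insertion of $\O_{C}(x_{2})$ into a CAG reconstructs exactly the low-order Taylor part $\sum_{j\le D-[A_{1}]-[A_{2}]}\T^{j}_{(x_{1},x_{2})\to(x_{2},x_{2})}$ of the correspondingly regularized $\O_{A_{1}}(x_{1})\O_{A_{2}}(x_{2})$ insertion, up to the regularized remainder. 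Applying this block by block—once for the channel in which $A_{3}$ is connected to the pair (producing $L^{\La,\Lao}_{\bfD}$), once for the channel in which $A_{3}$ factors off (producing $L^{\La,\Lao}_{D}(\O_{A_{1}}\otimes\O_{A_{2}})\,L^{\La,\Lao}(\O_{A_{3}})$), and trivially for the remaining disconnected terms—and then collecting the Taylor operators, reproduces term by term the right side of \eqref{eq:partrem}.

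An equivalent and perhaps cleaner route, in the spirit of the uniqueness principle invoked throughout the paper, is to verify that both sides of \eqref{eq:partrem}, regarded as $\varphi$- and $\hbar$-expanded hierarchies, obey the same flow equation with the same boundary conditions. Since the OPE coefficients are $\La$-independent constants, the $\La$-derivative of the left side is governed by the homogeneous AG flow equation \eqref{GFE}, while the right side is governed by the flow equations \eqref{FE3reg} of its constituent regularized CAG's, which combine—via \eqref{GD}—into an AG flow equation of the same homogeneous type as \eqref{GFE}/\eqref{GTildFE}; matching the inhomogeneous source terms then reduces to the identity one level down in the number of insertions, so that an induction in $D$ (as for the Decomposition Lemma~\ref{lem:decomp}, and using the decomposition \eqref{subsum}) closes the argument.

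I expect the main obstacle to be the bookkeeping of the regularization degrees across the different connected channels—keeping $D_{\{1,2\}}=D$ while the diagonals involving $A_{3}$ stay unregularized ($D_{\{1,3\}}=D_{\{2,3\}}=-1$)—and, relatedly, checking that the truncation of the sum over $C$ at $[C]\le D$ matches exactly the Taylor truncation at order $j\le D-[A_{1}]-[A_{2}]$, so that the spectator $\O_{A_{3}}(x_{3})$ never mixes into the $\O_{A_{1}}\O_{A_{2}}\to\O_{C}$ reconstruction. Once these matchings are in place the identity follows, and translation covariance restores the general configuration $x_{3}\neq 0$.
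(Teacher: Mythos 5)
Your second route is essentially the paper's proof. The paper splits the left side of \eqref{eq:partrem} into a disconnected piece, namely $\bigl[L^{\La,\Lao}(\O_{A_{1}})L^{\La,\Lao}(\O_{A_{2}})-L^{\La,\Lao}(\O_{A_{1}}\otimes\O_{A_{2}})\bigr]L^{\La,\Lao}(\O_{A_{3}})-\sum_{[C]\le D}\C_{A_{1}A_{2}}^{C}L^{\La,\Lao}(\O_{C})L^{\La,\Lao}(\O_{A_{3}})=-R_{D}^{\La,\Lao}(\O_{A_{1}}\otimes\O_{A_{2}})L^{\La,\Lao}(\O_{A_{3}})$, which is rewritten directly by the $N=2$ case of lemma~\ref{remainder}, and a connected piece, which is shown to satisfy the same flow equation and the same trivial boundary conditions as the corresponding bracket on the right side of \eqref{eq:partrem}; the matching of the inhomogeneous source terms is again precisely lemma~\ref{remainder} for $N=2$. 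One caveat on your first route: the ``reproducing property'' you invoke is available, as the $N=2$ case of lemma~\ref{remainder}, only for the channel in which $\O_{A_{3}}$ factors off. For the channel in which $\O_{A_{3}}$ is connected to the pair, the claim that $\sum_{[C]\le D}\C_{A_{1}A_{2}}^{C}\,L^{\La,\Lao}(\O_{C}\otimes\O_{A_{3}})$ reconstructs the low-order Taylor part of the triple insertion is exactly the statement to be proved, not a block-by-block consequence of the two-point lemma, so that route as described is circular for that channel. This is the gap that the flow-equation/uniqueness argument (your second route, and the paper's) closes; no separate induction in $D$ is needed there, since lemma~\ref{remainder} (itself proved by induction in $D$) can be invoked directly to match the source terms.
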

\noindent
The proof of this lemma can be found in appendix \ref{app:partrem}.
Lemma~\ref{partremainder}, combined with our bounds on the CAG's [see \eqref{boundCAG1}, theorem~\ref{thmCAG2} and theorem~\ref{thmCAG3sub}], allows us to estimate the remainder of the partial OPE.

\begin{thm}\label{partOPEbound}
Assume $|x_{1}-x_{2}|\leq |x_{2}-x_{3}|$ and let $\Lambda\leq m$. There exists a constant $\tilde K>0$ depending on $n$ and $l$, such that for all $D-[A_{1}]-[A_{2}]=\Delta$
\ben\label{partOPEboundeq}
\begin{split}
&\Big|\partial_{\vec{p}}^{w}
\Big( \G_{2n,l}^{\Lambda,\Lambda_{0}}(\bigotimes_{i=1}^{3}\O_{A_{i}}(x_{i});\vec{p})-\!\! \sum_{[C]\leq D}\!\! \C_{A_{1}A_{2}}^{C}(x_{1}-x_{2})\, \G_{2n,l}^{\Lambda,\Lambda_{0}}(\O_{C}(x_{2})\otimes \O_{A_{3}}(x_{3}); \vec{p}) \Big)
%R^{\Lambda,\Lambda}_{(1,2), D}(\bigotimes_{i=1}^{3}\O_{A_{i}}(x_{i}))
\Big|\\
%\times&\sqrt{(|w|+D')!}\prod_{i=1}^{3}\sqrt{|w^{(i)}|!}  \sum_{\mu=0}^{\tilde{d}(n,l+1,w,D')}\frac{1}{\sqrt{\mu!}}\left(\frac{|\vec{p}|}{\Lambda}\right)^{\mu}\, \sum_{\lambda=0}^{\frak{l}+2}\frac{\log^{\lambda}(\sup(\frac{|\vec{p}|}{\kappa}), \frac{\kappa}{m})}{2^{\lambda}\lambda!}\\
&\leq \, m^{-2n-|w|-1 }\, \tilde{K}^{|w|}|w|! \ \prod_{i=1}^{3}[A_{i}]!\ \sum_{\lambda=0}^{2l+n+1}\frac{\log^{\lambda}_{+}(\frac{|\vec{p}|}{m})}{2^{\lambda}\lambda!}\ \sup(1,m |\vec{x}|)^{|w|}\\
&\qquad\times \left(\frac{\tilde{K}\ \sup(\frac{|\vec{p}|}{m},1)^{2n+2l+5}}{\min(|x_{2}-x_{3}|,|x_{1}-x_{3}|, 1/m)}\right)^{D'+1}  \\
&\qquad\times \frac{\Big(\tilde{K}\,  \sup(\frac{|\vec{p}|}{m},1)^{2n+2l+5} \,|x_{1}-x_{2}| \Big)^{\Delta} }{ \min\Big(|x_{2}-x_{3}|^{\Delta-1}\cdot \min(|x_{2}-x_{3}|,|x_{1}-x_{3}|), \sqrt{\Delta!}/m^{\Delta}\Big)}\,% \left( \frac{\tilde{K}\,  \sup(\frac{|\vec{p}|}{m},1)^{2n+2l+5} \,|x_{1}-x_{2}|}{ \min(|x_{2}-x_{3}|, |x_{1}-x_{3}|, \Delta /m)}\right)
\end{split}
\een
where $D'=[A_{1}+[A_{2}]+[A_{3}]$.
\end{thm}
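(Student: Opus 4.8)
The plan is to reduce the partial-OPE remainder to the CAG bounds already in hand by means of Lemma~\ref{partremainder}. First I would expand both sides of \eqref{eq:partrem} into moments in $\varphi$ and $\hbar$, so that the left side becomes exactly the quantity $\partial^w_{\vec p}(\cdots)$ estimated in \eqref{partOPEboundeq}, while the right side becomes the operator $(1-\sum_{j\le\Delta}\T^j_{(x_1,x_2,x_3)\to(x_2,x_2,x_3)})$ applied to the moments of the bracket, which is nothing but the regularized three-point amputated function $G_{\bfD}$ with $\bfD=(D_{\{1,2,3\}}=-1,D_{\{2,3\}}=-1,D_{\{1,3\}}=-1,D_{\{1,2\}}=D)$ [cf.\ \eqref{GD}, \eqref{GDdef}]. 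Since the base point $(x_2,x_2,x_3)$ differs from $(x_1,x_2,x_3)$ only in the first slot, writing the Taylor remainder in the integral form \eqref{Rfacttaylor} produces a prefactor $(x_1-x_2)^{v}$ with $|v|=\Delta+1$, a weight $\int_0^1 d\tau\,(1-\tau)^{\Delta}/\Delta!$, and a derivative $\partial^{v}$ in the first argument evaluated at the shifted configuration $(x_2+\tau(x_1-x_2),x_2,x_3)$. By the Lowenstein rules \eqref{loew1}, \eqref{loew2} and \eqref{loew3} this derivative is moved onto the insertion, replacing $\O_{A_1}$ by $\partial^{v}\O_{A_1}$, a field of raised dimension $[A_1]+\Delta+1$; each term of the bracket is then a CAG (or product of CAG's) of the type already bounded.

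The first, power-counting, estimate would apply corollary~\ref{corCAG3} (after splitting the four-parameter CAG via \eqref{subsum} into the two-parameter CAG's to which theorem~\ref{thmCAG3sub} applies) to the genuinely three-fold term, and corollary~\ref{CAG2cor} together with \eqref{boundCAG1m} to the two- and one-insertion pieces. The crucial bookkeeping point is that, after the derivative insertion, the relevant dimension of the first factor has grown so that in the notation of theorem~\ref{thmCAG3sub} one has $\delta_1=([A_1]+\Delta+1)+[A_2]-D_{\{1,2\}}=1\ge 0$ and $\delta_2=D_{\{1,2\}}+[A_3]-D_{\{1,2,3\}}=D'+\Delta+1$, so the bound is applicable and contributes $|x_1-x_2|^{-1}\max(|x_1-x_3|,|x_2-x_3|)^{-\delta_2}$. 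Combining the Taylor prefactor $|x_1-x_2|^{\Delta+1}$ with the $|x_1-x_2|^{-\delta_1}=|x_1-x_2|^{-1}$ from the bound leaves the advertised small factor $|x_1-x_2|^{\Delta}$, while under the hypothesis $|x_1-x_2|\le|x_2-x_3|$ the remaining denominator is comparable to the first entry $|x_2-x_3|^{\Delta-1}\min(|x_2-x_3|,|x_1-x_3|)^{D'+1}$ of the $\min$ in \eqref{partOPEboundeq} (here replacing $\max$ by the smaller $\min$ only weakens the bound). The factorials $\sqrt{(|v_1|+\Delta+1)!\,(|w|+D'+\Delta+2)!}$ produced by the bound are, after division by $\Delta!$, only polynomially growing in $\Delta$ and so are absorbed into the constant $\tilde K^{\Delta}$.

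The second, convergence, estimate is the delicate one and is the main obstacle. To gain the factor $1/\sqrt{\Delta!}$ that is ultimately responsible for the convergence of the successive OPE in Thm.~\ref{thmfact}, one cannot leave the total diagonal unregularized: with $D_{\{1,2,3\}}=-1$ both factorials $\sqrt{(|v_1|+\Delta+1)!}$ and $\sqrt{(|w|+D'+\Delta+2)!}$ grow with $\Delta$, and their product is beaten only to polynomial order by $\Delta!$. Instead I would, exactly as in the proof of theorem~\ref{thmGTaylor} [cf.\ the model computation \eqref{eq:thm5ex}--\eqref{eq:thm5ex2}], reorganize the bracket into contributions in which the total diagonal is regularized, raising $D_{\{1,2,3\}}$ to the order of the subtraction by the boundary-condition relations and \eqref{subsum} together with a nested Taylor subtraction. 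Then the combinatorial factor $(|w|+D'-D_{\{1,2,3\}})!$ stays $\Delta$-independent, only the single factorial $\sqrt{(|v_1|+\Delta+1)!}$ carries the $\Delta$-growth, and $\sqrt{(|v_1|+\Delta+1)!}/\Delta!\lesssim \Delta^{c}/\sqrt{\Delta!}$; taken together with the $m$-scaling of corollary~\ref{corCAG3} at $\Lambda\le m$ this produces the second denominator entry $\sqrt{\Delta!}/m^{\Delta}$.

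Finally I would carry out the harmless $\tau$-integration: the weight $(1-\tau)^{\Delta}$ is integrable and the shifted point stays within distance $|x_1-x_2|\le|x_2-x_3|$ of $x_2$, so all distances remain comparable to the stated minima uniformly in $\tau$. I would then collect the $m$-powers, the momentum factors $\sup(1,|\vec p|/m)^{d(3,n,l,\cdots)}$ and $\sum_\lambda\log_{+}^{\lambda}(|\vec p|/m)/(2^\lambda\lambda!)$, the factors $\sup(1,m|\vec x|)^{|w|}|w|!$ arising when the external momentum derivatives $\partial^w_{\vec p}$ strike the translation phases of \eqref{CAGtrans}, and the combinatorial $\prod_i[A_i]!$. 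Since both the power-counting and the convergence estimates are valid upper bounds, they may be combined through the $\min$ in the denominator of \eqref{partOPEboundeq}, which completes the proof.
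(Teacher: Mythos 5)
Your overall strategy coincides with the paper's: reduce to Lemma~\ref{partremainder}, rewrite the Taylor remainder via \eqref{Rfacttaylor}, push the resulting spacetime derivatives onto $\O_{A_1}$ with the Lowenstein rules, and insert the bounds \eqref{boundCAG1m}, corollary~\ref{CAG2cor} and corollary~\ref{corCAG3} together with translation covariance and \eqref{xbehav1}. But there is a genuine gap at the first step of your estimate. You use the integral remainder formula at order $\Delta+1$, which puts $|v|=\Delta+1$ derivatives on the bracket of Lemma~\ref{partremainder}, evaluated along the segment $(x_2+\tau(x_1-x_2),x_2,x_3)$, $\tau\in[0,1]$. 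That bracket is regularized on the $\{1,2\}$ diagonal only to degree $D_{\{1,2\}}=D=[A_1]+[A_2]+\Delta$, i.e.\ it is of class $C^{\Delta}$ but not $C^{\Delta+1}$ there. In your own bookkeeping $\delta_1=1$ after the derivative insertion, so theorem~\ref{thmCAG3sub} (and likewise corollary~\ref{CAG2cor} for the $L_D(\O_{A_1}\otimes\O_{A_2})L(\O_{A_3})$ term) produces the factor $(\tau|x_1-x_2|)^{-1}$ evaluated at the \emph{shifted} first argument, not $|x_1-x_2|^{-1}$. The $\tau$-integration is therefore not harmless: $\int_0^1(1-\tau)^{\Delta}\tau^{-1}\,d\tau$ diverges, and your assertion that ``all distances remain comparable to the stated minima uniformly in $\tau$'' fails exactly for the distance between the first two insertion points, which tends to zero as $\tau\to0$.

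The paper circumvents this by splitting the remainder operator as $(1-\sum_{j\le\Delta}\T^j)=(1-\sum_{j\le\Delta-1}\T^j)-\T^{\Delta}$. The integral-form remainder then carries only $|v|=\Delta$ derivatives, so $\delta_1=0$ and no singular factor arises anywhere on the integration segment; this piece, combined with \eqref{xbehav1}, yields the first entry of the $\min$ with the $(|x_1-x_2|/|x_2-x_3|)^{\Delta}$ suppression. The leftover $\T^{\Delta}$ is a finite Taylor coefficient at the coincident base point $(x_2,x_2,x_3)$ (finite precisely because of the $C^{\Delta}$ regularization), and summing $(x_1-x_2)^v/v!$ over $|v|=\Delta$ against the $\sqrt{(|v_1|+\Delta)!}$ growth of the bounds is what generates the $\sqrt{\Delta!}/m^{\Delta}$ alternative in the denominator. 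You therefore do not need the re-regularization of the total diagonal proposed in your third paragraph to obtain that entry; moreover, raising $D_{\{1,2,3\}}$ would change the object fixed by Lemma~\ref{partremainder} unless compensated by additional $\D^{C}\{\cdot\}\,L^{\La,\Lao}(\O_C)$ counterterms, which you do not supply. Your remaining bookkeeping (the $\sup(1,m|\vec x|)^{|w|}$ from derivatives hitting translation phases, absorption of ratios like $\sqrt{(a+\Delta)!}/\Delta!$ into $\tilde K^{\Delta}$, the momentum and logarithm factors) is consistent with the paper.
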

\begin{proof}
We use lemma \ref{partremainder} to express the l.h.s. as the remainder of a Taylor expansion in $x_{1}-x_{2}$. We write this Taylor expansion as $(1-\sum_{j\leq \Delta}\T^{j})=(1-\sum_{j\leq \Delta-1}\T^{j})-\T^{\Delta}$ and apply again the well known formula \eqref{Rfacttaylor} for the term in brackets on the r.h.s.. For the expressions on the r.h.s. of lemma \ref{partremainder} we can insert our bounds from \eqref{boundCAG1m}, corollary \ref{CAG2cor} and corollary \ref{corCAG3}. Note that these bounds are always given for configurations where one spacetime argument is zero. Hence, in order to be able to use our bounds we also have to make use of the translation properties of the CAG's. For example, we can write
\ben
\L^{\Lambda,\Lambda_{0}}_{2n,l}(\O_{A_{1}}(x_{1})\otimes\O_{A_{3}}(x_{3}); \vec{p})=\e^{i x_{3}(p_{1}+\ldots+p_{2n})}\L^{\Lambda,\Lambda_{0}}_{2n,l}(\O_{A_{1}}(x_{1}-x_{3})\otimes\O_{A_{3}}(0); \vec{p})
\een
and apply corollary \ref{CAG2cor} for the term on the right. Note that the momentum derivatives can either act on the CAG's or on the exponential factors. Derivatives on the latter simply give rise to  powers of the spacetime variables. This is accounted for by the factor $\sup(1,m |\vec{x}|)^{|w|}$ in the bound \eqref{partOPEboundeq}. When estimating the $\tau$-integral in the formula for remainder of the Taylor expansion we also make use of our assumption $|x_{1}-x_{2}|\leq|x_{2}-x_{3}|$ to replace expressions like
\ben\label{xbehav1}
\frac{(1-\tau)\, |x_{1}-x_{2}|}{|x_{2}-x_{3}+\tau (x_{1}-x_{2})|}=\frac{|x_{1}-x_{2}|}{|x_{2}-x_{3}|}\, \frac{1-\tau}{|\frac{x_{2}-x_{3}}{|x_{2}-x_{3}|}+\tau \frac{x_{1}-x_{2}}{|x_{2}-x_{3}|}|}\, \leq\, \frac{|x_{1}-x_{2}|}{|x_{2}-x_{3}|}
\een
under the integral from $\tau=0$ to $\tau=1$. As an example, consider the following contribution to the l.h.s. of \eqref{partOPEboundeq}:
\ben
\begin{split}
&\Big|\partial_{\vec{p}}^{w}(1-\sum_{j=0}^{D-[A_{1}]-[A_{2}]}\T^{j}_{(x_{1},x_{2},x_{3})\to (x_{2},x_{2},x_{3})})\\
&\qquad\qquad\times\L_{2n_{1},l_{1}}^{\Lambda,\Lambda_{0}}(\O_{A_{1}}\otimes\O_{A_{3}}; {p}_{1},\ldots,p_{2n_{1}-1})\L^{\Lambda,\Lambda_{0}}_{2n_{2},l_{2}}(\O_{A_{2}}; p_{2n_{1}},\ldots,p_{2n})
\Big|\\
&\leq \, 2\, \frac{|x_{1}-x_{2}|^{\Delta}}{(\Delta-1)!}\sum_{|v|=\Delta}\sup_{0\leq\tau\leq 1}\Big|\partial_{\vec{p}}^{w}(1-\tau)^{\Delta-1} e^{ix_{3}(p_{1}+\ldots+p_{2n_{1}-1})} e^{ix_{2}(p_{2n_{1}}+\ldots+p_{2n})} \\
&  \L_{2n_{1},l_{1}}^{\Lambda,\Lambda_{0}}(\partial^{v}\O_{A_{1}}(x_{2}-x_{3}+\tau(x_{1}-x_{2}))\otimes\O_{A_{3}}(0); {p}_{1},\ldots,p_{2n_{1}-1})\L^{\Lambda,\Lambda_{0}}_{2n_{2},l_{2}}(\O_{A_{2}}(0); p_{2n_{1}},\ldots,p_{2n}) \Big|\\
&\leq \sup_{0\leq\tau\leq 1} \left( \frac{|1-\tau|^{\Delta-1}\,|x_{1}-x_{2}|^{\Delta}}{|x_{2}-x_{3}+\tau(x_{1}-x_{2})|^{\Delta+[A_{1}]+[A_{3}]+1}}\right)\, \frac{\prod_{i=1}[A_{i}]!}{\sqrt{[A_{2}]!}}\\
&\ \times m^{[A_{2}]-2n-|w|-1}\, \tilde{K}^{|w|+D'+\Delta}|w|!  \sup(\frac{|\vec{p}|}{m},1)^{{d}(3,n,l,w,D'+\Delta)}\, \sum_{\lambda=0}^{2l+n+1}\frac{\log^{\lambda}_{+}(\frac{|\vec{p}|}{m})}{2^{\lambda}\lambda!}\ \sup(1,m |\vec{x}|)^{|w|}\Big|\\
&\leq \, m^{-2n-|w|-1 }\, \tilde{K}^{|w|}|w|! \ \prod_{i=1}[A_{i}]!\ \sum_{\lambda=0}^{2l+n+1}\frac{\log^{\lambda}_{+}(\frac{|\vec{p}|}{m})}{2^{\lambda}\lambda!}\ \sup(1,m |\vec{x}|)^{|w|}\\
&\times \left(\frac{\tilde{K}\ \sup(\frac{|\vec{p}|}{m},1)^{2n+2l+5}}{\min(|x_{2}-x_{3}|,|x_{1}-x_{3}|)}\right)^{[A_{1}]+[A_{3}]+1} \left(\frac{\tilde{K}\ \sup(\frac{|\vec{p}|}{m},1)^{2n+2l+5}}{ [A_{2}]!^{1/2[A_{2}]}/m)}\right)^{[A_{2}]} \\
&\times \left( \frac{\tilde{K}\,  \sup(\frac{|\vec{p}|}{m},1)^{2n+2l+5} \,|x_{1}-x_{2}|}{ |x_{2}-x_{3}|}\right)^{\Delta-1}\, \left( \frac{\tilde{K}\,  \sup(\frac{|\vec{p}|}{m},1)^{2n+2l+5} \,|x_{1}-x_{2}|}{ \min(|x_{1}-x_{3}|, |x_{2}-x_{3}|)}\right)
\end{split}
\een
Here we have used the known bounds for the CAG's with one and two insertions as well as the inequality \eqref{xbehav1}.   In a similar manner one checks that all the other terms on the r.h.s. of lemma \ref{partremainder} satisfy the bound \eqref{partOPEboundeq} as well.
\end{proof}

\noindent \textbf{Remark:} Obviously, the r.h.s. of \eqref{partOPEboundeq} vanishes as we take the limit $\Delta\to\infty$ provided that
\ben\label{convS}
\left( \frac{\tilde{K}\,  \sup(\frac{|\vec{p}|}{m},1)^{n+2l+5} \,|x_{1}-x_{2}|}{|x_{2}-x_{3}|}\right)\, <\, 1
\een
which defines an open spacetime region $(x_{1},x_{2},x_{3})\in\mathcal{S}(n,l,\vec{p},m)$ for any finite values of $n,l,m$ and $\vec{p}$. Hence, the partial OPE converges in that region, which is the upshot of this section.

\subsection{Proof of Factorization}

We are now ready to give the proof of the factorization identity,  theorem \ref{thmfact}. 
%Let us assume for the remainder of this section that the spacetime arguments of the OPE coefficients are chosen such that the partial OPE in $x_{1}-x_{2}$ converges at zero momentum, i.e let $|x_{1}-x_{2}|/|x_{2}-x_{3}|<1/\tilde{K}$ according to eq.\eqref{convS}.
%
We have found in eq.\eqref{Rdef} for $N=3$ that
\ben\label{OPEalt1}
\begin{split}
G^{0,\Lambda_{0}}(\bigotimes_{i=1}^{3}\O_{A_{i}}(x_{i}))&=\sum_{[C]\leq D} \C_{A_{1}A_{2} A_{3}}^{C}(x_{1},x_{2},x_{3})\, L^{0,\Lambda_{0}}(\O_{C}(x_{3}))+R_{D}^{0,\Lao}(\bigotimes_{i=1}^{3}\O_{A_{i}}(x_{i}))\quad .
\end{split}
\een
We may also write
\ben\label{OPEalt2}
\begin{split}
G^{0,\Lambda_{0}}(\bigotimes_{i=1}^{3}\O_{A_{i}}(x_{i}))&=\sum_{[C_{1}]\leq D_{1}}\!\!\! \C_{A_{1}A_{2}}^{C_{1}}(x_{1},x_{2})\ G^{0,\Lambda_{0}}(\O_{C_{1}}(x_{2})\otimes\O_{A_{3}}(x_{3}))+ (\text{partial remainder})
\\
&=\sum_{[C_{1}]\leq D_{1}}\sum_{[C_{2}]\leq D_{2}} \C_{A_{1}A_{2}}^{C_{1}}(x_{1},x_{2})\ \C_{C_{1}A_{3}}^{C_{2}}(x_{2},x_{3}) L^{0,\Lambda_{0}}(\O_{C_{2}}(x_{3}))\\
&+ \sum_{[C_{1}]\leq D_{1}}\!\!\! \C_{A_{1}A_{2}}^{C_{1}}(x_{1},x_{2}) R^{0,\Lambda_{0}}_{D_{2}}(\O_{C_{1}}(x_{2})\otimes\O_{A_{3}}(x_{3}))  \\
&+(\text{partial remainder})  \ ,
\end{split}
\een
where the `partial remainder', i.e. the remainder of the partial OPE, is the expression~\eqref{partialremainder} with $D$ replaced by $D_1$.
The key point is that this partial remainder is bounded by theorem~\ref{partOPEbound}.

We will exploit these two different ways of writing $G^{0,\Lambda_{0}}$. To do this, we
firstly note the relation
\ben\label{DBRD}
\D^{B}\, \{R_{D}^{0,\Lao}(\bigotimes_{i=1}^{N}\O_{A_{i}}) \}=0
\een
for $D\geq [B]$. To see this, we use lemma \ref{remainder} to express the remainder $R_{D}^{0,\Lao}$ in terms of the regularized AG's, $G^{0,\Lambda_{0}}_{D}$. Then we rewrite $G^{0,\Lambda_{0}}_{D}$ using \eqref{GDdef}. According to the boundary conditions for the $F_{D}^{\Lambda,\Lambda_{0}}$ functionals, eq.\eqref{Gbound1}, we find that $\D^{B}F_{D}^{0,\Lambda_{0}}$ vanishes for $D\geq [B]$. It remains to show that the contribution in \eqref{GDdef} to the remainder from the product of CAG's with one insertion vanishes, too. It reads explicitly
\ben
\begin{split}
&(1-\sum_{j\leq\Delta}\T^{j}_{\vec{x}\to \vec{y}=(x_{N},\dots, x_{N})})\,\prod_{i=1}^{N}\ L^{0,\Lambda_{0}}(\O_{A_{i}}(x_{i}))\\
&=\sum_{|v|=\Delta+1}\frac{(\vec{x}-\vec{y})^{v}}{(|v|-1)!}\int_{0}^{1}\d\tau\, (1-\tau)^{|v|-1}\, \partial^{v}_{\vec{y}+\tau(\vec{x}-\vec{y})} \prod_{i=1}^{N}\ L^{0,\Lambda_{0}}(\O_{A_{i}}(x_{N}+\tau( x_{i}-x_{N})))
\end{split}
\een
where again we expressed the remainder of the Taylor expansion through an integral formula. Using the Lowenstein rule \eqref{loew1} to pull the derivatives into the CAG's, it follows that this expression vanishes as we apply $\D^{B}$ due to the boundary condition \eqref{BCL1}. This yields eq.\eqref{DBRD}.

Secondly, the boundary condition \eqref{BCL1} implies
\ben
\D^{B}\, \sum_{[C]\leq D} \C_{A_{1}\ldots A_{N}}^{C}\, L^{0,\Lambda_{0}}(\O_{C}) = \D^{B}\, \sum_{[C]< [B]} \C_{A_{1}\ldots A_{N}}^{C}\, L^{0,\Lambda_{0}}(\O_{C})\, +\,  \C_{A_{1}\ldots A_{N}}^{B}
\een
for $D\geq [B]$. Thus, applying the operator $\D^{B}$ to~\eqref{OPEalt1} and~\eqref{OPEalt2} and choosing $D,D_{2}\geq [B]$, we obtain
\ben
\begin{split}
&\D^{B}\sum_{[C]\leq [B]} \  \C_{A_{1}A_{2} A_{3}}^{C}(x_{1},x_{2},x_{3})\, L^{0,\Lambda_{0}}(\O_{C}(x_{3}))\, +\,  \C_{A_{1}A_{2} A_{3}}^{B}(x_{1},x_{2},x_{3})\\
=\, &\D^{B}\sum_{[C_{1}]\leq D_{1}}\sum_{[C_{2}]\leq [B]} \C_{A_{1}A_{2}}^{C_{1}}(x_{1},x_{2})\ \C_{C_{1}A_{3}}^{C_{2}}(x_{2},x_{3}) L^{0,\Lambda_{0}}(\O_{C_{2}}(x_{3}))\\
&+\sum_{[C_{1}]\leq D_{1}}\C_{A_{1}A_{2}}^{C_{1}}(x_{1},x_{2})\ \C_{C_{1}A_{3}}^{B}(x_{2},x_{3})+\D^{B}(\text{partial remainder})
%R_{(1,2), D_{1}}^{0,\Lambda_{0}}(\bigotimes_{i=1}^{3}\O_{A_{i}}(x_{i}))
\end{split}
\een
where we used eq.\eqref{DBRD} in order to get rid of the other two remainder terms. Now we take $D_{1}\to\infty$ in this equation. Assuming that our spacetime arguments  satisfy the condition $\frac{|x_{1}-x_2|}{|x_2-x_{3}|} <1/\tilde{K}$, it follows from theorem \ref{partOPEbound} that the expression $\D^{B}$(partial remainder) tends to zero (here it is crucial that we have a bound also for arbitrary momentum derivatives of the remainder of the partial OPE, since $\D^{B}$ also includes derivatives $\partial_{\vec{p}}^{w}$). Note that, according to theorem \ref{partOPEbound}, the constant $\tilde{K}$ depends on the loop order $l$ and on $n_{B}$, where $B=\{n_{B}, w_{B}\}$. We get
\ben\label{factinduct}
\begin{split}
&\D^{B}\{ \sum_{[C]< [B]} \C_{A_{1}A_{2} A_{3}}^{C}\, L^{0,\Lambda_{0}}(\O_{C})\} +  \C_{A_{1}A_{2} A_{3}}^{B}\\
&\qquad =\lim_{D_{1}\to\infty}\Big[ \D^{B}\{\!\! \sum_{[C_{1}]\leq D_{1}}\sum_{[C_{2}]< [B]} \C_{A_{1}A_{2} }^{C_{1}}\, \C_{C_{1} A_{3}}^{C_{2}}\, L^{0,\Lambda_{0}}(\O_{C_{2}}) \} + \sum_{[C_{1}]\leq D_{1}}\C_{A_{1}A_{2} }^{C_{1}}\, \C_{C_{1} A_{3}}^{B}\Big]\quad .
\end{split}
\een
The proof can now be finished by an induction in $[B]$:

\paragraph*{Induction start ($[B]=0$):} Let  $\frac{|x_{1}-x_2|}{|x_2-x_{3}|} <1/\tilde{K}$. In this case  equation \eqref{factinduct} yields
\ben
%\D^{B}\{G^{0,\Lambda_{0}}(\bigotimes_{i=1}^{3}\O_{A_{i}})\}=
\C_{A_{1} A_{2} A_{3}}^{B} = \lim_{D\to\infty}\sum_{[C]\leq D}\C_{A_{1} A_{2}}^{C}\, \C_{C A_{3}}^{B}
\een
as claimed in theorem \ref{thmfact}.

\paragraph*{Induction step:} Assume theorem \ref{thmfact} holds for all $\C_{A_{1}A_{2} A_{3}}^{C}$ with $[C]< D_{0}\in\mathbb{N}$.  Then for any $B$ with $[B]=D_{0}$ we  have according to equation \eqref{factinduct} (assuming also $\frac{|x_{1}-x_2|}{|x_2-x_{3}|} <1/\tilde{K}$, which depends on $n_{B}$)
\ben\label{Csumcancel}
\begin{split}
 %\D^{B}\{G^{0,\Lambda_{0}}(\bigotimes_{i=1}^{3}\O_{A_{i}})\}=
 \sum_{[C] < [B]}& \C_{A_{1}A_{2} A_{3}}^{C}\,\D^{B} L^{\Lambda,\Lambda_{0}}(\O_{C}(0))+\C_{A_{1}A_{2} A_{3}}^{B} \\
  =&\lim_{D_{1}\to\infty}\sum_{[C_{1}]\leq D_{1}} \C_{A_{1} A_{2}}^{C_{1}}\Big(\sum_{[C]< [B]} \C_{C_{1} A_{3}}^{C}\,\D^{B} L^{\Lambda,\Lambda_{0}}(\O_{C}(0))+ \C_{C_{1}A_{3}}^{B}\Big)
\end{split}
\een
Recall, however, that our induction hypothesis $\C_{A_{1}A_{2} A_{3}}^{C} = \lim_{D_{1}\to\infty}\sum_{[C_{1}]\leq D_{1}}\C_{A_{1} A_{2}}^{C_{1}}\, \C_{C_{1} A_{3}}^{C}$ holds for all $C$ with $[C]<[B]=D_{0}$ and for $\frac{|x_{1}-x_2|}{|x_2-x_{3}|} <1/\tilde{K}_{C}$, where $\tilde{K}_{C}$ depends on $C$. Assuming  $\frac{|x_{1}-x_2|}{|x_2-x_{3}|}<\min_{[C]< [B]}(1/\tilde{K}_{C})$, we find that the sums over $C$ on both sides of equation \eqref{Csumcancel} cancel, and we are left with the claim of theorem \ref{thmfact}. \qed

\section{Summary and outlook}

We have first established convergence of the OPE in Euclidean $\varphi^{4}$-theory for the expansion of $N=3$ fields, generalizing the statement for $N=2$ proved in \cite{Hollands:2011gf}. Obviously one would like to further generalize this result to an arbitrary number $N$ of fields, which would yield strong support to the viewpoint put forward in \cite{Hollands:2008vq, Hollands:2009fc, Hollands:2008wr} that quantum field theory can be defined in terms of OPE coefficients and one-point functions as fundamental objects. This generalization appears to be possible, following a similar argument as in the present paper. However, it also requires substantial amounts of additional bookkeeping on top of the already quite heavy derivation presented here. Therefore, we will leave this topic to a future publication.

Concerning our second main result, the factorization identity satisfied by the three-point OPE coefficients, some possible lines of future work come to mind immediately. First, one would again like to generalize this result to the factorizations of an $N$-point coefficient. Further, it would be interesting to try and improve our bounds in order to prove factorization of the OPE not only for configurations $|x_{1}-x_{2}|\ll |x_{2}-x_{3}|$, as in theorem \ref{thmfact}, but for any $|x_{1}-x_{2}| < |x_{2}-x_{3}|$. This would yield the stronger associativity/consistency conditions proposed in \cite{Hollands:2008vq}. However, this task seems significantly more challenging, and it is certainly possible that one can not improve on the factorization property presented here.

\paragraph{Acknowledgements:} S.H. acknowledges support through ERC grant QC~\&~C~259562. J.H. acknowledges support by Leverhulme Trust grant F/00407/BM. We are grateful to Ch. Kopper for comments and discussions.

\appendix

\section{Proof of lemma \ref{remainder}}\label{app:remainder}

The proof follows the same strategy as the proof of lemma 4.1 in \cite{Hollands:2011gf}. Let us assume $x_{N}=0$ for the moment. To begin with, consider the telescopic sum
\ben\label{Ftelescop}
G^{\Lambda,\Lambda_{0}}(\bigotimes_{i=1}^{N}\O_{A_{i}})=G^{\Lambda,\Lambda_{0}}_{D}(\bigotimes_{i=1}^{N}\O_{A_{i}})+\sum_{j=0}^{D}[G^{\Lambda,\Lambda_{0}}_{j-1}(\bigotimes_{i=1}^{N}\O_{A_{i}})-G^{\Lambda,\Lambda_{0}}_{j}(\bigotimes_{i=1}^{N}\O_{A_{i}})]
\een
with $ D<D'=\sum_{i}[A_{i}]$. Further, note that for any $j\in\mathbb{N}$ we have
\ben\label{Gdiff}
G^{\Lambda,\Lambda_{0}}_{j-1}(\bigotimes_{i=1}^{N}\O_{A_{i}})-G^{\Lambda,\Lambda_{0}}_{j}(\bigotimes_{i=1}^{N}\O_{A_{i}})=\sum_{C:[C]=j}\D^{C}\{F^{0,\Lambda_{0}}_{j-1}(\bigotimes_{i=1}^{N}\O_{A_{i}})\}\, L^{\Lambda,\Lambda_{0}}(\O_{C}(0))\ ,
\een
which can be seen by checking that both sides of the equation satisfy the same linear homogeneous FE and the same boundary conditions, which are
\begin{align*}
\partial^{w}_{\vec{p}}\left(\G^{0,\Lambda_{0}}_{n,l,j-1}(\bigotimes_{i=1}^{N}\O_{A_{i}};\vec{0})-\G^{0,\Lambda_{0}}_{n,l,j}(\bigotimes_{i=1}^{N}\O_{A_{i}};\vec{0})\right)=&\, 0 \quad &\text{ for }n+|w|< j \\
\partial^{w}_{\vec{p}}\left(\G^{0,\Lambda_{0}}_{n,l,j-1}(\bigotimes_{i=1}^{N}\O_{A_{i}};\vec{0})-\G^{0,\Lambda_{0}}_{n,l,j}(\bigotimes_{i=1}^{N}\O_{A_{i}};\vec{0})\right)=&\, \partial^{w}_{\vec{p}}\F^{0,\Lambda_{0}}_{n,l,j-1}(\bigotimes_{i=1}^{N}\O_{A_{i}};\vec{0})\, &\text{ for }n+|w|= j \\
\partial^{w}_{\vec{p}}\left(\G^{\Lambda_{0},\Lambda_{0}}_{n,l,j-1}(\bigotimes_{i=1}^{N}\O_{A_{i}};\vec{p})-\G^{\Lambda_{0},\Lambda_{0}}_{n,l,j}(\bigotimes_{i=1}^{N}\O_{A_{i}};\vec{p})\right)=&\, 0 \, &\text{ for }n+|w|>j \\
\end{align*}
in both cases. Further, we will need the identity
\ben\label{TG}
\begin{split}
&\T^{\Delta+1}_{\vec{x}\to\vec{0}}\, {G}_{D+1}^{\Lambda,\Lambda_{0}}(\bigotimes_{i=1}^{N}\O_{A_{i}})\\
&=(-1)^{N+1}\sum_{[C]=D+1}\D^{C}\left\{\T^{\Delta+1}_{\vec{x}\to\vec{0}}\prod_{i=1}^{N}L^{0,\Lambda_{0}}(\O_{A_{i}})\right\}\, L^{\Lambda,\Lambda_{0}}(\O_{C}(0))\\
&=(-1)^{N+1}\sum_{[C]=D+1}\D^{C}\left\{(1-\sum_{j\leq \Delta}\T^{j}_{\vec{x}\to\vec{0}})\prod_{i=1}^{N}L^{0,\Lambda_{0}}(\O_{A_{i}})\right\}\, L^{\Lambda,\Lambda_{0}}(\O_{C}(0))\, ,
\end{split}
\een
where $\Delta=D-D'$.
The first equality follows again directly by comparison of the FE and boundary conditions of both sides of the equation. In the last line we applied the formula for the Taylor expansion with remainder, eq.\eqref{Rfacttaylor}. The Taylor expansion terms of degree $j>\Delta+1$ vanish due to the boundary conditions of the CAG's with one insertion.

  Note also that the boundary conditions for the CAG's with one insertion, eq.\eqref{BCL1}, imply
\ben\label{1insBC}
\D^{C}\{G^{0,\Lambda_{0}}_{[C]-1}(\bigotimes_{i=1}^{N}\O_{A_{i}})\}=\D^{C}\{F^{0,\Lambda_{0}}_{[C]-1}(\bigotimes_{i=1}^{N}\O_{A_{i}})\}\quad \text{ for }D< D'\, .
\een

We now prove lemma \ref{remainder} by induction in $D$:
\begin{enumerate}

\item Induction start: For $D=-1$ the sum in eq.\eqref{Rdef} vanishes and we obtain the lemma for $D=-1$, $R_{D=-1}^{\Lambda,\Lambda_{0}}(\bigotimes_{i=1}^{N}\O_{A_{i}})=G_{D=-1}^{\Lambda,\Lambda_{0}}(\bigotimes_{i=1}^{N}\O_{A_{i}})$, trivially.

\item Induction step: Assume the lemma holds up to order $D$, i.e. assume
\ben
R_{\tilde{D}}^{\Lambda,\Lambda_{0}}(\bigotimes_{i=1}^{N}\O_{A_{i}})=(1-\sum_{j\leq\tilde{D}-D'}\T^{j}_{\vec{x}\to\vec{0}})\, {G}^{\Lambda,\Lambda_{0}}_{\tilde{D}} (\bigotimes_{i=1}^{N}\O_{A_{i}}(x_{i}))
\een
for all $\tilde{D}\leq D$. Using again eq.\eqref{Rdef}, we then get
\ben
\begin{split}
&R_{D+1}^{\Lambda,\Lambda_{0}}(\bigotimes_{i=1}^{N}\O_{A_{i}})=R_{D}^{\Lambda,\Lambda_{0}}(\bigotimes_{i=1}^{N}\O_{A_{i}})-\sum_{[C]=D+1}\C_{A_{1}\ldots A_{N}}^{C}\, L^{\Lambda,\Lambda_{0}}(\O_{C})\\
=& (1-\sum_{j\leq\Delta}\T^{j}_{\vec{x}\to\vec{0}})\, {G}^{\Lambda,\Lambda_{0}}_{D} (\bigotimes_{i=1}^{N}\O_{A_{i}}(x_{i}))
-\sum_{[C]=D+1}\C_{A_{1}\ldots A_{N}}^{C}\, L^{\Lambda,\Lambda_{0}}(\O_{C})\\
=& (1-\sum_{j\leq\Delta+1}\T^{j}_{\vec{x}\to\vec{0}})\,{G}^{\Lambda,\Lambda_{0}}_{D+1} (\bigotimes_{i=1}^{N}\O_{A_{i}}(x_{i})) \\
+&(1-\sum_{j\leq\Delta}\T^{j}_{\vec{x}\to\vec{0}})\left\{ {G}^{\Lambda,\Lambda_{0}}_{D} (\bigotimes_{i=1}^{N}\O_{A_{i}}(x_{i})) -  {G}^{\Lambda,\Lambda_{0}}_{D+1} (\bigotimes_{i=1}^{N}\O_{A_{i}}(x_{i}))\right\}\\
+&\,\T^{\Delta+1}_{\vec{x}\to\vec{0}}\, {G}_{D+1}^{\Lambda,\Lambda_{0}}(\bigotimes_{i=1}^{N}\O_{A_{i}})
-\sum_{[C]=D+1}\C_{A_{1}\ldots A_{N}}^{C}\, L^{\Lambda,\Lambda_{0}}(\O_{C})
\end{split}
\een
where $\Delta=D-D'$. Using eqs.\eqref{Gdiff} and \eqref{TG} to replace the corresponding terms in the last two lines and also recalling the definition of the OPE coefficients $\C_{A_{1}\ldots A_{N}}^{C}$, eq.\eqref{OPEhigh}, we find that the last three terms cancel out (in the case $\Delta<0$ one also has to take into account eq.\eqref{1insBC} to see this), leaving the claim of the lemma at order $D+1$ in the case $x_{N}=0$.

 The case $x_{N}\neq 0$ then follows by translation covariance.
%One can see this explicitly by considering the moments of the functionals and applying the \emph{translation operator} $\exp(i x_{N}\sum_{i=1}^{n}p_{i})$ to both sides of the $x_{N}=0$-lemma. Renaming the coordinates $x_{i}+x_{N}:=\tilde{x}_{i}$ for all $i<N$ yields the lemma for arbitrary $x_{N}$.
\hfill\qedsymbol

\end{enumerate}

\section{Proof of lemma \ref{partremainder}}\label{app:partrem}

We will divide the problem into two parts. Consider first the following contribution to the r.h.s. of eq.\eqref{eq:partrem}:
\ben\label{lemmaParthalf1}
\begin{split}
&\left[ L^{\Lambda,\Lambda_{0}}(\O_{A_{1}})L^{\Lambda,\Lambda_{0}}(\O_{A_{2}}) -  L^{\Lambda,\Lambda_{0}}(\O_{A_{1}}\otimes\O_{A_{2}}) \right] L^{\Lambda,\Lambda_{0}}(\O_{A_{3}})-\sum_{[C]\leq D}\C_{A_{1}A_{2}}^{C}\, L^{\Lambda,\Lambda_{0}}(\O_{C})L^{\Lambda,\Lambda_{0}}(\O_{A_{3}})\\
&=-R^{\Lambda,\Lambda_{0}}_{D}(\O_{A_{1}}\otimes\O_{A_{2}})L^{\Lambda,\Lambda_{0}}(\O_{A_{3}})\\
&=-(1-\!\!\!\!\!\!\!\sum_{j=0}^{D-[A_{1}]-[A_{2}]}\!\!\! \T^{j}_{(x_{1},x_{2},x_{3})\to (x_{2},x_{2},x_{3})}) \Big[ L^{\Lambda,\Lambda_{0}}_{D}(\O_{A_{1}}\otimes\O_{A_{2}})-L^{\Lambda,\Lambda_{0}}(\O_{A_{1}})L^{\Lambda,\Lambda_{0}}(\O_{A_{2}})\Big]L^{\Lambda,\Lambda_{0}}(\O_{A_{3}})
\end{split}
\een
Here we have used lemma \ref{remainder} in the last step. %This proves the first half of lemma \ref{partremainder}.
Now consider
\ben\label{Partunreg}
L^{\Lambda,\Lambda_{0}}(\bigotimes_{i=1}^{3}\O_{A_{i}})- L^{\Lambda,\Lambda_{0}}(\O_{A_{1}}\otimes\O_{A_{3}})L^{\Lambda,\Lambda_{0}}(\O_{A_{2}})- L^{\Lambda,\Lambda_{0}}(\O_{A_{2}}\otimes\O_{A_{3}})L^{\Lambda,\Lambda_{0}}(\O_{A_{1}})
\een
The FE for this expression is of the form
\ben
\begin{split}
\partial_{\Lambda} [\text{eq.}\eqref{Partunreg}]=&\frac{1}{2}\bra \varp \, ,\, \dot{C}\star \varp  \ket\, [\text{eq.}\eqref{Partunreg}] -\bra \varp  [\text{eq.}\eqref{Partunreg}] \, ,\, \dot{C}\star \varp L^{\Lambda,\Lambda_{0}} \ket\\
-& \bra \varp  G^{\Lambda,\Lambda_{0}}(\O_{A_{1}}\otimes\O_{A_{2}}) \, ,\, \dot{C}\star \varp L^{\Lambda,\Lambda_{0}}(\O_{A_{3}}) \ket\ ,
\end{split}
\een
and the boundary conditions are trivial
\ben\label{PartregBC}
\partial_{\vec{p}}^{w}[\text{eq.}\eqref{Partunreg}]^{\Lambda_{0},\Lambda_{0}}_{n,l}(\vec{p})=0\quad \text{for all }n,l,w.
\een
Note that $\sum_{[C]\leq D}\C_{A_{1}A_{2}}^{C} L^{\Lambda,\Lambda_{0}}(\O_{C}\otimes\O_{A_{3}})$ obeys the same boundary conditions. Subtracting this expression from eq.\eqref{Partunreg} and taking the derivative with respect to $\Lambda$ then yields
\ben
\begin{split}
\partial_{\Lambda}[\bullet]=&\frac{1}{2}\bra \varp \, ,\, \dot{C}\star \varp  \ket\, [\bullet] -\bra \varp  [\bullet] \, ,\, \dot{C}\star \varp L^{\Lambda,\Lambda_{0}} \ket\\
-& \bra \varp  \left[G^{\Lambda,\Lambda_{0}}(\O_{A_{1}}\otimes\O_{A_{2}})-\sum_{[C]\leq D}\C_{A_{1}A_{2}}^{C} L^{\Lambda,\Lambda_{0}}(\O_{C}) \right] ,\, \dot{C}\star \varp L^{\Lambda,\Lambda_{0}}(\O_{A_{3}})\\
=&\frac{1}{2}\bra \varp \, ,\, \dot{C}\star \varp  \ket\, [\bullet] -\bra \varp  [\bullet] \, ,\, \dot{C}\star \varp L^{\Lambda,\Lambda_{0}} \ket\\
-&(1-\sum_{j=0}^{D-[A_{1}]-[A_{2}]}\T^{j}_{(x_{1},x_{2},x_{3})\to(x_{2},x_{2},x_{3})})\\
&\qquad\bra \varp  \left[ L^{\Lambda,\Lambda_{0}}_{D}(\O_{A_{1}}\otimes\O_{A_{2}})- L^{\Lambda,\Lambda_{0}}(\O_{A_{1}})L^{\Lambda,\Lambda_{0}}(\O_{A_{2}}) \right] ,\, \dot{C}\star \varp L^{\Lambda,\Lambda_{0}}(\O_{A_{3}})\\
\end{split}
\een
where $[\bullet]$ stands for the difference of eq.\eqref{Partunreg} and $\sum_{[C]\leq D}\C_{A_{1}A_{2}}^{C} L^{\Lambda,\Lambda_{0}}(\O_{C}\otimes\O_{A_{3}})$ and where again we used lemma \ref{remainder} to simplify the remainder of the 2-point OPE. But this is exactly the FE satisfied by the expression
\ben\label{partregBCtr}
\begin{split}
(1-\sum_{j=0}^{D-[A_{1}]-[A_{2}]}\T^{j}_{(x_{1},x_{2},x_{3})\to(x_{2},x_{2},x_{3})})\Big[L^{\Lambda,\Lambda_{0}}_{\bf D}(\bigotimes_{i=1}^{3}\O_{A_{i}})-& L^{\Lambda,\Lambda_{0}}(\O_{A_{1}}\otimes\O_{A_{3}})L^{\Lambda,\Lambda_{0}}(\O_{A_{2}})\\
-& L^{\Lambda,\Lambda_{0}}(\O_{A_{2}}\otimes\O_{A_{3}})L^{\Lambda,\Lambda_{0}}(\O_{A_{1}})\Big]
\end{split}
\een
with $\bfD = (D_{\{1,2,3\}}=-1,D_{\{2,3\}}=-1, D_{\{1,3\}}=-1,D_{\{1,2\}}=D)$. The terms in eq.\eqref{partregBCtr}  also satisfy the trivial boundary conditions of the form \eqref{PartregBC}. Thus, we conclude that eq.\eqref{Partunreg} minus  $\sum_{[C]\leq D}\C_{A_{1}A_{2}}^{C} L^{\Lambda,\Lambda_{0}}(\O_{C}\otimes\O_{A_{3}})$ is equal to eq.\eqref{partregBCtr}. Putting this together with eq.\eqref{lemmaParthalf1} we have the proof of the lemma.\hfill \qedsymbol

%\bibliography{bibOPE}
%merlin.mbs apsrev4-1.bst 2010-07-25 4.21a (PWD, AO, DPC) hacked
%Control: key (0)
%Control: author (8) initials jnrlst
%Control: editor formatted (1) identically to author
%Control: production of article title (-1) disabled
%Control: page (0) single
%Control: year (1) truncated
%Control: production of eprint (0) enabled

%\bibliographystyle{utphys}
%\bibliography{bibOPE}

\begin{thebibliography}{99}

\bibitem{Wilson:1971bg}
K.~G. Wilson, ``{Renormalization group and critical phenomena. 1.
  Renormalization group and the Kadanoff scaling picture},'' {\em Phys.Rev.}
  {\bf B4} (1971)  3174--3183.

\bibitem{Wilson:1971dh}
K.~G. Wilson, ``Renormalization group and critical phenomena. 2. phase space
  cell analysis of critical behavior,'' {\em Phys.Rev.} {\bf B4} (1971)
  3184--3205.

\bibitem{zimbrand}
W.~Zimmermann, {\em Lectures on Elementary Particles and Quantum Field Theory,
  Brandeis Summer Institute in Theoretical Physics}.
\newblock MIT Press, Cambridge, Mass., 1970.

\bibitem{Hollands:2011gf}
S.~Hollands and C.~Kopper, ``{The operator product expansion converges in
  perturbative field theory},''
  \href{http://dx.doi.org/10.1007/s00220-012-1457-4}{{\em Commun.Math.Phys.}
  {\bf 313} (2012)  257--290},
\href{http://arxiv.org/abs/1105.3375}{{\tt arXiv:1105.3375 [hep-th]}}.
%%CITATION = ARXIV:1105.3375;%%.

\bibitem{Bostelmann:2005fa}
H.~Bostelmann, ``{Operator product expansions as a consequence of phase space
  properties},'' \href{http://dx.doi.org/10.1063/1.2007567}{{\em J.Math.Phys.}
  {\bf 46} (2005)  082304},
\href{http://arxiv.org/abs/math-ph/0502004}{{\tt arXiv:math-ph/0502004
  [math-ph]}}.
%%CITATION = MATH-PH/0502004;%%.

\bibitem{Huang:2007fa}
Y.-Z. Huang and L.~Kong, ``{Full field algebras},''
  \href{http://dx.doi.org/10.1007/s00220-007-0224-4}{{\em Commun.Math.Phys.}
  {\bf 272} (2007)  345--396},
\href{http://arxiv.org/abs/math/0511328}{{\tt arXiv:math/0511328 [math.QA]}}.
%%CITATION = MATH/0511328;%%.

\bibitem{guida}
R.~Guida and C.~Kopper, ``{All-order uniform momentum bounds for the massless
  $\phi^4$ theory in four dimensional Euclidean space},''
\href{http://arxiv.org/abs/1103.5692}{{\tt arXiv:1103.5692 [math-ph]}}.
%%CITATION = ARXIV:1103.5692;%%.

\bibitem{Polchinski:1983gv}
J.~Polchinski, ``{Renormalization and Effective Lagrangians},'' {\em
  Nucl.Phys.} {\bf B231} (1984)  269--295.

\bibitem{Wegner:1972ih}
F.~J. Wegner and A.~Houghton, ``{Renormalization group equation for critical
  phenomena},'' {\em Phys.Rev.} {\bf A8} (1973)  401--412.

\bibitem{Wetterich:1992yh}
C.~Wetterich, ``{Exact evolution equation for the effective potential},''
\href{http://dx.doi.org/10.1016/0370-2693(93)90726-X}{{\em Phys.Lett.} {\bf
  B301} (1993)  90--94}.
%%CITATION = PHLTA,B301,90;%%.

\bibitem{Muller:2002he}
V.~F. Muller, ``{Perturbative renormalization by flow equations},''
  \href{http://dx.doi.org/10.1142/S0129055X03001692}{{\em Rev.Math.Phys.} {\bf
  15} (2003)  491},
\href{http://arxiv.org/abs/hep-th/0208211}{{\tt arXiv:hep-th/0208211
  [hep-th]}}.
%%CITATION = HEP-TH/0208211;%%.

\bibitem{Hollands:2008vq}
S.~Hollands, ``{Quantum field theory in terms of consistency conditions. I.
  General framework, and perturbation theory via Hochschild cohomology},''
  \href{http://dx.doi.org/10.3842/SIGMA.2009.090}{{\em SIGMA} {\bf 5} (2009)
  090},
\href{http://arxiv.org/abs/0802.2198}{{\tt arXiv:0802.2198 [hep-th]}}.
%%CITATION = ARXIV:0802.2198;%%.

\bibitem{haag}
R.~Haag, {\em {Local quantum physics: fields, particles, algebras}}.
\newblock Texts and monographs in physics. Springer-Verlag, 1992.

\bibitem{Keller:1990ej}
G.~Keller, C.~Kopper, and M.~Salmhofer, ``Perturbative renormalization and
  effective lagrangians in phi**4 in four-dimensions,'' {\em Helv.Phys.Acta}
  {\bf 65} (1992)  32--52.

\bibitem{Kopper:1997vg}
C.~Kopper, {\em {Renormierungstheorie mit Flussgleichungen}}.
\newblock Shaker, 1997.

\bibitem{Kopper:2000qm}
C.~Kopper, V.~F. Muller, and T.~Reisz, ``{Temperature independent
  renormalization of finite temperature field theory},'' {\em Annales Henri
  Poincare} {\bf 2} (2001)  387--402,
\href{http://arxiv.org/abs/hep-th/0003254}{{\tt arXiv:hep-th/0003254
  [hep-th]}}.
%%CITATION = HEP-TH/0003254;%%.

\bibitem{Keller:1992by}
G.~Keller and C.~Kopper, ``{Perturbative renormalization of composite operators
  via flow equations. 2. Short distance expansion},'' {\em Commun.Math.Phys.}
  {\bf 153} (1993)  245--276.

\bibitem{Keller:1991bz}
G.~Keller and C.~Kopper, ``{Perturbative renormalization of composite operators
  via flow equations. 1.},'' {\em Commun.Math.Phys.} {\bf 148} (1992)
  445--468.

\bibitem{Kopper:2009um}
C.~Kopper, ``{On the local Borel transform of Perturbation Theory},'' {\em
  Commun.Math.Phys.} {\bf 295} (2010)  669--699.

\bibitem{sansone}
G.~Sansone, {\em {Orthogonal Functions}}.
\newblock Pure and applied mathematics. Dover Publ., 1991.

\bibitem{Hollands:2009fc}
S.~Hollands and H.~Olbermann, ``{Perturbative Quantum Field Theory via Vertex
  Algebras},'' \href{http://dx.doi.org/10.1063/1.3255442}{{\em J.Math.Phys.}
  {\bf 50} (2009)  112304},
\href{http://arxiv.org/abs/0906.5313}{{\tt arXiv:0906.5313 [math-ph]}}.
%%CITATION = ARXIV:0906.5313;%%.

\bibitem{Hollands:2008wr}
S.~Hollands and R.~M. Wald, ``{Axiomatic quantum field theory in curved
  spacetime},'' \href{http://dx.doi.org/10.1007/s00220-009-0880-7}{{\em
  Commun.Math.Phys.} {\bf 293} (2010)  85--125},
\href{http://arxiv.org/abs/0803.2003}{{\tt arXiv:0803.2003 [gr-qc]}}.
%%CITATION = ARXIV:0803.2003;%%.


\end{thebibliography}

\end{document}